

\documentclass[journal,10pt]{IEEEtran}
\usepackage{pgfplots} 
\usepackage{comment}
\usepackage{amsfonts,enumitem}
\usepackage{textcomp}

\def\BibTeX{{\rm B\kern-.05em{\sc i\kern-.025em b}\kern-.08em
		T\kern-.1667em\lower.7ex\hbox{E}\kern-.125emX}}
\usepackage{amsmath,amssymb,multirow}
\usepackage{setspace}
\usepackage{listings}
\lstset{basicstyle=\singlespacing}
\usepackage{cite}
\usepackage{stackengine}
\usepackage{acronym}
\usepackage{mathtools}
\usepackage{subfigure}

\usepackage{algorithm}
\usepackage{algpseudocode}

\algdef{SE}[SUBALG]{Indent}{EndIndent}{}{\algorithmicend\ }%
\algtext*{Indent}
\algtext*{EndIndent}


\acrodef{RIS}{reconfigurable intelligent surface}
\acrodef{SNR}{signal-to-noise ratio}
\acrodef{ISAC}{integrated sensing and communication}
\acrodef{ISLAC}{integrated sensing, localization, and communication}
\acrodef{LoS}{line-of-sight}
\acrodef{AoA}{angle-of-arrival}
\acrodef{AoD}{angle-of-departure}
\acrodef{UE}{user equipment}
\acrodef{BS}{base station}
\acrodef{MCRB}{misspecified Cram\'{e}r-Rao bound}
\acrodef{CRB}{Cram\'{e}r-Rao bound}
\acrodef{LB}{lower bound}
\acrodef{MML}{mismatched maximum-likelihood}
\acrodef{ML}{maximum-likelihood}
\acrodef{MIMO}{multiple-input multiple-output}

\allowdisplaybreaks

\usepackage{amsthm}
\theoremstyle{plain}

\newtheorem{rem}{Remark}
\theoremstyle{plain}

\newtheorem{lemma}{Lemma}




\pgfplotsset{
	tick label style={font=\small},
	label style={font=\small},
	legend style={font=\small}
}

\usepackage{ellipsis}
\usetikzlibrary{calc}
\usetikzlibrary{decorations.pathreplacing,decorations.markings,shapes.geometric}
\tikzset{naming/.style={align=center,font=\small}}
\tikzset{antenna/.style={insert path={-- coordinate (ant#1) ++(0,0.25) -- +(135:0.25) + (0,0) -- +(45:0.25)}}}
\tikzset{station/.style={naming,draw,shape=dart,shape border rotate=90, minimum width=25mm, minimum height=25mm,outer sep=0pt,inner sep=3pt}}
\tikzset{mobile/.style={naming,shape=circle,draw, minimum width=4mm,minimum height=1mm, outer sep=0pt,inner sep=3pt}}
\tikzset{risblockage/.style={naming,fill = red!30, very thick, shape=rectangle,minimum width=5mm,minimum height=18mm, outer sep=0pt,inner sep=3pt}}
\tikzset{radiation/.style={{decorate,decoration={expanding waves,angle=90,segment length=4pt}}}}

\newcommand{\RISS}[1]{%
	\begin{tikzpicture}
		\begin{scope}[
			yshift=-85,every node/.append style={
				yslant-=0.1,xslant=-1},yslant=-0.5,xslant=-1
			]
			\draw [step = 0.25, ultra thick, draw=blue, fill=black!20!white] (0,0) grid  (2,2) rectangle (0,0);
		\end{scope}
	\end{tikzpicture}
}

\newcommand{\rev}[1]{\textcolor{black}{#1}} 

\DeclareMathOperator*{\argmax}{arg\,max}
\DeclareMathOperator*{\argmin}{arg\,min}
\newcommand\abs[1]{\left|#1\right|}
\newcommand\norm[1]{\left\lVert #1\right\rVert}
\newcommand{\bp}{\boldsymbol{p}}
\newcommand{\bk}{\boldsymbol{k}}
\newcommand{\bcc}{\boldsymbol{c}}
\newcommand{\by}{\boldsymbol{y}}
\newcommand{\bz}{\boldsymbol{z}}

\newcommand{\bx}{\boldsymbol{x}}
\newcommand{\bmu}{\boldsymbol{\mu}}
\newcommand{\tmu}{\tilde{{\mu}}}
\newcommand{\tbmu}{\tilde{\boldsymbol{\mu}}}
\newcommand{\balp}{{\boldsymbol {\alpha}}}
\newcommand{\bbalp}{\overline{\boldsymbol {\alpha}}}
\newcommand{\bbp}{\overline{\boldsymbol{p}}}
\newcommand{\bb}{\boldsymbol{b}}
\newcommand{\bw}{\boldsymbol{w}}
\newcommand{\bwtilde}{\tilde{\bw}}
\newcommand{\bomeg}{\boldsymbol{\omega}}

\newcommand{\bbet}{\overline{\bet}}
\newcommand{\bet}{\boldsymbol{\eta}}
\newcommand{\bA}{\boldsymbol{A}}
\newcommand{\bB}{\boldsymbol{B}}

\newcommand{\dmax}{d_{\rm{max}}}
\newcommand{\qmax}{q_{\rm{max}}}

\newcommand{\bJ}{\boldsymbol{J}}
\newcommand{\bep}{\boldsymbol{\epsilon}}
\newcommand{\rmx}{\mathrm{x}}
\newcommand{\rmy}{\mathrm{y}}
\newcommand{\rmz}{\mathrm{z}}
\newcommand{\ba}{\boldsymbol{a}}
\newcommand{\bu}{\boldsymbol{u}}
\newcommand{\bh}{\boldsymbol{h}}
\newcommand{\bbf}{\boldsymbol{f}}
\newcommand{\bg}{\boldsymbol{g}}

\newcommand{\bR}{\boldsymbol{R}}
\newcommand{\bS}{\boldsymbol{S}}

\newcommand{\btheta}{\boldsymbol{\theta}}

\newcommand{\bG}{\boldsymbol{G}}
\newcommand{\bX}{\boldsymbol{X}}
\newcommand{\tbQ}{{\boldsymbol{Q}}} 
\DeclareMathOperator{\Tr}{Tr}
\newcommand{\projrange}[1]{\boldsymbol{\Pi}_{#1}}
\newcommand{\projnull}[1]{\boldsymbol{\Pi}^{\perp}_{#1}}
\newcommand{\Imatrix}{{ \boldsymbol{\mathrm{I}} }}

\newcommand{\Tthr}{T_{\rm{thr}}}
\newcommand{\Imax}{I_{\rm{max}}}
\newcommand{\Jmax}{J_{\rm{max}}}
\newcommand{\kappamax}{\kappa_{\rm{max}}}

\newcommand{\yy}{\boldsymbol{y}}
\newcommand{\aaa}{\boldsymbol{a}}
\newcommand{\ww}{\boldsymbol{w}}
\newcommand{\zetab}{\boldsymbol{\zeta}}
\newcommand{\bW}{\boldsymbol{W}}
\newcommand{\Gammab}{\boldsymbol{\Gamma}}
\newcommand{\Thetab}{\boldsymbol{\Theta}}
\newcommand{\pp}{\boldsymbol{p}}
\newcommand{\nn}{\boldsymbol{n}}

\newcommand{\Gammabt}{\widetilde{\Gammab}}

\newcommand{\kappahat}{\widehat{\kappa}}
\newcommand{\pphat}{\widehat{\pp}}
\newcommand{\zetabhat}{\widehat{\zetab}}
\newcommand{\alphahat}{\widehat{\alpha}}
\newcommand{\phihat}{\widehat{\phi}}
\newcommand{\bUps}{\boldsymbol{\Upsilon}}

\newcommand{\hermit}{\mathsf{H}}
\newcommand{\trpose}{\mathsf{T}}

\newcommand{\bpris}{\bp_{\text{RIS}}}
\newcommand{\bpbs}{\bp_{\text{BS}}}

\newcommand{\complexset}[2]{ \mathbb{C}^{#1 \times #2}  }

\newcommand{\realset}[2]{ \mathbb{R}^{#1 \times #2}  }

\newcommand{\bbetamin}{\overline{\beta}_{{\rm{min}}}}

\newcommand{\bkappa}{\overline{\kappa}}
\newcommand{\bphi}{\overline{\phi}}

\newcommand{\betamin}{\beta_{{\rm{min}}}}
\newcommand{\betaminhat}{\widehat{\beta}_{{\rm{min}}}}

\newcommand{\boldone}{{ {\boldsymbol{1}} }}

\newcommand{\BSt}{\text{BS}}
\newcommand{\RISt}{\text{RIS}}

\newcommand{\mcrb}{{\rm{MCRB}}}
\newcommand{\crb}{{\rm{CRB}}}
\newcommand{\lb}{{\rm{LB}}}
\newcommand{\bias}{{\rm{Bias}}}

\newcommand{\ppbs}{\pp_{\BSt}}
\newcommand{\ppris}{\pp_{\RISt}}


\newcommand{\BS}[1]{%
	\begin{tikzpicture}
		\node[station] (base) {#1};
		
		\draw[line join=bevel] (base.100) -- (base.80) -- (base.110) -- (base.70) -- (base.north west) -- (base.north east);
		\draw[line join=bevel] (base.100) -- (base.70) (base.110) -- (base.north east);
		
		\draw[line cap=rect] ([yshift=0pt]base.north) [antenna=1];
	\end{tikzpicture}
}

\newdimen\bpt
\def\mobile#1{\leavevmode 
	\bpt=#1bp \hbox to7\bpt{\kern1\bpt \lower1\bpt\vbox to12\bpt{}%
		\pdfliteral{q #1 0 0 #1 0 0 cm 1 j 2 w 0 0 5 10 re B 
			1 g 1 G  1 w .3 1.8 4.4 7 re B 
			1.5 w 2.5 .2 0 .1 re B .3 w 1.7 10 1.6 0 re B Q}%
		\hss}}

\newcommand{\UEE}[1]{%
	\mobile{2}
}

\newcommand{\RIS}[1]{%
	\begin{tikzpicture}
		\draw[step=0.125,thick, draw = blue] (0,0) grid (1.5,1.5);
	\end{tikzpicture}
}



\begin{document}
\bstctlcite{IEEEexample:BSTcontrol}

\title{RIS-aided Near-Field Localization under Phase-Dependent Amplitude Variations}

 
 	\author{Cuneyd Ozturk,\thanks{C. Ozturk and S. Gezici are with the Department of Electrical and Electronics Engineering, Bilkent University, Ankara, 06800, Turkey, E-mails:	\{cuneyd,gezici\}@ee.bilkent.edu.tr} \emph{Student Member, IEEE}, Musa Furkan Keskin, \emph{Member, IEEE},
     \\Henk Wymeersch,\thanks{M. F. Keskin and H. Wymeersch are with the Department of Electrical Engineering, Chalmers University of Technology, Sweden, E-mails: \{furkan,henkw\}@chalmers.se} \emph{Senior Member, IEEE}, and Sinan Gezici, \emph{Senior Member, IEEE}
     \thanks{This work was supported, in part, by the EU H2020 RISE-6G project under grant 101017011 and by the MSCA-IF grant 888913 (OTFS-RADCOM).}
     \vspace{-0.5cm}}
 
	\maketitle
	\begin{abstract}
		We investigate the problem of reconfigurable intelligent surface (RIS)-aided near-field localization of a user equipment (UE) served by a base station (BS) under phase-dependent amplitude variations at each RIS element. Through a misspecified Cram\'{e}r-Rao bound (MCRB) analysis and a resulting lower bound (LB) on localization, we show that when the UE is unaware of amplitude variations (i.e., assumes unit-amplitude responses), severe performance penalties can arise, especially at high signal-to-noise ratios (SNRs). Leveraging Jacobi-Anger expansion to decouple range-azimuth-elevation dimensions, we develop a low-complexity approximated mismatched maximum likelihood (AMML) estimator, which is asymptotically tight to the LB. To mitigate performance loss due to model mismatch, we propose 
	to jointly estimate the UE location and the RIS amplitude model parameters. The corresponding Cram\'{e}r-Rao bound (CRB) is derived, as well as an iterative refinement algorithm, which employs the AMML method as a subroutine and alternatingly updates individual parameters of the RIS amplitude model. 
	Simulation results indicate fast convergence and performance close to the CRB. The proposed method can successfully recover the performance loss of the AMML under a wide range of RIS parameters and effectively calibrate the RIS amplitude model online with the help of a user that has an a-priori unknown location.

\end{abstract}
\begin{IEEEkeywords} Localization, reconfigurable intelligent surfaces, hardware impairments, misspecified Cram\'{e}r-Rao bound (MCRB), maximum likelihood estimator, Jacobi-Anger expansion. 
\end{IEEEkeywords}

\section{Introduction}\label{sec:Intro}
Among the envisioned technological enablers for 6G, three stand out as being truly disruptive: the transition from 30 GHz to beyond 100 GHz (the so-called higher mmWave and lower THz bands) \cite{saad2019vision,rappaport2019wireless,tataria20216g}, the convergence of communication, localization, and sensing (referred to as \ac{ISAC} or \ac{ISLAC}) \cite{chiriyath2017radar,liu2021integrated,de2021convergent,wymeersch2021integration,JCS_2021}, and the introduction of \acp{RIS} \cite{RIS_tutorial_2021,elzanaty20216g,wymeersch2020radio}. RISs are large passive metasurfaces, comprising arrays of programmable reflective unit cells, and have the ability to shape the propagation environment by judiciously adjusting the phase shifts at each reflecting element, thus locally boosting the \ac{SNR} to improve communication quality \cite{RIS_Access_2019,RIS_commag_2021,RIS_WCM_2021,RIS_EE_TWC_2019}. This is especially relevant in beyond 100 GHz to overcome sudden drops in rate caused by temporary blockage of the \ac{LoS} path \cite{RIS_WCM_2021,LOS_NLOS_NearField_2021}. In order to provide enhanced performance in downlink single- and multi-user systems, passive reflect beamforming at the RIS can be optimized, potentially together with active beamforming at the \ac{BS}, to maximize energy efficiency \cite{RIS_EE_TWC_2019,distRIS_EE_2022}, sum-rate \cite{hybridBF_RIS_JSAC_2020,DRL_RIS_JSAC_2020,RIS_sumrate_2020} and mutual information \cite{jointDecomp_TCOM_2021}, as well as to minimize total transmit power at the \ac{BS} \cite{JointActivePassiveRIS_TWC_2019,outage_RIS_TSP_2021}.

In parallel with the benefits for communications, \acp{RIS} can similarly improve localization performance \cite{RIS_2018_TSP}. Stronger even, \acp{RIS} with known position and orientation enjoy the ability to enable localization in scenarios where it would otherwise be impossible \cite{Keykhosravi2020_SisoRIS}. In this respect, the large aperture of the \ac{RIS} has several interesting properties. First of all, the \ac{SNR}-boosting provides accurate delay measurements when wideband signals are used \cite{RIS_bounds_TSP_2021,rappaport2019wireless}. Secondly, the large number of elements provides high resolution in \ac{AoA} (for uplink localization) or \ac{AoD} (for downlink localization) \cite{rappaport2019wireless}. Third, when the \ac{UE} is close to the \ac{RIS} (in the sense that the distance to the \ac{RIS} is of similar order as the physical size of the \ac{RIS}), wavefront curvature effects (so-called geometric near-field) can be harnessed to localize the user \cite{RIS_2018_TSP,Shaban2021,nearFieldRIS_LOSBlock_2022,EM_wavefront,rahal2021risenabled,LOS_NLOS_NearField_2021,nearfieldTrack_TSP_2021}, even when the \ac{LoS} path between the \ac{BS} and \ac{UE} is blocked, irrespective of whether wideband or narrowband signals are used. Moreover, closed-form RIS phase profile designs taking into account the spherical wavefront can be employed to improve localization accuracy under near-field conditions \cite{RIS_bounds_TSP_2021,rahalNearFieldProfileDesign_2022}. As a step further, joint benefits in \ac{ISLAC} applications can be reaped via RIS phase profile adjustment by simultaneous optimization of localization and communications metrics \cite{RIS_Location_Win_2022}.
	%


    	\begin{figure}
	\centering
	    \begin{tikzpicture}[every path/.append style={thick},auto]

            \matrix[column sep=.4cm,row sep=.75cm]
            {
            & \node(C) [label={[xshift = 0.6cm, yshift = 0.6cm]below:{\textbf{RIS}}}]{\RISS{}}; &   &  \\
            \node[every path/.append style={thick},inner sep=0pt](A){\BS{\textbf{BS}}}; & \node(D){ \includegraphics[width=.2\textwidth]{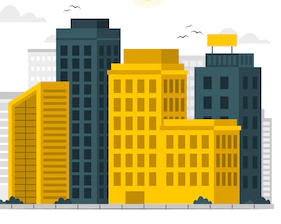}}; &   &\node[thick](B)[label={below:{\textbf{UE}}}]{\UEE{}}; &\\
            };

            \draw[thick,radiation,decoration={angle=45}] ([xshift = .1cm, yshift = .1cm]A.north) -- +(45:0.5);
        
             \draw[thick,radiation,decoration={angle=45}] ([xshift = -.3cm, yshift = .35cm]B.north)-- + (-45:0.5);
            
            \draw[-stealth, ultra thick, draw = red] (A.north)--(C.center);
            
             \draw[-stealth, ultra thick, draw = red] (C.center) --(B.north);
             
             \draw [-stealth, dashed, draw = red] (A.north) to  node [above, sloped] (TextNode1) {LoS Path} ([xshift = -1.2cm, yshift = -0.8cm]D.north);

          \draw[-stealth,ultra thick, draw = green] (0,2.5)  node[left, xshift = -1cm, text width=3cm, text centered] {
        \begin{minipage}{\textwidth}
            \begin{align*}
               w_{t,m} &= \beta(\theta_{t,m}) e^{j\theta_{t,m}}            \end{align*}
        \end{minipage}
    }to[out=15,in=45](-3.5,2.5);

        \end{tikzpicture}
        \caption{Configuration of a RIS-aided localization system with LoS blockage.}
         \label{fig:RISconfiguration}
	\end{figure}
	
	In ISLAC scenarios, critical to the effective utilization of \ac{RIS} is the control of individual \ac{RIS} elements, commonly through phase shifters, which provide element-by-element control with a certain resolution and allow coherent combination of  paths to/from the \ac{RIS} \cite{RIS_sumrate_2020,RIS_commag_2021}. For localization, in contrast to communication, the receiver should be equipped with the knowledge of the \ac{RIS} phase profiles to apply suitable high-accuracy processing methods \cite{wymeersch2020radio}. The ability to modulate the \ac{RIS} phase profiles brings additional benefits, such as separating the controlled and uncontrolled multipath through temporal coding \cite{keykhosravi2021multiris}. Hence, the ability to control the RIS in a precise and known manner is essential for \ac{ISLAC} applications, which necessitates the availability of accurate and simple \ac{RIS} phase control models. Such models should ideally account for the per-element response \cite{abeywickrama2020intelligent}, the finite quantization of the control \cite{RIS_phase_quantization_2021,RIS_commag_2021}, mutual coupling \cite{RIS_coupling_2021}, calibration effects, and power losses. Most studies on \ac{RIS} localization have considered ideal phase shifters (e.g. \cite{nearFieldRIS_LOSBlock_2022,Keykhosravi2020_SisoRIS,RIS_bounds_TSP_2021,RIS_loc_2021_TWC,Shaban2021,rahal2021risenabled,LOS_NLOS_NearField_2021}), omitting the listed impairments. How these proposed localization approaches fare under these impairments is both unknown and important. 
	
	In this manuscript, we investigate the problem of RIS-aided geometric near-field localization of a single-antenna \ac{UE} served by a single-antenna \ac{BS} under \ac{LoS} blockage \cite{nearFieldRIS_LOSBlock_2022,RIS_ANM_2021,rahal2021risenabled,LOS_NLOS_NearField_2021}, considering a realistic \ac{RIS} amplitude model \cite{abeywickrama2020intelligent}, which relies on equivalent circuit models of individual reflecting elements. Specifically, we quantify degradation in localization performance due to mismatch between an \textit{ideal model} with unit-amplitude RIS element responses and a \textit{realistic model} with phase-dependent amplitude variations \cite{abeywickrama2020intelligent,PDAV_TVT_2021}, by resorting to the \ac{MCRB} analysis \cite{Fortunati2017}. In addition, we develop novel localization and online RIS calibration algorithms for cases with and without the knowledge of the underlying RIS amplitude model. The main contributions and novelty of our manuscript can be summarized as follows:
	\begin{itemize}
	    \item \textbf{MCRB Analysis of Near-Field Localization under RIS Non-Idealities:} Employing the \ac{MCRB} \cite{Ricmond2015MCRB,Fortunati2017} as a tool to assess the accuracy loss under model mismatch, we provide a simple expression to find the \textit{pseudo-true parameter} for the considered scenario and derive the \ac{MCRB} of the pseudo-true parameter and the \ac{LB} of the true parameter. The MCRB analysis over a wide range of RIS model parameters reveals an order-of-magnitude localization performance degradation due to model misspecification at high \acp{SNR}, both in terms of the \ac{LB} and the \ac{MML} estimator \cite{Fortunati2017}. In contrast, when the true phase control model is available, localization performance is relatively stable, for all considered model parameter settings.  
	    \item \textbf{Low-Complexity Near-Field Localization via Jacobi-Anger Expansion:} Building upon the ideas in our recent conference papers \cite{cuneyd_ICC_RIS_2022,Shaban2021}, we develop a novel low-complexity near-field localization algorithm using Jacobi-Anger expansion, which enables decoupling of range, azimuth and elevation dimensions. The resulting algorithm, named approximated MML (AMML), avoids the costly 3-D search over the UE position by performing three 1-D searches and attains the corresponding theoretical limits.
	    \item \textbf{Joint Localization and Online RIS Calibration:} Under the assumption of known RIS amplitude model with unknown parameters, we propose an efficient approximate \ac{ML} (AML) algorithm for joint localization and online RIS calibration. The proposed approach iteratively updates the RIS model parameters based on an initial UE location estimate from the output of the model-unaware AMML method, and refines the UE location using the updated RIS model. The AML algorithm is shown to significantly outperform the AMML estimator at high SNRs (where degradation due to model mismatch is most evident), closing the performance gap with respect to the case with known model parameters, and converges quickly to the corresponding model-aware CRB in few iterations.
	\end{itemize}

\section{System Model}\label{sec:System}
	
In this section, we describe the system geometry, present the signal model and the RIS  model, and formulate the problem of interest.

\subsection{Geometry and Signal Model}\label{sec_sig_mod}
	
We consider an RIS-aided localization system (see Fig.~\ref{fig:RISconfiguration}) with a single-antenna BS, an $M$-element RIS, and a single-antenna UE having the following three-dimensional (3-D) locations: $\bp_{\text{BS}}$ denotes the known BS location, $\bp_{\text{RIS}}= [\rmx_{\text{R}}\, \rmy_{\text{R}}\, \rmz_{\text{R}}]^{\trpose}$ is the known center of the RIS, $\bp_{m} = [\rmx_m\, \rmy_m \, \rmz_m]^{\trpose}$ represents the known location of the $m$th RIS element for $1\leq m\leq M$, and $\bbp = [\overline{\rmx} ~ \overline{\rmy} ~ \overline{\rmz}]^{\trpose}$ is the unknown UE location. For convenience, following the notation given in \cite{Fortunati2017, Fortunati2018Chapter4}, we use $\overline{(\cdot)}$ for the true values of the parameters of interest throughout the manuscript. 
	
In the considered setting, the BS broadcasts a narrowband signal $s_t$ over $T$ transmissions under the constraint of $\mathbb{E}\{\abs{s_t}^2\} = E_s$. For simplicity, we assume that $s_t = \sqrt{E_s}$ for any $t$. Assuming \ac{LoS} blockage \cite{nearFieldRIS_LOSBlock_2022,LOS_NLOS_NearField_2021} and the absence of uncontrolled multipath\rev{\footnote{\label{fn_multipath}\rev{In RIS-aided communications and localization, the uncontrolled multipath can usually be ignored \cite{RIS_bounds_TSP_2021,zhang2020towards,RIS_Location_Win_2022} or treated as an extra additive disturbance \cite{RIS_loc_2021_TWC} due to the spatial filtering capability of the RIS with tunable phase shifts, which makes the BS-RIS-UE path much stronger than the uncontrollable paths.}}}, the signal received by the UE involves only reflections from the RIS and can be expressed at transmission $t$ as 
\begin{equation}\label{eq_yt}
		y_t  = \overline{\alpha}\underbrace{\ba^{\trpose}(\bbp) \text{diag}(\bw_t) \ba(\bp_{\text{BS}})}_{\triangleq \bb^{\trpose}(\bbp)\bw_t}s_t  + n_t\, ,
\end{equation}
where $\overline{\alpha}$ is the unknown channel gain \rev{including the effects of path loss, directivity of the RIS elements and of the antennas of the BS and UE \cite[Eq.~(3)]{LOS_NLOS_NearField_2021}, and polarization mismatch\footnote{\label{fn_global}\rev{Since the RIS aperture size is much smaller than the distances from the RIS to the BS and UE in the considered geometry, the channel gain appears as a global factor that has the same value for each RIS element \cite[Eq.~(3)]{EM_wavefront}.}} \cite[Ch.~1-11]{milligan2005modern}}, $\bw_t = [w_{t,1}\, \ldots\, w_{t,M}]^{\trpose}$ is the RIS profile at transmission $t$, and $n_t$ is uncorrelated zero-mean additive Gaussian noise with variance $N_0/2$ per real dimension. 
Moreover, 
\begin{align} \label{eq_bp}
    \bb(\bp) = \ba(\bp) \odot \ba(\bp_{\text{BS}}) ~,    
\end{align}
where \rev{$\odot$ represents the Hadamard (element-wise) product and} $\ba(\bp)$ is the near-field RIS steering vector for a given position $\bp$, defined as
	\begin{align} \label{eq_ap_nearfield}
			[\ba(\bp)]_{m} &= \exp\left(-j \frac{2\pi}{\lambda}\left(\norm{\bp-\bp_m}-\norm{\bp-\bp_{\text{RIS}}}\right)\right) 
		\end{align}
for $m\in\{1, \ldots, M\}$, in which $\lambda$ denotes the signal wavelength \rev{and the RIS center $\ppris$ is chosen as the reference location \cite{EM_wavefront}, \cite{nearfieldTrack_TSP_2021}. As the distance to the RIS becomes sufficiently large with respect to the RIS size, i.e., when $\norm{\bp-\bpris} \gg \norm{\bp_m - \bpris}$, the near-field steering vector in \eqref{eq_ap_nearfield} reverts to its standard far-field counterpart as follows \cite[Eq.~(9)]{nearfield_Friedlander_2019} (see Appendix~\ref{sec_app_farfield} for details): 
	\begin{equation}
	     [\ba(\bp)]_m \rev{\approx [\ba(\vartheta, \varphi)]_m} = \exp \left(-j (\pp_m \rev{- \ppris})^\trpose \boldsymbol{k}(\vartheta,\varphi)\right)~, \label{eq:CM1}
	\end{equation}
	where
	\begin{align} \label{eq_k_wavevector}
	    \bk(\vartheta,\varphi) = - \frac{2 \pi}{\lambda} [\sin \vartheta \cos \varphi  ~ \sin \vartheta \sin \varphi ~ \cos \vartheta  ]^{\trpose}
	\end{align}
	is the wavevector, $\vartheta\in[0, \, \pi/2]$ denote the elevation angle between the $z$-axis and $\bp - \rev{\bpris}$ (assuming the RIS orientation aligns with the $x$-$y$ plane \cite[Fig.~1]{Shaban2021}), and $\varphi\in [0, \, 2\pi]$ represent the azimuth angle between the projection of $\bp - \rev{\bpris}$ on the $x$-$y$ plane and the $x$-axis, measured counter-clockwise.}
	
	\rev{For the considered geometry, the UE and the BS are assumed to be located in the Fresnel (radiative) near-field region with respect to the RIS, which corresponds to the interval \cite{Fresnel_2011,Fresnel_2016,nearfieldTrack_TSP_2021}
    \begin{align} \label{eq_dmax}
        0.62 \sqrt{\frac{D^3}{\lambda}} \leq d \leq \frac{2 D^2}{\lambda} ~,
    \end{align}
    where $D$ is the RIS aperture size (i.e., the largest distance between any two RIS elements) and $d = \norm{\pp - \ppris}$ with $\pp$ denoting the UE or BS location. Operating in the near-field enables us to localize the UE under \ac{LoS} blockage by harnessing the \textit{spherical wavefront} of the impinging signal at the UE over the BS-RIS-UE path \cite{EM_wavefront,LOS_NLOS_NearField_2021,nearfieldTrack_TSP_2021}. As seen from the structure in \eqref{eq_ap_nearfield}, the unknown UE location $\bbp$ can be directly estimated from the received signal in \eqref{eq_yt} by exploiting the phase shifts across the RIS elements (i.e., $m = 1, \ldots, M$) over multiple observations (i.e., $t = 1, \ldots, T$). In far-field, it is not possible to localize the UE with a single RIS under \ac{LoS} blockage since the UE can only estimate the \ac{AoD} from the RIS (the far-field steering vector in \eqref{eq:CM1} depends only on the angles), which is not sufficient for localization. Note that the UE cannot extract any delay information in far-field in the considered setup due to narrowband transmission and asynchronism between the UE and the BS.}

	\subsection{Model for RIS Elements}\label{sec_model_ris}

	Following the practical model in \cite{abeywickrama2020intelligent}\rev{\footnote{\label{fn_practical}\rev{The RIS phase-amplitude model in \cite[Eq.~(5)]{abeywickrama2020intelligent} covers a wide range of semiconductor devices employed in RIS implementation, such as a positive-intrinsic-negative (PIN) diode and a variable capacitance diode, and agrees well with the experimental results in the literature \cite{abeywickrama2020intelligent,zhu2013active,RIS_Circuit_2020,RIS_Circuit_Gradient_2012}. It is possible to derive more accurate RIS phase-amplitude relations than in \cite[Eq.~(5)]{abeywickrama2020intelligent} using a generalized sheet transition conditions (GSTC) based model \cite{RIS_Circuit_Mag_2012} instead of an equivalent circuit model. However, the goal of this paper is not to derive a perfectly accurate model of RIS element responses for all types of RISs (which is not practically possible), but rather to quantify the sensitivity of RIS-aided localization to deviations from the naive unit-amplitude model commonly used in the literature \cite{nearFieldRIS_LOSBlock_2022,Keykhosravi2020_SisoRIS,RIS_bounds_TSP_2021,RIS_loc_2021_TWC,Shaban2021,rahal2021risenabled,LOS_NLOS_NearField_2021} by adopting an already available, fairly realistic model with a closed-form phase-amplitude relation as in  \cite[Eq.~(5)]{abeywickrama2020intelligent}. As will be discussed in Sec.~\ref{sec_mcrb_analysis}, the proposed MCRB analysis is generic and applicable to any RIS model with closed-form mapping between element phases and amplitudes.}}}, we consider \textit{phase-dependent amplitude variations} of the RIS elements given by 
	\begin{align}\label{eq_wtm_ris}
		w_{t,m} = \beta(\theta_{t,m}) e^{j\theta_{t,m}},
	\end{align}
	with $\theta_{t,m}\in [-\pi, \pi)$ and $\beta(\theta_{t,m})\in[0,1]$ denoting the phase shift\rev{\footnote{\label{fn_phase}\rev{Given the near-optimal performance of RIS-aided systems using phase shifters with few quantization bits \cite{hybridBF_RIS_JSAC_2020,RIS_phase_quantization_2021,uplink_PN_WCL_2021} and the availability of such low-resolution RIS phase shifters in practice \cite{RIS_Circuit_2020,RIS_phase_quantization_2021}, we assume continuous phase shifts in our model.}}} and the corresponding amplitude, respectively. In particular, $\beta(\theta_{t,m})$ is expressed as
	\begin{equation}\label{eq_beta_model}
		\beta(\theta_{t,m}) = (1-\bbetamin)\left(\frac{\sin(\theta_{t,m}-\bphi) + 1}{2}\right)^{\bkappa} + \bbetamin,
	\end{equation}
	where $\bbetamin\geq 0$, $\bphi\geq 0$, and $\bkappa \geq 0$ are the constants related to the specific circuit implementation \cite{abeywickrama2020intelligent}. To illustrate amplitude variations in \eqref{eq_beta_model}, Fig.~\ref{fig:1b} plots $	\beta(\theta)$ as a function of the applied phase shift $\theta$ for various values of $\bbetamin$ when $\bkappa = 1.5$ and $\bphi = 0$. As seen from the figure, larger amplitude fluctuations occur as $\bbetamin$ approaches $0$. The resulting performance penalties in location estimation will be quantified through the \ac{MCRB} analysis in Section~\ref{sec_mcrb_analysis}.

		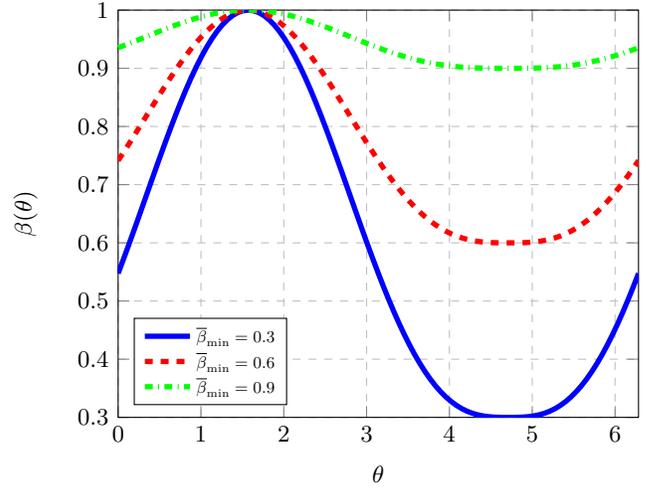
\begin{figure}
		\centering
		\begin{tikzpicture}
			\begin{axis}[
				width=8.5cm,
                height=7cm,
				legend style={nodes={scale= 0.7, transform shape},at={(1,1)},anchor=north east}, 
				legend cell align={left},
				legend image post style={mark indices={}},
				xlabel={$\theta$},
				ylabel={$\beta(\theta)$},
				xmin=0, xmax=2*pi,
				ymin=0.3, ymax=1,
				xtick={0, 1, 2, 3, 4, 5, 6},
				ytick={0.3,0.4,0.5, 0.6, 0.7, 0.8, 0.9, 1},
				legend pos=south west,
				ymajorgrids=true,
				xmajorgrids=true,
				grid style=dashed,
				]
				
				\addplot[thick,
				color=blue,
				line width = 2,
				mark = ,
				mark indices={    1,5,9,13,17,21,25,29,33,37,41,45,49,53,57,61,65,69,73,77,81,85,89,93,97,101},
				mark options={solid},			
				mark size = 4pt,
				]
				coordinates {
               (0,0.54749)(0.062832,0.57116)(0.12566,0.59544)(0.1885,0.62021)(0.25133,0.64533)(0.31416,0.67066)(0.37699,0.69604)(0.43982,0.72134)(0.50265,0.74639)(0.56549,0.77105)(0.62832,0.79515)(0.69115,0.81855)(0.75398,0.8411)(0.81681,0.86264)(0.87965,0.88304)(0.94248,0.90217)(1.0053,0.91988)(1.0681,0.93608)(1.131,0.95063)(1.1938,0.96346)(1.2566,0.97446)(1.3195,0.98357)(1.3823,0.99072)(1.4451,0.99586)(1.508,0.99896)(1.5708,1)(1.6336,0.99896)(1.6965,0.99586)(1.7593,0.99072)(1.8221,0.98357)(1.885,0.97446)(1.9478,0.96346)(2.0106,0.95063)(2.0735,0.93608)(2.1363,0.91988)(2.1991,0.90217)(2.2619,0.88304)(2.3248,0.86264)(2.3876,0.8411)(2.4504,0.81855)(2.5133,0.79515)(2.5761,0.77105)(2.6389,0.74639)(2.7018,0.72134)(2.7646,0.69604)(2.8274,0.67066)(2.8903,0.64533)(2.9531,0.62021)(3.0159,0.59544)(3.0788,0.57116)(3.1416,0.54749)(3.2044,0.52455)(3.2673,0.50245)(3.3301,0.48129)(3.3929,0.46117)(3.4558,0.44215)(3.5186,0.42431)(3.5814,0.40769)(3.6442,0.39233)(3.7071,0.37827)(3.7699,0.3655)(3.8327,0.35403)(3.8956,0.34385)(3.9584,0.33492)(4.0212,0.32721)(4.0841,0.32066)(4.1469,0.3152)(4.2097,0.31077)(4.2726,0.30727)(4.3354,0.30461)(4.3982,0.30268)(4.4611,0.30138)(4.5239,0.30058)(4.5867,0.30017)(4.6496,0.30002)(4.7124,0.3)(4.7752,0.30002)(4.8381,0.30017)(4.9009,0.30058)(4.9637,0.30138)(5.0265,0.30268)(5.0894,0.30461)(5.1522,0.30727)(5.215,0.31077)(5.2779,0.3152)(5.3407,0.32066)(5.4035,0.32721)(5.4664,0.33492)(5.5292,0.34385)(5.592,0.35403)(5.6549,0.3655)(5.7177,0.37827)(5.7805,0.39233)(5.8434,0.40769)(5.9062,0.42431)(5.969,0.44215)(6.0319,0.46117)(6.0947,0.48129)(6.1575,0.50245)(6.2204,0.52455)(6.2832,0.54749)
				};

				\addplot[dashed,
        		color=red,
        		line width = 2pt,
				mark = ,
				mark indices={    1,5,9,13,17,21,25,29,33,37,41,45,49,53,57,61,65,69,73,77,81,85,89,93,97,101},
				mark options={solid},			
				mark size = 3pt,
				]
				coordinates {
                 (0,0.74142)(0.062832,0.75495)(0.12566,0.76882)(0.1885,0.78298)(0.25133,0.79733)(0.31416,0.8118)(0.37699,0.82631)(0.43982,0.84077)(0.50265,0.85508)(0.56549,0.86917)(0.62832,0.88295)(0.69115,0.89632)(0.75398,0.9092)(0.81681,0.92151)(0.87965,0.93317)(0.94248,0.9441)(1.0053,0.95422)(1.0681,0.96347)(1.131,0.97179)(1.1938,0.97912)(1.2566,0.98541)(1.3195,0.99061)(1.3823,0.9947)(1.4451,0.99764)(1.508,0.99941)(1.5708,1)(1.6336,0.99941)(1.6965,0.99764)(1.7593,0.9947)(1.8221,0.99061)(1.885,0.98541)(1.9478,0.97912)(2.0106,0.97179)(2.0735,0.96347)(2.1363,0.95422)(2.1991,0.9441)(2.2619,0.93317)(2.3248,0.92151)(2.3876,0.9092)(2.4504,0.89632)(2.5133,0.88295)(2.5761,0.86917)(2.6389,0.85508)(2.7018,0.84077)(2.7646,0.82631)(2.8274,0.8118)(2.8903,0.79733)(2.9531,0.78298)(3.0159,0.76882)(3.0788,0.75495)(3.1416,0.74142)(3.2044,0.72831)(3.2673,0.71569)(3.3301,0.7036)(3.3929,0.6921)(3.4558,0.68123)(3.5186,0.67103)(3.5814,0.66154)(3.6442,0.65276)(3.7071,0.64472)(3.7699,0.63743)(3.8327,0.63088)(3.8956,0.62506)(3.9584,0.61995)(4.0212,0.61555)(4.0841,0.6118)(4.1469,0.60869)(4.2097,0.60615)(4.2726,0.60415)(4.3354,0.60263)(4.3982,0.60153)(4.4611,0.60079)(4.5239,0.60033)(4.5867,0.6001)(4.6496,0.60001)(4.7124,0.6)(4.7752,0.60001)(4.8381,0.6001)(4.9009,0.60033)(4.9637,0.60079)(5.0265,0.60153)(5.0894,0.60263)(5.1522,0.60415)(5.215,0.60615)(5.2779,0.60869)(5.3407,0.6118)(5.4035,0.61555)(5.4664,0.61995)(5.5292,0.62506)(5.592,0.63088)(5.6549,0.63743)(5.7177,0.64472)(5.7805,0.65276)(5.8434,0.66154)(5.9062,0.67103)(5.969,0.68123)(6.0319,0.6921)(6.0947,0.7036)(6.1575,0.71569)(6.2204,0.72831)(6.2832,0.74142)
				};

        	\addplot[dashdotted,
				color=green,
				line width = 2pt,
				mark = ,
				mark indices={    1,5,9,13,17,21,25,29,33,37,41,45,49,53,57,61,65,69,73,77,81,85,89,93,97,101},
				mark options={solid},			
				mark size = 3pt,
				]
				coordinates {
                 (0,0.93536)(0.062832,0.93874)(0.12566,0.94221)(0.1885,0.94574)(0.25133,0.94933)(0.31416,0.95295)(0.37699,0.95658)(0.43982,0.96019)(0.50265,0.96377)(0.56549,0.96729)(0.62832,0.97074)(0.69115,0.97408)(0.75398,0.9773)(0.81681,0.98038)(0.87965,0.98329)(0.94248,0.98602)(1.0053,0.98855)(1.0681,0.99087)(1.131,0.99295)(1.1938,0.99478)(1.2566,0.99635)(1.3195,0.99765)(1.3823,0.99867)(1.4451,0.99941)(1.508,0.99985)(1.5708,1)(1.6336,0.99985)(1.6965,0.99941)(1.7593,0.99867)(1.8221,0.99765)(1.885,0.99635)(1.9478,0.99478)(2.0106,0.99295)(2.0735,0.99087)(2.1363,0.98855)(2.1991,0.98602)(2.2619,0.98329)(2.3248,0.98038)(2.3876,0.9773)(2.4504,0.97408)(2.5133,0.97074)(2.5761,0.96729)(2.6389,0.96377)(2.7018,0.96019)(2.7646,0.95658)(2.8274,0.95295)(2.8903,0.94933)(2.9531,0.94574)(3.0159,0.94221)(3.0788,0.93874)(3.1416,0.93536)(3.2044,0.93208)(3.2673,0.92892)(3.3301,0.9259)(3.3929,0.92302)(3.4558,0.92031)(3.5186,0.91776)(3.5814,0.91538)(3.6442,0.91319)(3.7071,0.91118)(3.7699,0.90936)(3.8327,0.90772)(3.8956,0.90626)(3.9584,0.90499)(4.0212,0.90389)(4.0841,0.90295)(4.1469,0.90217)(4.2097,0.90154)(4.2726,0.90104)(4.3354,0.90066)(4.3982,0.90038)(4.4611,0.9002)(4.5239,0.90008)(4.5867,0.90002)(4.6496,0.9)(4.7124,0.9)(4.7752,0.9)(4.8381,0.90002)(4.9009,0.90008)(4.9637,0.9002)(5.0265,0.90038)(5.0894,0.90066)(5.1522,0.90104)(5.215,0.90154)(5.2779,0.90217)(5.3407,0.90295)(5.4035,0.90389)(5.4664,0.90499)(5.5292,0.90626)(5.592,0.90772)(5.6549,0.90936)(5.7177,0.91118)(5.7805,0.91319)(5.8434,0.91538)(5.9062,0.91776)(5.969,0.92031)(6.0319,0.92302)(6.0947,0.9259)(6.1575,0.92892)(6.2204,0.93208)(6.2832,0.93536)
				};

				\legend{$\bbetamin = 0.3$, $\bbetamin = 0.6$, $\bbetamin = 0.9$ }
				
			\end{axis}
		\end{tikzpicture}
		\caption{$\beta(\theta)$ in \eqref{eq_beta_model} for three different values of $\bbetamin$, when $\bkappa = 1.5$ and $\bphi  =0$.}
		\label{fig:1b}
	\end{figure}
	
	\subsection{Problem Description}\label{sec_prob_des}
		Given the observations $y_t$ in \eqref{eq_yt} over $T$ transmission instances, our goal  is to derive theoretical performance bounds and develop low-complexity algorithms for estimating the position of the UE $\bbp$ (and the channel gain $\bbalp$ as an unknown nuisance parameter) under three different scenarios:
		\begin{itemize}
		    \item \textit{Scenario-I}: There is a mismatch between the assumption and the reality in this scenario. It is assumed that the amplitudes of the RIS elements are equal to $1$ (which is equivalent to assuming $\bbetamin=1$ or $\overline{\kappa} = 0$); however, the \textit{true model} is as in \eqref{eq_wtm_ris}.
		    \item \textit{Scenario-II}: The  \textit{true model} in \eqref{eq_wtm_ris} is known, but the RIS related parameters, $\bbetamin, \overline{\kappa},$ and $\overline{\phi}$, are assumed to be unknown.
		    \item \textit{Scenario-III}: Both the  \textit{true model} in \eqref{eq_wtm_ris} and the RIS related parameters, $\bbetamin, \overline{\kappa},$ and $\overline{\phi}$, are known. 
		\end{itemize}
		\rev{Note that the amplitude variations occur in the RIS elements through the RIS control model $w_{t,m}$ in \eqref{eq_wtm_ris}, while the near-field effects manifest itself through the steering vector $\ba(\bp)$ in \eqref{eq_ap_nearfield}. Hence, the considered scenarios differ in their knowledge about $w_{t,m}$, but they all consider the same steering vector $\ba(\bp)$.}
        
        \rev{The motivation for considering Scenario~I is to determine the conditions (regarding, e.g., SNR, RIS size, RIS model parameters $\bbetamin, \overline{\kappa},$ and $\overline{\phi}$) under which the conventional unit-amplitude model can be employed for localization without significant performance degradation, when the true model is given by \eqref{eq_wtm_ris}. On the other hand, we consider Scenario~II to investigate how the effects of RIS amplitude variations on accuracy can be counteracted by designing powerful algorithms for localization and RIS model calibration, when the UE is aware of the true model in \eqref{eq_wtm_ris}. Scenario~III is mainly used for benchmarking purposes, i.e., to provide an upper bound on the performance of the algorithms developed under Scenario~II.}
		
		To handle the different scenarios, the remainder of the manuscript is organized as follows. In Section~\ref{sec_mcrb_analysis}, the \ac{MCRB} analysis of near-field localization under Scenario~I is performed while Section~\ref{sec_mml_scenario1} focuses on the estimator design for Scenario~I. Then, the localization algorithms and the theoretical bounds for Scenario~II and Scenario~III are presented in Section~\ref{sec_scen23}. Finally, numerical examples for all the three scenarios are provided in Section~\ref{sec:Nume}.
	
\section{Scenario-I: Misspecified Cram\'{e}r-Rao Bound (MCRB)  Analysis}\label{sec_mcrb_analysis}	
		
In this scenario, we aim to quantify the localization performance loss due to the model mismatch resulting from the phase-dependent amplitude variations specified in \eqref{eq_wtm_ris}. To that end, we will resort to the MCRB analysis \cite{Fortunati2017, Fortunati2018Chapter4,MCRB_TSP_2021,MCRB_delay_ICASSP_2020}. In the following, we first describe the \textit{true model}, which corresponds to the realistic RIS response model in \eqref{eq_wtm_ris}, and the \textit{assumed model}, which is the ideal unit-amplitude RIS model commonly employed in the literature. Then, we provide theoretical background on the \ac{MCRB}, propose a method to find the \textit{pseudo-true parameter}, and derive the \ac{MCRB} and the corresponding \ac{LB}.
	 
\subsection{True and Assumed RIS Amplitude Models} 
\subsubsection{True Model}
The true parameter vector $\bbet$ is given by $\bbet \triangleq [\text{Re} (\overline{\alpha}) \,\text{Im} (\overline{\alpha}) \, \overline{\bp}^{\trpose}]^\trpose $.
	We  define $\mu_t$  as 
	\begin{gather}\label{eq:mut}
	\mu_t \triangleq \overline{\alpha} \sum_{m=1}^{M} [\bb(\bbp)]_m w_{t,m} s_t 
	\end{gather}
	for $t = 1, \ldots, T$. Then, the probability density function (pdf) of the true observation, $p(\by)$,  can be expressed as
	\begin{equation}\label{eq:truedpf}
	p(\by)  = \left(\frac{1}{\pi N_0}\right)^{T} \exp\left(-\frac{1}{N_0} \norm{\by-\bmu}^2\right),
	\end{equation}
	where $\by \triangleq [y_1\, \ldots \,y_T]^{\trpose}$ and $\bmu\triangleq [\mu_1\, \ldots \, \mu_T]^{\trpose}$. 
	
	\subsubsection{Assumed Model}
	
In practice, the knowledge of the exact amplitude model in \eqref{eq_beta_model}, which is hardware dependent, may not be available. In that case, the ideal model of $\beta(\theta_{t,m})=1$  can be used. For this \textit{assumed model}, we represent $w_{t,m}$ as $\tilde{w}_{t,m}$, which is given by 
	\begin{align}\label{eq_wt_assumed}
	\tilde{w}_{t,m} = e^{j\theta_{t,m}}
	\end{align}
	for any $t$ and $m$. 
	Therefore, the misspecified parametric pdf for $\bet \triangleq [\text{Re}(\alpha)\, \text{Im}(\alpha)\, \bp^{\trpose}]^{\trpose}$ under the assumed model, denoted by $\tilde{p}(\by|\bet)$, can be expressed as 
	\begin{equation}\label{eq:assumedpdf}
	\tilde{p}(\by|\bet)  = \left(\frac{1}{\pi N_0}\right)^{T} \exp\left(-\frac{1}{N_0} \norm{\by-\tilde{\bmu}(\bet)}^2\right),
	\end{equation}
	where $\tbmu(\bet) \triangleq [\tmu_1(\bet)\, \ldots \, \tmu_T(\bet)]^{\trpose}$, and 
	\begin{gather}\label{eq:mut_til}
	\tmu_t(\bet) \triangleq  \alpha \sum_{m=1}^{M} [\bb(\bp)]_m \tilde{w}_{t,m} s_t    
	\end{gather}
	for $t = 1, \ldots, T$. It is noted that when $\overline{\beta}_{\text{min}} = 1$ or $\overline{\kappa} = 0$,  $p(\by)$ and $\tilde{p}(\by| \bbet)$ coincide with each other for any $\by$.
	
	\subsection{MCRB Definition}
	We introduce the pseudo-true parameter \cite{Fortunati2017}, which minimizes the Kullback-Leibler (KL) divergence between the true pdf in \eqref{eq:truedpf} and the misspecified parametric pdf in \eqref{eq:assumedpdf}; namely, 
	\begin{equation}\label{eq:eta0}
		\bet_0 = \argmin_{\bet\in\mathbb{R}^{5}} D\left(p(\by)\Vert \tilde{p}(\by| \bet)\right),
	\end{equation}
	where $ D\left(p(\by)\Vert \tilde{p}(\by| \bet)\right)$ denotes the KL divergence between the densities $p(\by)$ and $\tilde{p}(\by|\bet)$. 

	Next, let  $\hat{\bet}(\by)$ be a misspecified-unbiased (MS-unbiased) estimator of $\bbet$, i.e., the mean of the estimator $\hat{\bet}(\by)$ under the true model is equal to $\bet_0$. The \ac{MCRB} is a lower bound for the covariance matrix of any MS-unbiased estimator of $\bbet$, $\hat{\bet}(\by)$ \cite{Fortunati2017, Fortunati2018Chapter4, Ricmond2015MCRB}:
	\begin{align}
		&\mathbb{E}_p\{(\hat{\bet}(\by)-\bet_0)(\hat{\bet}(\by)-\bet_0)^{\trpose}\} \succeq  \mcrb(\bet_0),
		\label{eq:MCRB}
	\end{align}
	where $\mathbb{E}_p\{\cdot\}$ denotes the expectation operator under the true model $p(\by)$ and 
	\begin{align} \label{eq_mcrb_def}
		\mcrb(\bet_0) \triangleq \bA_{\bet_0}^{-1} \bB_{\bet_0}\bA_{\bet_0}^{-1},
	\end{align}
	in which the $(i,j)$-th elements of the matrices $\bA_{\bet_0}$ and  $\bB_{\bet_0}$ are calculated as 
	\begin{align}\label{eq:Aeta0}
		[\bA_{\bet_0}]_{ij} &= \mathbb{E}_p\left\{\frac{\partial^2}{\partial \eta_i \partial \eta_j} \log \tilde{p}(\by|\bet)\Big|_{\bet = \bet_0}\right\}, \\\label{eq:Beta0}
		[\bB_{\bet_0}]_{ij} &= \mathbb{E}_p\left\{\frac{\partial}{\partial \eta_i } \log \tilde{p}(\by|\bet) \frac{\partial}{\partial \eta_j } \log \tilde{p}(\by|\bet)\Big|_{\bet =  \bet_0}\right\},
	\end{align} 
	for $1\leq i, j\leq 5$, with $\eta_i$ denoting the $i$th element of $\bet$.  
	
	Since the value of the pseudo-true parameter is generally not of interest, the \ac{MCRB} is used to establish the \ac{LB} of any MS-unbiased estimator with respect to the true parameter value \cite{Fortunati2017}
	\begin{align}
		\mathbb{E}_p\{(\hat{\bet}(\by)-\bbet)(\hat{\bet}(\by)-\bbet)^{\trpose}\} \succeq \lb(\bet_0),\label{eq:LB}
	\end{align}
	where 
	\begin{align}\label{eq_lb}
	    \lb(\bet_0)\triangleq \mcrb(\bet_0)  + (\bbet-\bet_0)(\bbet-\bet_0)^{\trpose}~.
	\end{align}
	The last term \rev{in \eqref{eq_lb}} is a bias term; that is, $\bias(\bet_0) \triangleq (\bbet-\bet_0)(\bbet-\bet_0)^{\trpose}$, and it is independent of the SNR. Hence, as the SNR tends to infinity, the MCRB term goes to zero, and the bias term becomes a tight bound for the MSE of any MS-unbiased estimator.

	\subsection{MCRB Derivation for RIS-aided Localization}\label{sec:MCRBder}

	\subsubsection{Determining the Pseudo-True Parameter}\label{sec_pseudo}
	To derive the MCRB for estimating the UE position under mismatch between the amplitude models for the RIS elements, we should first calculate the $\bet_0$ parameter in \eqref{eq:eta0} for the system model described in Section~\ref{sec:System}; that is, we should find the value of $\bet$ that minimizes the KL divergence between $p(\by)$ in \eqref{eq:truedpf} and $\tilde{p}(\by|\bet)$ in \eqref{eq:assumedpdf}. The following lemma characterizes $\bet_0$ for the considered system model.
	\begin{lemma}~\label{lemma_pseudo}
	The value of $\bet$ that minimizes the KL divergence between $p(\by)$ in \eqref{eq:truedpf} and $\tilde{p}(\by|\bet)$ in \eqref{eq:assumedpdf}
	can be expressed as
	\begin{equation}
		\bet_0 = \argmin_{\bet\in\mathbb{R}^5} \norm{\bep(\bet)} \label{eq_betz}
	\end{equation}
	where  $\bep(\bet) \triangleq [\epsilon_1(\bet)\, \ldots \, \epsilon_T(\bet)]^{\trpose}$ and $\epsilon_t(\bet) \triangleq \mu_t-\tmu_t(\bet)$ for $t=1,\ldots,T$. 
	\end{lemma}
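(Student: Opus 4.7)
The plan is to expand the KL divergence in closed form and show that all terms not involving $\bet$ are constants, leaving only the squared error norm $\|\bep(\bet)\|^2$ as the quantity to be minimized.

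First, I would write
\begin{equation}
D\bigl(p(\by)\,\|\,\tilde{p}(\by|\bet)\bigr) = \mathbb{E}_{p}\!\left\{\log p(\by)\right\} - \mathbb{E}_{p}\!\left\{\log \tilde{p}(\by|\bet)\right\},
\end{equation}
and observe that the first term is independent of $\bet$, so the minimization in \eqref{eq:eta0} reduces to maximizing $\mathbb{E}_{p}\{\log \tilde{p}(\by|\bet)\}$ over $\bet$. Substituting the Gaussian form in \eqref{eq:assumedpdf} yields
\begin{equation}
\log \tilde{p}(\by|\bet) = -T\log(\pi N_0) - \frac{1}{N_0}\,\|\by - \tbmu(\bet)\|^2,
\end{equation}
so the only $\bet$-dependent piece is $-\frac{1}{N_0}\mathbb{E}_{p}\{\|\by - \tbmu(\bet)\|^2\}$.

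Next, under the true model \eqref{eq:truedpf} we may write $\by = \bmu + \bn$ with $\bn$ zero-mean circularly symmetric complex Gaussian noise with covariance $N_0 \Imatrix_T$. Writing $\by - \tbmu(\bet) = \bep(\bet) + \bn$ with $\bep(\bet) \triangleq \bmu - \tbmu(\bet)$ as in the statement, and using the fact that $\bn$ is zero-mean and independent of the deterministic quantity $\bep(\bet)$, I would expand
\begin{equation}
\mathbb{E}_{p}\bigl\{\|\by - \tbmu(\bet)\|^2\bigr\} = \|\bep(\bet)\|^2 + 2\,\Re\bigl\{\bep^{\hermit}(\bet)\,\mathbb{E}_p\{\bn\}\bigr\} + \mathbb{E}_p\{\|\bn\|^2\} = \|\bep(\bet)\|^2 + T N_0,
\end{equation}
where the last term $T N_0$ does not depend on $\bet$.

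Combining these observations, maximizing the expected log-likelihood over $\bet$ is equivalent to minimizing $\|\bep(\bet)\|^2$, and since the square root is monotone on the nonnegative reals, this is in turn equivalent to minimizing $\|\bep(\bet)\|$. This gives \eqref{eq_betz}. The argument is essentially mechanical once the KL divergence is expanded; the only subtlety is recognizing that the noise-dependent cross term vanishes in expectation and that the $\|\bn\|^2$ term is a constant offset, so the proof is short and no real obstacle is anticipated.
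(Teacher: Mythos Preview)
Your proposal is correct and follows essentially the same approach as the paper's proof: both drop the $\bet$-independent part of the KL divergence, substitute the Gaussian form of $\tilde{p}(\by|\bet)$, and then expand $\mathbb{E}_p\{\|\by-\tbmu(\bet)\|^2\}$ using $\by=\bmu+\bn$ so that the zero-mean cross term vanishes and only $\|\bep(\bet)\|^2$ plus a constant remains. The only cosmetic difference is that the paper carries out the expansion component-by-component in $t$, whereas you do it directly in vector form.
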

	\begin{proof}
		See Appendix~\ref{sec:AppA}.
	\end{proof}
	\rev{
	\begin{rem}[Applicability of Lemma~\ref{lemma_pseudo} to Different RIS Models]\label{rem_lemma1}
	The proof of Lemma~\ref{lemma_pseudo} does not exploit the specific functional form of the adopted (true) RIS model in \eqref{eq_beta_model} and thus can be applied to different RIS models. Hence, Lemma~\ref{lemma_pseudo} can be used to compute the MCRB and LB in \eqref{eq_lb} for any choice of the RIS element model $\beta(\theta_{t,m})$ in \eqref{eq_beta_model} by simply plugging the corresponding $w_{t,m}$ in \eqref{eq_wtm_ris} into the true model expressions in \eqref{eq:mut} and \eqref{eq:truedpf}.
	\end{rem}
	}
    Lemma~\ref{lemma_pseudo} states that the pseudo-true parameter minimizes the Euclidean distance between the noise-free observations under the true and assumed models.
	
	Let $\gamma(\bet)\triangleq \norm{\bep(\bet)} = \norm{\bmu-\tbmu(\bet)}$. It is noted from \eqref{eq:mut} and \eqref{eq:mut_til} that $\gamma(\bet)$ is non-convex with respect to $\bet$; hence, it is challenging to solve \eqref{eq_betz} in its current form. Based on \eqref{eq:assumedpdf} and \eqref{eq:mut_til}, we can re-write \eqref{eq_betz} as
	\begin{align}\label{eq_etabar2}
		(\balp_0, \bp_0) = \argmin_{(\balp, \bp)} \norm{\bmu- \alpha \, \bcc(\bp) } ,
	\end{align}
	where $[\bcc(\bp)]_t \triangleq \sum_{m=1}^{M} [\bb(\bp)]_m \tilde{w}_{t,m} s_t$.
	The optimal complex-valued $\alpha$ for any given $\bp$ can be expressed 
	as 
	\begin{gather}\label{eq:alpOpt}
		\alpha = \rev{\bcc(\bp)^{\dagger} \bmu} ~,
	\end{gather}
	\rev{where $\bX^{\dagger} = (\bX^\hermit \bX)^{-1} \bX^\hermit$ denotes pseudo-inverse.}
	Inserting \eqref{eq:alpOpt} into \eqref{eq_etabar2}, the problem can be reduced to a 3-D search as follows:
	\begin{gather}\label{eq_etabar3}
		\bp_0 = \rev{\argmin_{\bp} \norm{\projnull{\bcc(\bp)} \bmu }~,}
	\end{gather}
	\rev{where $\projnull{\bX} = \Imatrix - \projrange{\bX}$ and $\projrange{\bX} = \bX \bX^{\dagger}$ denotes the orthogonal projection matrix onto the column space of $\bX$.}
	Therefore, $\bet_0=[\balp_0^{\trpose}~\bp_0^{\trpose}]^{\trpose}$ can be found by first performing a 3-D optimization as in \eqref{eq_etabar3}, and then calculating $\alpha_0$ via \eqref{eq:alpOpt} and obtaining $\balp_0$ as $\balp_0 = [\text{Re}(\alpha_0)~ \text{Im}(\alpha_0)]^{\trpose}$.
	
	\begin{rem}[\rev{Initialization of \eqref{eq_etabar3}}]
	In order to determine the pseudo-true parameter or equivalently to find $\bp_0$ in \eqref{eq_etabar3}, the true location $\bbp$ can be used for initialization, which reduces the computational complexity of MCRB calculation significantly.
	\end{rem}
	
	\rev{\begin{rem}[Special Cases of MCRB] \label{rem:MCRB}
	As $\bbetamin$ approaches $1$, or $\bkappa$ approaches $0$, the RIS amplitude $\beta(\theta_{t,m})$ in \eqref{eq_beta_model} converges to $1$, implying that the phase-dependent amplitude variations model (true model) in \eqref{eq_wtm_ris} converges to the standard unit-amplitude model (assumed model) in \eqref{eq_wt_assumed}. In that case, the mismatch between the reality and the assumed model vanishes, which suggests that the pseudo-true parameter $\bet_0$ in \eqref{eq_betz} approaches the true parameter $\bbet$. As a result, the bias term in \eqref{eq_lb} disappears and the \ac{MCRB} in \eqref{eq_mcrb_def} reverts to the classical \ac{CRB} expression since $\bA_{\bet_0} = \bA_{\bbet} = -\bB_{\bet_0} = -\bB_{\bbet}$ \cite{Fortunati2018Chapter4}.
	\end{rem}}

	\subsubsection{Deriving the MCRB}
	After finding $\bet_0$, we compute the matrices  $\bA_{\bet_0}$ from \eqref{eq:Aeta0} and  $\bB_{\bet_0}$ from \eqref{eq:Beta0} for evaluating the MCRB in \eqref{eq:MCRB}. 	Based on the pdf expressions in \eqref{eq:truedpf}--\eqref{eq:assumedpdf}, \eqref{eq:Aeta0} becomes
    \begin{align}
		&[\bA_{{\bet_0}}]_{ij} =  -\frac{1}{N_0}\left( \int\frac{\partial^2}{\partial \eta_i \partial \eta_j} \norm{\by-\tbmu(\bet)}^2 p(\by)\,d\by\right)\Bigg|_{\bet = \bet_0} \\
		& = -\frac{1}{N_0} \left(\sum_{t=1}^{T}  \int \frac{\partial^2}{\partial \eta_i \partial \eta_j} \abs{y_t-\tmu_t(\bet)}^2 p(\by) \, d\by \right)\Bigg|_{\bet = \bet_0} \\
		& =  -\frac{1}{N_0} \left(\sum_{t=1}^{T}  \int \frac{\partial^2}{\partial \eta_i \partial \eta_j} \abs{y_t-\tmu_t(\bet)}^2 p(y_t) \, dy_t \right)\Bigg|_{\bet = \bet_0} \\
		& = \frac{2}{N_0} \text{Re}\left\{\sum_{t=1}^{T} \epsilon_t(\bet)^{*}\frac{\partial^2 \tmu_t(\bet)}{\partial \eta_i \eta_j} - \frac{\partial \tmu_t^{*}(\bet)}{\partial \eta_i} \frac{\partial \tmu_t(\bet)}{\partial \eta_j}\right\}\Bigg|_{\bet = \bet_0} ~.
	\end{align}

    In addition, after some algebraic manipulation, the $(i,j)$th entry of matrix $\bB_{\bet_0}$ in \eqref{eq:Beta0} can be written as the sum of four terms as $[\bB_{\bet_0}]_{ij} = T_1 + T_2 + T_3 + T_4$, where
	\begin{align*}
		T_1  = \frac{1}{N_0^2} \left(\sum_{t=1}^{T}\frac{\partial \tmu_t(\bet)}{\partial \eta_i}\epsilon_t(\bet)^{*}\right)\left(\sum_{l=1}^{T} \frac{\partial \tmu_l(\bet)}{\partial \eta_j}\epsilon_l(\bet)^{*} \right)\Bigg|_{\bet = \bet_0}
	\end{align*}
	\begin{align*}
		&T_2  = \Bigg[\frac{1}{N_0^2} \left(\sum_{t=1}^{T}\frac{\partial \tmu_t(\bet)}{\partial \eta_i}\epsilon_t(\bet)^{*}\right) \\
		&\times \left(\sum_{l=1}^{T} \frac{\partial \tmu_l^{*}(\bet)}{\partial \eta_j}\epsilon_l(\bet) \right)  + \frac{1}{N_0} \sum_{t=1}^{T} \frac{\partial \tmu_t(\bet)}{\partial \eta_i}\frac{\partial \tmu_t^{*}(\bet)}{\partial \eta_j}\Bigg] \Bigg|_{\bet = \bet_0}
	\end{align*}
	\begin{align*}	T_3  = \frac{1}{N_0^2} \left(\sum_{t=1}^{T}\frac{\partial \tmu_t^{*}(\bet)}{\partial \eta_i}\epsilon_t(\bet)\right) \left(\sum_{t=1}^{T}\frac{\partial \tmu_l^{*}(\bet)}{\partial \eta_j}\epsilon_l(\bet)\right)\Bigg|_{\bet = \bet_0}
	\end{align*}
	\begin{align*}
		&T_4 =\Bigg[\frac{1}{N_0^2} \left(\sum_{t=1}^{T} \frac{\partial \tmu_t^{*}(\bet)}{\partial \eta_i} \epsilon_t(\bet) \right)\\
		&\times \left(\sum_{l=1}^{T} \frac{\partial \tmu_l(\bet)}{\partial \eta_j} \epsilon_l(\bet)^{*} \right) + \frac{1}{N_0} \sum_{t=1}^{T} \frac{\partial \tmu_t^{*}(\bet)}{\partial \eta_i}\frac{\partial \tmu_t(\bet)}{\partial \eta_j}\Bigg]\Bigg|_{\bet = \bet_0} ~.
	\end{align*}
   Hence, $[\bA_{{\bet_0}}]_{ij}$ and  $[\bB_{\bet_0}]_{ij}$ can be written in more compact forms as follows:
    \begin{align}\label{eq:Aeta0_1}
		&[\bA_{\bet_0}]_{ij}   = \frac{2}{N_0} \text{Re}\left\{\bep(\bet)^{\mathsf{H}} \frac{\partial^2\tbmu(\bet)}{\partial\eta_i \partial\eta_j}-\Big(\frac{\partial\tbmu(\bet)}{\partial\eta_i}\Big)^{\mathsf{H}}\frac{\partial\tbmu(\bet)}{\partial\eta_j}\right\}\Bigg|_{\bet = \bet_0}
		\end{align}
		\begin{align}
		&[\bB_{\bet_0}]_{ij} = \frac{2}{N_0}\Bigg[ \frac{2}{N_0}\text{Re}\left\{\bep(\bet)^{\mathsf{H}}\frac{\partial\tbmu(\bet)}{\partial\eta_i}\right\} \text{Re}\left\{\bep(\bet)^{\mathsf{H}}\frac{\partial\tbmu(\bet)}{\partial\eta_j}\right\}  \notag \\
		&+ \text{Re} \left\{\left(\frac{\partial\tbmu(\bet)}{\partial\eta_i}\right)^{\mathsf{H}}\frac{\partial\tbmu(\bet)}{\partial\eta_j}\right\}\Bigg]\Bigg|_{\bet = \bet_0} \label{eq:Beta0_1} ~,
	\end{align}
	where
	\begin{equation}
	    \frac{\partial^2\tbmu(\bet)}{\partial\eta_i \partial\eta_j}\triangleq \left[\frac{\partial^2 \tmu_1(\bet)}{\partial \eta_i \partial \eta_j} \, \ldots \, \frac{\partial^2 \tmu_T(\bet)}{\partial \eta_i \partial \eta_j}\right]^{\trpose}.
	\end{equation}
	Therefore, once we compute the first and the second derivatives of $\tmu_t(\bet)$ with respect to $\bet$, we can easily compute the matrices $\bA_{\bet_0}$ and $\bB_{\bet_0}$ as specified above. The derivatives are presented in Appendix \ref{sec:AppB}. Based on $\bA_{\bet_0}$ and $\bB_{\bet_0}$, the MCRB in \eqref{eq:MCRB} and the lower bound in \eqref{eq:LB} can be evaluated in a straightforward manner. 

	\subsection{\rev{Localization Accuracy vs. Number of Transmissions}}\label{subsec:MCRBvsT}
	
	\rev{In order to illustrate the relationship between the localization accuracy and the number of transmissions, we analyze how the trace of the MCRB or the LB changes with respect to the number of transmissions, $T$.  We consider two sets of observations for the same values of the true parameter vector $\bbet$, the RIS size $M$, and the RIS related parameters $\betamin, \kappa, \phi$. }

\rev{We prove that if repetitive RIS phase profiles are used, the MCRB and LB do not decay with $1/T$ whereas the CRB does. To be more specific, we state the following lemma.}

\begin{lemma}\label{lemma:MCRB}
\rev{
For the $k$-th set of the observations, let the number of observations, the phase profile and the corresponding pseudo-true parameter be given by $T^{(k)}$, $\btheta^{(k)} =\{\theta^{(k)}_{t,m}\}_{\{t=1,m=1\}}^{\{T^{(k)},M\}}$, and $\bet_0^{(k)}$, respectively,  where $k\in\{1,2\}$.}
\rev{
\begin{itemize}
    \item We assume that $T^{(2)} = K T^{(1)}$, where $K\in\mathbb{N}$. 
    \item Moreover, for the phase profiles, for any $1\leq m\leq M$, assume that
\begin{equation}\label{eq:phaseprofile_r}
    \theta^{(2)}_{t,m} = \theta^{(1)}_{ ( t\text{ mod } T^{(1)}), m}
\end{equation}
where $\big( t\text{ mod } T^{(1)}\big)$ denotes the remainder of the division of $t$ by $T^{(1)}$.
\end{itemize}
Let $\mcrb\big(T^{(k)},\bet_0^{(k)}\big)$, $\lb\big(T^{(k)},\bet_0^{(k)}\big)$ and $\crb\big(T^{(k)}\big)$  denote the corresponding MCRB, LB and CRB for the $k$-th set of observations. Then, 
\begin{align*}
    &\frac{1}{K} < \frac{\Tr\big\{\mcrb\big(T^{(2)},\bet_0^{(2)}\big)\big\}}{\Tr\big\{\mcrb\big(T^{(1)},\bet_0^{(1)}\big)\big\}} < 1, \\
    &~ \frac{1}{K} < \frac{\Tr\big\{\lb\big(T^{(2)},\bet_0^{(2)}\big)\big\}}{\Tr\big\{\lb\big(T^{(1)},\bet_0^{(1)}\big)\big\}} < 1~\text{and}~\frac{\Tr\big\{\crb\big(T^{(2)}\big)\big\}}{\Tr\big\{\crb\big(T^{(1)}\big)\big\}} = \frac{1}{K}.
\end{align*}}
\end{lemma}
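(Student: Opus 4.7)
My plan is to exploit the $K$-periodicity of the phase profile in \eqref{eq:phaseprofile_r} to show that $\bA_{\bet_0}$ scales linearly in $K$, while $\bB_{\bet_0}$ splits into a piece that scales linearly and another that scales quadratically in $K$. Substituting into the sandwich form $\mcrb = \bA_{\bet_0}^{-1}\bB_{\bet_0}\bA_{\bet_0}^{-1}$ will then separate the MCRB into an SNR-like term (which decays as $1/K$) and a mismatch-induced floor (which does not decay at all). Reading off the traces gives the two-sided bound, and the CRB case is recovered because the floor term vanishes under the well-specified model.

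The first step is to verify that the pseudo-true parameter is invariant under the periodic repetition, i.e.\ $\bet_0^{(2)} = \bet_0^{(1)}$. By \eqref{eq:phaseprofile_r}, both the noise-free signal $\bmu^{(2)}$ in \eqref{eq:mut} and the assumed-model signal $\bcc^{(2)}(\bp)$ in \eqref{eq_etabar2} are simply the $K$-fold vertical concatenation of their length-$T^{(1)}$ counterparts. Hence the objective of \eqref{eq_etabar2} for the second set equals $K$ times the objective for the first, so the minimizing pair $(\balp_0,\bp_0)$ is unchanged; consequently the residual $\bep^{(2)}(\bet_0^{(2)})$ is itself the $K$-fold stacking of $\bep^{(1)}(\bet_0^{(1)})$, and $\tbmu^{(2)}(\bet_0^{(2)})$ stacks analogously.

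Next, I will track every inner product over $t$ appearing in \eqref{eq:Aeta0_1}--\eqref{eq:Beta0_1}. Each such inner product equals $K$ times the corresponding single-period inner product, giving $\bA_{\bet_0^{(2)}} = K\,\bA_{\bet_0^{(1)}}$. Inside the bracket of \eqref{eq:Beta0_1}, the second summand is a single inner product and scales as $K$, whereas the first summand is a product of two such inner products and scales as $K^2$. Writing $\bB_{\bet_0^{(1)}} = \bB_1 + \bB_2$ with $\bB_1$ collecting the rank-one outer-product component $\text{Re}\{\bep^\hermit\partial_i\tbmu\}\text{Re}\{\bep^\hermit\partial_j\tbmu\}$ and $\bB_2$ collecting the Gram-matrix component $\text{Re}\{(\partial_i\tbmu)^\hermit\partial_j\tbmu\}$, I obtain $\bB_{\bet_0^{(2)}} = K^2\bB_1 + K\bB_2$. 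Plugging into \eqref{eq_mcrb_def} and using $(K\bA)^{-1} = K^{-1}\bA^{-1}$ yields
\begin{align*}
\mcrb\big(T^{(2)},\bet_0^{(2)}\big) = M_1 + \tfrac{1}{K} M_2, \qquad \mcrb\big(T^{(1)},\bet_0^{(1)}\big) = M_1 + M_2,
\end{align*}
where $M_i \triangleq \bA_{\bet_0^{(1)}}^{-1}\bB_i\bA_{\bet_0^{(1)}}^{-1}$ are PSD because $\bA_{\bet_0^{(1)}}$ is symmetric and $\bB_1,\bB_2$ are PSD.

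Taking traces gives $\Tr\{\mcrb(T^{(2)},\bet_0^{(2)})\}/\Tr\{\mcrb(T^{(1)},\bet_0^{(1)})\} = (\Tr(M_1)+\Tr(M_2)/K)/(\Tr(M_1)+\Tr(M_2))$, which is a convex combination of $1/K$ and $1$ and therefore lies strictly in $(1/K,1)$ provided $\Tr(M_1),\Tr(M_2)>0$. Adding the $T$-independent bias $\bias(\bet_0) = (\bbet-\bet_0)(\bbet-\bet_0)^\trpose$ to both MCRBs preserves the structure: the LB ratio becomes $(\Tr(M_1)+\Tr(M_2)/K+\Tr(\bias))/(\Tr(M_1)+\Tr(M_2)+\Tr(\bias))$, which is again a convex combination of $1/K$ and $1$ and thus lies in $(1/K,1)$. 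For the CRB under the well-specified model we have $\bep \equiv \boldzero$ at the true parameter, so $\bB_1 = \boldzero$, $M_1 = \boldzero$, and the ratio collapses to exactly $1/K$, recovering the classical result. The main obstacle I anticipate is the strict positivity of $\Tr(M_1)$ and $\Tr(M_2)$: the former requires that $\bep(\bet_0)$ is not orthogonal to every column of the Jacobian of $\tbmu$, which is a non-degeneracy condition guaranteed by the model mismatch ($\bbet\neq\bet_0$) together with identifiability, while the latter follows from linear independence of the parameter-sensitivity vectors $\partial_i\tbmu(\bet_0)$.
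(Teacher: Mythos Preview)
Your argument mirrors the paper's almost exactly: show that the periodic repetition leaves $\bet_0$ (hence the bias) unchanged, establish $\bA_{\bet_0^{(2)}}=K\bA_{\bet_0^{(1)}}$, split $\bB_{\bet_0}$ into a $K^2$-scaling outer-product piece (your $\bB_1$, the paper's $\bR$) and a $K$-scaling Gram piece (your $\bB_2$, the paper's $\bS$), and substitute into \eqref{eq_mcrb_def}. Your convex-combination framing of the trace ratio is a slightly cleaner restatement of the paper's inequality; for the CRB the paper invokes $\crb=\bA_{\bbet}^{-1}$ directly rather than your $\bep\equiv\boldzero$ route, but the conclusion is identical.

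The obstacle you flag at the end, however, is more serious than your proposed fix admits. Because $\bet_0$ is an \emph{interior} minimizer of $\|\bep(\bet)\|^2$ over $\mathbb{R}^5$, the first-order stationarity condition forces $\text{Re}\{\bep(\bet_0)^{\hermit}\partial_i\tbmu(\bet_0)\}=0$ for every $i$; hence $\bB_1$ is identically zero at $\bet_0$ and $\Tr(M_1)=0$. Your non-degeneracy/identifiability argument cannot rescue this, since the vanishing is precisely the optimality condition defining $\bet_0$. The paper has the very same gap---it simply asserts that $\bR$ is positive definite, which it cannot be (it is rank one at most, and in fact vanishes)---so your proof is no less rigorous than the paper's on this point; but be aware that the strict lower bound $1/K$ for the MCRB ratio is not established by either argument as written. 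The LB inequality survives regardless, since the strictly positive $\Tr(\bias)>0$ supplies the $K$-independent floor you need.
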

\begin{proof}
\rev{\ See Appendix~\ref{sec:AppMCRB}.}
\end{proof}
\rev{Lemma~\ref{lemma:MCRB} reveals that the trace of the CRB decays with $1/T$ while that of the MCRB or LB decays slower than $1/T$ due to model misspecification.}

	\section{Scenario-I: Mismatched Estimator}\label{sec_mml_scenario1}
	
	In this section, we focus on estimator design for Scenario-I. First, we derive the plain \ac{MML} estimator, which entails computationally prohibitive high-dimensional non-convex optimization. To circumvent this, we then propose a low-complexity estimator capitalizing on the Jacobi-Anger expansion, which reduces the problem to a series of line searches over range, azimuth, and elevation domains.
	
	\label{Sec:CaseIMML}
	\subsection{Mismatched Maximum Likelihood (MML) Estimator}
	
	\rev{Given the pdf of the assumed model in \eqref{eq:assumedpdf},} the \ac{MML} estimator is given by \cite{Fortunati2017}
	\begin{equation}\label{eq:MML1}
		\hat{\bet}_{\text{MML}}(\by) = \argmax_{\bet\in\mathbb{R}^5} \log \tilde{p}(\by|\bet). 
	\end{equation}
	Under some regularity conditions, it can be shown that $\hat{\bet}_{\text{MML}}(\by)$ is asymptotically MS-unbiased and its error covariance matrix is asymptotically equal to the $\mcrb(\bet_0)$ \cite[Thm. 2]{Fortunati2017}. Hence, the covariance matrix of the MML estimator is asymptotically tight to the MCRB.  
	
	From \eqref{eq:assumedpdf} and \eqref{eq:MML1}, the MML estimator based on the received signal $\by$ in \eqref{eq_yt} can be expressed as
	\begin{align} \label{eq_mml_est}
		\hat{\bet}_{\text{MML}}(\by) &= \argmax_{\bet\in\mathbb{R}^5} \log \tilde{p}(\by\lvert\bet) = \argmin_{\bet\in\mathbb{R}^5} \norm{\by-\tbmu(\bet)}.
	\end{align}
    Since this problem is in the same form as in \eqref{eq_betz}, it can be reduced to a 3-D optimization problem as discussed in Section~\ref{sec:MCRBder}. In order to solve the resulting problem, initialization can be very critical as we are faced with a non-convex optimization problem. During the estimation process, we do not have access to the true position $\bbp$. Hence, we cannot use the true position vector $\bbp$ for the initialization. If an arbitrarily chosen position vector is used for the initialization, the global optimal solution of \eqref{eq:MML1} cannot always be obtained. To find a remedy for this issue, we next propose an approximated version of the MML estimator \rev{in \eqref{eq_mml_est}}, namely, the approximate mismatched maximum likelihood (AMML) estimator. \rev{Additionally, to reduce the complexity of the proposed algorithm, we propose a} Jacobi-Anger expansion \rev{based} approach \rev{when the number of transmissions is sufficiently large} \cite{Shaban2021,cuneyd_ICC_RIS_2022},\rev{\cite{JacobiAnger_MIMO_2019}}.

	\subsection{\rev{AMML Estimator}}\label{sec_amml}
    \rev{
    Let us first re-write $\tbmu(\bet)$ in \eqref{eq:mut_til} using \eqref{eq_bp} as
    \begin{align} \nonumber
    \tbmu(\bet) &= \alpha \underbrace{\begin{bmatrix} \left(\bwtilde_1 \odot \ba(\bpbs) \right)^\trpose
    	\\ \vdots
    	\\ \left(\bwtilde_T \odot \ba(\bpbs) \right)^\trpose
    	\end{bmatrix}}_{\triangleq \tbQ \in \complexset{T}{M} } \ba(\bp) \sqrt{E_s}  ~, \\ \label{eq_y_orig} 
    	& = \alpha\tbQ \ba(\bp) \sqrt{E_s}   ~,
    \end{align}
    where $\bwtilde_t \triangleq [\tilde{w}_{t,1} \, \ldots \, \tilde{w}_{t,M}]^\trpose$ and $s_t = \sqrt{E_s}$ for any $t$. Inserting \eqref{eq_y_orig} into \eqref{eq_mml_est}, the MML estimator becomes
    \begin{align} \label{eq_orig_optimization}
         (\widehat{\alpha},\widehat{d},\widehat{\vartheta},\widehat{\varphi}) = \argmin_{\alpha, d, \vartheta, \varphi} \norm{\by- \alpha \, \tbQ \ba\left(\pp(d,\vartheta,\varphi)\right) \sqrt{E_s} } ~,
    \end{align}
    where
    \begin{align} \label{eq_pos_sph}
        \bp(d, \vartheta, \varphi) \triangleq \bpris + d \,[\sin \vartheta \cos \varphi  ~ \sin \vartheta \sin \varphi ~ \cos \vartheta  ]^{\trpose}
    \end{align}
    denotes the position vector with respect to $\bpris$ in spherical
    coordinates according to the definitions after \eqref{eq_k_wavevector}. Since $\widehat{\alpha}$ can be obtained in closed-form as a function of the other unknowns as in \eqref{eq:alpOpt}, \eqref{eq_orig_optimization} can be solved via 3-D optimization. In the following, we describe the steps to solve \eqref{eq_orig_optimization} in an efficient manner.}
    
	 \subsubsection{\rev{Approximation of Near-Field as Far-Field for Initial Azimuth and Elevation Estimation}}\label{sec_step1}
	 \rev{To avoid 3-D search in \eqref{eq_orig_optimization}, we approximate the near-field steering vector as its far-field counterpart via \eqref{eq:CM1}:
	 \begin{equation}\label{eq_farfield_approx}
	     [\ba(\bp)]_m \rev{\approx [\ba(\vartheta, \varphi)]_m} ~,
	\end{equation}
	which reduces \eqref{eq_orig_optimization} to
\begin{align} \label{eq_2D_varthetaphi}
         (\widehat{\alpha},\widehat{\vartheta},\widehat{\varphi}) = \argmin_{\alpha, \vartheta, \varphi} \norm{\by- \alpha \, \tbQ \ba(\vartheta,\varphi) \sqrt{E_s} } ~.
    \end{align}
    Similar to \eqref{eq:alpOpt}, $\alphahat$ can be estimated in closed-form as a function of $\vartheta$ and $\varphi$ in \eqref{eq_2D_varthetaphi}, leading to
    \begin{align}\label{eq_2D_varthetaphi2}
		(\widehat{\vartheta},\widehat{\varphi}) = \argmin_{\substack{\vartheta \in [0, \, \pi/2] \\ \varphi \in [0, \, 2\pi]}}  \norm{\projnull{\tbQ \ba(\vartheta,\varphi) } \by } ~.
	\end{align}
    In \eqref{eq_2D_varthetaphi2}, initial azimuth and elevation estimates can be found by conducting 2-D search over $\vartheta$ and $\varphi$.}
     
    \subsubsection{\rev{Switching Back to Near-Field for Location Estimation}}
    
    \rev{Starting from the initial estimates $\widehat{\vartheta}$ and $\widehat{\varphi}$ in the approximated MML estimator \eqref{eq_2D_varthetaphi2}, location estimation can be performed by switching back to the original MML estimator \eqref{eq_orig_optimization} based on the near-field model via  the following alternating iterations. 
    \begin{itemize}
        \item \textit{Update $d$:} Given $\widehat{\vartheta}$ and $\widehat{\varphi}$, we estimate distance $d$ in \eqref{eq_orig_optimization} as
         \begin{align} \label{eq_1D_distance}
         (\widehat{\alpha},\widehat{d}) = \argmin_{\alpha, d} \norm{\by- \alpha \, \tbQ \ba\big(\bp(d,\widehat{\vartheta},\widehat{\varphi})\big) \sqrt{E_s} } ~.
    \end{align}
    Similar to the solution of \eqref{eq_2D_varthetaphi}, distance can be estimated from \eqref{eq_1D_distance} via
    \begin{equation} \label{eq_1D_distance2}
	    \widehat{d} = \argmin_{d \in (0,\dmax)} \norm{\projnull{\tbQ \ba(\bp(d, \widehat{\vartheta}, \widehat{\varphi})) } \by } ~,
	\end{equation}
    which entails a simple line search for a given maximum distance $\dmax$ imposed by near-field conditions in \eqref{eq_dmax}.
    \item \textit{Update $\vartheta$ and  $\varphi$:} 
    Given $\widehat{d}$ from \eqref{eq_1D_distance2}, azimuth and elevation estimates can be updated in \eqref{eq_orig_optimization} via 
       \begin{align} \label{eq_2D_varthetaphi_update}
         (\widehat{\alpha},\widehat{\vartheta},\widehat{\varphi}) = \argmin_{\alpha, \vartheta, \varphi} \norm{\by- \alpha \, \tbQ \ba\big(\bp(\widehat{d}, \vartheta, \varphi)\big) \sqrt{E_s} } ~,
    \end{align}
    leading to
    \begin{align}\label{eq_2D_varthetaphi_update2}
		(\widehat{\vartheta},\widehat{\varphi}) = \argmin_{\substack{\vartheta \in [0, \, \pi/2] \\ \varphi \in [0, \, 2\pi]}}  \norm{\projnull{\tbQ\ba(\bp(\widehat{d}, \vartheta, \varphi)) } \by } ~.
	\end{align}
    \end{itemize}
    After a predetermined number of iterations $\Imax$, the resulting estimates of elevation, azimuth and distance from \eqref{eq_1D_distance2} and \eqref{eq_2D_varthetaphi_update2} are inserted into $\bp(\widehat{d}, \widehat{\vartheta}, \widehat{\varphi})$ in \eqref{eq_pos_sph} to find the AMML estimate for the UE location.}
    
    \rev{To further reduce the complexity, we next propose a Jacobi-Anger expansion based estimator that avoids 2-D searches in \eqref{eq_2D_varthetaphi2} and \eqref{eq_2D_varthetaphi_update2} for large number of transmissions $T$.}
    
    
    \subsection{\rev{Jacobi-Anger Expansion Based AMML Estimator for Large Number of Transmissions}}\label{sec_jacobi_amml}
	\rev{In this part, we develop a low-complexity multi-step approach to estimate the UE position in the MML estimator \eqref{eq_orig_optimization} by leveraging the Jacobi-Anger expansion \cite{Cuyt_2008_HandbookContFrac}, as detailed in the following.}

	\subsubsection{\rev{Jacobi-Anger Approximation to Decouple Azimuth and Elevation Parameters in Far-Field}}\label{sec_jacobi_step2}
	\rev{Similar to Sec.~\ref{sec_amml}, we begin by invoking the far-field approximation in \eqref{eq:CM1} and considering the resulting approximated estimator in \eqref{eq_2D_varthetaphi}. Using \eqref{eq_dist_approx}, the far-field steering vector in \eqref{eq:CM1} can be expressed as
	\begin{align} \label{eq_ap_nearfield_Jacobi}
	    [\ba(\vartheta, \varphi)]_m = \exp\left( j \frac{2 \pi}{\lambda} q_m \sin(\vartheta) \cos(\varphi - \psi_m) \right) ~,
	\end{align}
	where $q_m$ and $\psi_m$ are defined in \eqref{eq_pm_pris}.} By employing the Jacobi-Anger expansion, \rev{$[\ba(\vartheta, \varphi)]_m$ in \eqref{eq_ap_nearfield_Jacobi}} can be \rev{expressed} as \rev{\cite[Eq.~(17.1.7)]{Cuyt_2008_HandbookContFrac}}
	\begin{equation} \label{eq_jacobi_expansion}
	    \rev{[\ba(\vartheta, \varphi)]_m} = \sum_{n=-\infty}^{\infty} j^{n} J_n\left(\frac{2\pi}{\lambda} \rev{q_m} \sin \vartheta \right) e^{jn(\varphi-\psi_m)} ~,
	\end{equation}
	where $J_n(\cdot)$ is the $n$th order Bessel function of the first kind. As $\abs{J_n(\cdot)}$ decays to $0$ as $\abs{n}$ increases, by neglecting the terms with $\abs{n} > N$, for some $N\in\mathbb{N}$, \rev{\eqref{eq_jacobi_expansion}} can be approximated by\rev{\cite{JacobiAnger_MIMO_2019}}
	\begin{equation}\label{eq:JacobiAnger}
	    \rev{[\ba(\vartheta, \varphi)]_m} \approx \sum_{n=-N}^{N} j^{n} J_n\left(\frac{2\pi}{\lambda} \rev{q_m} \sin \vartheta \right) e^{jn(\varphi-\psi_m)} ~,
	\end{equation}
    \rev{where 
    \begin{align}\label{eq_Nmin}
        N > \frac{2\pi}{\lambda} \qmax \sin \vartheta    
    \end{align}
    enables sufficient precision for high-quality approximation \cite{JacobiAnger_MIMO_2019}, with $\qmax$ denoting the maximum distance of any RIS element from the RIS centre. To provide a compact expression for \eqref{eq:JacobiAnger},} we define \rev{$\bg_m(\vartheta) \in \complexset{(2N+1)}{1}$} and \rev{$\bh(\varphi) \in \complexset{(2N+1)}{1}$} as
    \begin{align} \label{eq_gm}
    	[\bg_m(\vartheta)]_n &= j^n J_n\left( \frac{2\pi}{\lambda} \rev{q_m} \sin \vartheta\right)e^{-jn\psi_m} ~, \\ \label{eq_hphi}
    	[\bh(\varphi)]_n & = e^{jn\varphi} 
    \end{align}
    for $n \in \{-N, \ldots, N\}$. Then, inserting \eqref{eq_gm} and \eqref{eq_hphi} into \eqref{eq:JacobiAnger} yields
    \begin{equation}\label{eq_jacobi}
    	\rev{\ba(\vartheta, \varphi)} \approx \bG^{\trpose}(\vartheta)\bh(\varphi),
    \end{equation}
    where
    \begin{equation}
    \bG(\vartheta) = [\bg_1(\vartheta) \, \ldots \, \bg_M(\vartheta)] \in \complexset{(2N+1)}{M}.
    \end{equation}
    \rev{As seen from \eqref{eq_jacobi}, the proposed Jacobi-Anger expansion of the far-field RIS steering vector in \eqref{eq_ap_nearfield_Jacobi} enables decoupling azimuth and elevation parameters in a matrix-vector multiplication form.} 

    \subsubsection{\rev{Azimuth and Elevation Estimation via Unstructured \ac{ML}}}\label{sec_jacobi_step3}
    \rev{We can now exploit the structure in \eqref{eq_jacobi} to recast the MML estimator in \eqref{eq_2D_varthetaphi} as
    \begin{align} \label{eq_ml_uns}
         (\widehat{\alpha},\widehat{\vartheta},\widehat{\varphi}) = \argmin_{\alpha, \vartheta, \varphi} \norm{\by- \alpha \, \tbQ \bG^{\trpose}(\vartheta)\bh(\varphi) \sqrt{E_s} } ~.
    \end{align}
    To enable decoupled estimation of azimuth and elevation parameters in \eqref{eq_ml_uns}, we resort to unstructured \ac{ML} techniques \cite{ML_array_EXIP_98}. To this end, let
    \begin{align} \label{eq_bbf_uns}
        \bbf \triangleq \alpha \, \bh(\varphi) \sqrt{E_s} \in \complexset{(2N+1)}{1}    
    \end{align}
    denote the unstructured vector where the dependency on $\alpha$ and $\varphi$ is dropped. Substituting \eqref{eq_bbf_uns} into \eqref{eq_ml_uns} yields the unstructured ML optimization problem:
    \begin{align} \label{eq_ml_uns2}
         (\widehat{\bbf},\widehat{\vartheta}) = \argmin_{\bbf, \vartheta} \norm{\by- \tbQ \bG^{\trpose}(\vartheta)\bbf } ~.
    \end{align}
    In \eqref{eq_ml_uns2}, $\widehat{\bbf}$ can be readily obtained as a function of $\vartheta$
    \begin{align} \label{eq_bbf_est}
        \widehat{\bbf} = \left(\tbQ \bG^{\trpose}(\vartheta)\right)^{\dagger} \by
    \end{align}
    under the condition $T \geq 2N+1$, noting that $\tbQ \bG^{\trpose}(\vartheta) \in \complexset{T}{(2N+1)}$. By plugging \eqref{eq_bbf_est} back into \eqref{eq_ml_uns2}, elevation can be estimated via a simple line search:
    \begin{equation}\label{eq_est_vartheta}
	    \widehat{\vartheta} = \argmin_{\vartheta \in [0, \, \pi/2]} \norm{\projnull{\tbQ \bG^{\trpose}(\vartheta)} \by }~.
	\end{equation}}
    
    \rev{Using the estimated elevation $\widehat{\vartheta} $ in \eqref{eq_est_vartheta}, the ML problem in \eqref{eq_ml_uns} can be recast as
    \begin{align} \label{eq_ml_uns_azimuth}
         (\widehat{\alpha},\widehat{\varphi}) = \argmin_{\alpha,  \varphi} \norm{\by- \alpha \, \tbQ \bG^{\trpose}(\widehat{\vartheta})\bh(\varphi) \sqrt{E_s} } ~,
    \end{align}
    where the task of azimuth estimation boils down to a line search:
    \begin{equation}\label{eq_est_varphi}
	    \widehat{\varphi} = \argmin_{\varphi \in [0, \, 2 \pi]} \norm{\projnull{\tbQ \bG^{\trpose}(\widehat{\vartheta})\bh(\varphi) } \by }~.
	\end{equation}    }
    
    \subsubsection{\rev{Distance Estimation Using the Near-Field Model}}\label{sec_jacobi_step4}
    \rev{Having estimated the elevation and azimuth in \eqref{eq_est_vartheta} and \eqref{eq_est_varphi}, respectively, using the far-field approximation, we can now switch back to the original near-field based MML estimator in \eqref{eq_orig_optimization} and retrieve distance information via \eqref{eq_1D_distance2}.} 

    \subsection{\rev{Summary of AMML Based Algorithm}}\label{sec_summary}
    \rev{In Sec.~\ref{sec_amml} and Sec.~\ref{sec_jacobi_amml}, we propose two approaches to implement the MML estimator in \eqref{eq_mml_est}. The approach in Sec.~\ref{sec_amml} can be used for any number of transmissions $T$, while the one in Sec.~\ref{sec_jacobi_amml} represents a low-complexity alternative that can be employed for $T \geq \Tthr = 2N+1$ due to the condition imposed in \eqref{eq_bbf_est}. Here, $N$ can be set according to \eqref{eq_Nmin}. Therefore, we can decide on the preferred approach depending on whether $T \geq \Tthr$ holds. The entire algorithm is summarized in Algorithm~\ref{alg_jacobi}.}

	
	
%
	
	
	\begin{algorithm}[t]
	\caption{AMML Algorithm for RIS-aided Near-Field Localization via Jacobi-Anger Expansion}
	\label{alg_jacobi}
	\begin{algorithmic}[1]
	    \State \textbf{Input:} Observation $\yy$ in \eqref{eq_yt}\rev{, $\Tthr$, $\Imax$.}
	    \State \textbf{Output:} Location estimate $\pphat$.
	    \If{\rev{$T\geq \Tthr$}}
	        \State Estimate the \rev{elevation} $\widehat{\vartheta}$ by solving \eqref{eq_est_vartheta}.
	        \State Using $\widehat{\vartheta}$, estimate the \rev{azimuth} $\widehat{\varphi}$ by solving \eqref{eq_est_varphi}.
	        \State Using $\widehat{\vartheta}$ and $\widehat{\varphi}$, estimate the distance $\widehat{d}$ by solving \eqref{eq_1D_distance2}.
	    \Else
	        \State \rev{Find the initial estimates $\widehat{\vartheta}$ and $\widehat{\varphi}$  by solving \eqref{eq_2D_varthetaphi2}.}
	        \State \rev{Set $i = 0$.}
	         \While {\rev{the objective function in \eqref{eq_orig_optimization} does not converge and $i < \Imax$} }
	            \State \rev{Update $\widehat{d}$ by solving  \eqref{eq_1D_distance2}.}
	            \State \rev{Update $\widehat{\vartheta}$ and $\widehat{\varphi}$ by solving  \eqref{eq_2D_varthetaphi_update2}.}
	            \State \rev{Set $i = i + 1$.} 
	        \EndWhile
	    \EndIf
	   \State Compute the location estimate via \rev{\eqref{eq_pos_sph} as $\pphat = \bp(\widehat{d}, \widehat{\vartheta}, \widehat{\varphi})$}.
	\end{algorithmic}
	\normalsize
\end{algorithm}
	\subsection{\rev{Complexity and Convergence of Algorithm~\ref{alg_jacobi}}}\label{sec:alg1_comp_conv}
	\rev{To evaluate the computational complexity of Algorithm~\ref{alg_jacobi}, suppose that the search intervals for distance, azimuth and elevation are discretized into grids of size $K$ each. We assume that $K, M >T$, which is reasonable given the large RIS size and a fine search granularity required for high-quality estimates. As shown in detail in Appendix~\ref{sec:AppCompCostAlg1}, the overall cost of Algorithm~\ref{alg_jacobi} for $T< \Tthr = 2N+1$ and $T \geq \Tthr$ is given by $\mathcal{O}(TK^2M)$ and $\mathcal{O} \left( T M (2N+1)K\right)$, respectively. For $T \geq \Tthr$, we have $K > T \geq 2N+1$, leading to $T M (2N+1) K\leq TK^2 M$. Thus, it can be concluded that, when $T$ is sufficiently large, by using the Jacobi-Anger expansion based estimator in Sec.~\ref{sec_jacobi_amml}, we decrease the computational complexity. Thus, for $T \geq 2N+1$ where $N$ is sufficiently large (e.g., \eqref{eq_Nmin}) to provide a good approximation in \eqref{eq:JacobiAnger}, the Jacobi-Anger based approach can be employed to reduce the computational burden.}
	
	\rev{Regarding the convergence of Algorithm~\ref{alg_jacobi}, it follows from \cite[Prop.~1]{seqOpt_TSP_2018} that if the solutions to \eqref{eq_1D_distance2} and \eqref{eq_2D_varthetaphi_update2} are unique, then Algorithm~\ref{alg_jacobi} converges to a stationary point of the original MML estimation problem in \eqref{eq_orig_optimization}. Note that the uniqueness of the solutions to \eqref{eq_1D_distance2} and \eqref{eq_2D_varthetaphi_update2} can be achieved with high probability using random RIS phase profiles \cite[Sec.~IV-B]{RIS_loc_2021_TWC}.}
	


     Next, we analyze how the performance bounds and the estimator structures are affected when the true RIS amplitude model given in \eqref{eq_wtm_ris} is known.

    \section{Scenario-II \& Scenario-III: RIS-aided Localization Under Known RIS Amplitude Model}\label{sec_scen23}
    
    In this section, we investigate RIS-aided localization under Scenario-II and Scenario-III, where the UE is assumed to be aware of the RIS amplitude model in \eqref{eq_wtm_ris}.

    \subsection{Scenario-II: Known RIS Amplitude Model with Unknown Parameters}\label{sec_scen2_crb}
  
    In this scenario, in order to parameterize the unknown system parameters, we will use $\bet = [\text{Re}(\alpha)\, \text{Im}(\alpha) \, \bp^{\trpose} \, \beta_{\text{min}}\, \kappa \, \phi]^{\trpose}$, which consists of both the channel parameters and the RIS model parameters. Then, by \cite[Eq.~ 9] {Shaban2021}, the Fisher Information matrix (FIM), $\bJ(\bet)\in\mathbb{R}^{8\times 8}$, can be expressed as
    \begin{equation}\label{eq:FIMCaseII}
\bJ(\bet) =  \frac{2}{N_0} \text{Re}\left\{\left(\frac{\partial \bmu}{\partial \bet}\right)^{\mathsf{H}} \frac{\partial \bmu}{\partial \bet}\right\}.
    \end{equation}
	In order to compute the derivatives of $\bmu$ with respect to the first five entries of $\bet$, we can use the derivatives given in Appendix \ref{sec:AppB} by replacing the $\tilde{w}_{t,m}$ terms with $w_{t,m}$. For the last three entries of $\bet$, i.e., for the RIS related parameters, the \rev{derivatives used in \eqref{eq:FIMCaseII}} are \rev{provided} in Appendix~\ref{sec:AppC}.

After obtaining the FIM, by computing $\Tr\{\bJ^{-1}(\bet)\}_{3:5,3:5}$, we can obtain the CRB for estimating the UE position. Moreover, the ML estimator can be stated as
\begin{equation}\label{eq_ml_case2}
    \hat{\bet}_{\text{ML}}(\by) = \argmax_{\bet\in\mathbb{R}^8} \log p(\by).
\end{equation}
As discussed in Section~\ref{sec:MCRBder}, the estimate for $\overline{\alpha}$ can  uniquely be determined for given estimates of $\bbp$, $\bbetamin$, $\overline{\kappa}$, and $\overline{\phi}$. That is, this problem can be reduced to a 6-dimensional optimization problem.  Similar to the mismatched scenario (Scenario-I), the initialization is an important issue for this scenario, as well. For practical implementations, we  propose an approximate version of the ML estimator, called the approximate maximum likelihood (AML) estimator, in the next section.


\subsection{Scenario-II: AML Estimator}\label{sec:AML}
In this part, our goal is to solve the problem of \textit{joint localization and online RIS calibration} in \eqref{eq_ml_case2}, which involves estimating the UE location $\bbp$ and the RIS model parameters $\bbetamin$, $\overline{\kappa}$, and $\overline{\phi}$ simultaneously. To accomplish this in an efficient manner, we propose a low-complexity estimator as an alternative to the high-dimensional non-convex optimization in \eqref{eq_ml_case2}. 

Let us write the observations in \eqref{eq_yt} in a vector form as
\begin{align}\label{eq_y_vec}
    \yy = \bar{\alpha} \, \bW^\trpose(\overline{\zetab}) \bb(\bbp) \sqrt{E_s} + \nn ,
\end{align}
where $\bW(\overline{\zetab}) = [ \ww_1(\overline{\zetab}) \, \ldots \, \ww_T(\overline{\zetab}) ] \in \complexset{M}{T}$ is the matrix of RIS profiles as a function of the parameters $\overline{\zetab} = [\bbetamin ~ \overline{\kappa} ~ \overline{\phi}]^\trpose$ of the RIS amplitude model in \eqref{eq_wtm_ris} and \eqref{eq_beta_model}, and $\nn = [n_1 \, \ldots \, n_T]^\trpose$ is the additive noise vector. From \eqref{eq_wtm_ris} and \eqref{eq_beta_model}, $\bW(\overline{\zetab})$ can be expressed as
\begin{align} \label{eq_bw}
    \bW(\overline{\zetab}) = \left( \bbetamin \Gammab_1(\overline{\kappa}, \overline{\phi}) + \Gammab_2(\overline{\kappa}, \overline{\phi}) \right) \odot e^{j \Thetab} ,
\end{align}
where $\Thetab \in \realset{M}{T}$ denotes the RIS phase shifts with $[\Thetab]_{m,t} = \theta_{m,t}$, 
\begin{equation}
    \Gammab_1(\overline{\kappa}, \overline{\phi}) \triangleq \boldone_M \boldone_T^\trpose - \left(\frac{\sin(\Thetab-\overline{\phi})+1}{2}\right)^{\overline{\kappa}} ~,
\end{equation}
\begin{equation}
    \Gammab_2(\overline{\kappa}, \overline{\phi}) \triangleq \left(\frac{\sin(\Thetab-\overline{\phi})+1}{2}\right)^{\overline{\kappa}} ~,
\end{equation}
$\boldone_M$ is an all-ones vector of size $M$, and $\sin(\cdot)$ and $(\cdot)^{\overline{\kappa}}$ operations are element-wise.
Plugging $\bb(\bbp) = \aaa(\bbp) \odot \aaa(\ppbs)$ and \eqref{eq_bw} into \eqref{eq_y_vec} yields
\begin{align}\label{eq_y_new}
    \yy = \overline{\alpha}  \left( \bbetamin \Gammabt_1(\overline{\kappa}, \overline{\phi}) + \Gammabt_2(\overline{\kappa}, \overline{\phi}) \right)^\trpose \aaa(\bbp) \sqrt{E_s}+ \nn ,
\end{align}
where
\begin{align}
    \Gammabt_1(\overline{\kappa}, \overline{\phi}) &\triangleq  \Gammab_1(\overline{\kappa}, \overline{\phi}) \odot e^{j \Thetab} \odot \aaa(\ppbs) \boldone_T^\trpose  ,
    \\ \nonumber
    \Gammabt_2(\overline{\kappa}, \overline{\phi}) &\triangleq   \Gammab_2(\overline{\kappa}, \overline{\phi}) \odot e^{j \Thetab} \odot \aaa(\ppbs) \boldone_T^\trpose .
\end{align}
The ML estimator corresponding to the observation model in \eqref{eq_y_new} is given by
\begin{align}\label{eq_ml_prop}
    \min_{\alpha, \pp, \zetab} ~ \norm{\yy - \alpha  \left( \betamin \Gammabt_1(\kappa, \phi) + \Gammabt_2(\kappa, \phi) \right)^\trpose \aaa(\pp)\sqrt{E_s}}^2 .
\end{align}

To solve the ML optimization problem in \eqref{eq_ml_prop}, we now propose a three-step procedure consisting of location initialization, online RIS calibration and location refinement.

\subsubsection{Step 0: Initialization of UE Location}
First, by assuming $\betamin = 1$, the initial location estimate $\pphat$ is obtained by using the Jacobi-Anger expansion based algorithm in Algorithm~\ref{alg_jacobi}. Inserting $\pphat$ into the ML estimator in  \eqref{eq_ml_prop}, the optimization problem becomes
\begin{align}\label{eq_ml_prop2}
    \min_{\alpha, \zetab} ~ \norm{\yy -  \alpha \left( \betamin \Gammabt_1(\kappa, \phi) + \Gammabt_2(\kappa, \phi) \right)^\trpose \ba(\pphat)\sqrt{E_s}}^2 .
\end{align}

\subsubsection{Step 1: Online RIS Calibration}
As noted from \eqref{eq_ml_prop2}, $\overline{\alpha}$ and $\bbetamin$ can now be estimated in closed-from as a function of the remaining unknown parameters, while the estimates of $\overline{\kappa}$ and $\overline{\phi}$ can be found via a 2-D search. This motivates an alternating optimization algorithm where we alternate among updates of $\alpha$, $\betamin$, and $(\kappa, \phi)$:
\begin{itemize}
    \item \textit{Update $\alpha$}: Given $\zetab$, a closed-form estimate of $\alpha$ in \eqref{eq_ml_prop2} is given by
    \begin{align}\label{eq_hhhat}
        \alphahat = \left(\bUps^\hermit(\zetab) \bUps(\zetab)\right)^{-1} \bUps^\hermit(\zetab) \yy,
    \end{align}
    where
    \begin{align}
        \bUps(\zetab) &\triangleq \left( \betamin \Gammabt_1(\kappa, \phi) + \Gammabt_2(\kappa, \phi) \right)^\trpose \ba(\pphat)\sqrt{E_s}.
    \end{align}
    \item \textit{Update $\betamin$}: Given $\alpha$, $\kappa$, and $\phi$, under the constraint of $0\leq \betamin \leq 1$, the Lagrangian can be expressed as follows:
    \begin{equation*}
        L(\betamin,\lambda_1,\lambda_2) =  \norm{\bz- \betamin \bomeg }^2 + \lambda_1 (\betamin-1) -\lambda_2 \betamin
    \end{equation*}
    where $\bz = \by-  \alpha \Gammabt_2(\kappa, \phi) ^\trpose \ba(\pphat)\sqrt{E_s}$ and $\bomeg =  \alpha\Gammabt_1(\kappa, \phi) ^\trpose \ba(\pphat)\sqrt{E_s}$. From Karush-Kuhn-Tucker conditions, the closed-form estimate of $\betamin$ in \eqref{eq_ml_prop2} can be obtained from the following set of equations:
    \begin{align}\label{eq_betaminhat}
        \lambda_1 -\lambda_2 + 2\betaminhat \bomeg^\hermit \bomeg &= \bz^\hermit \bomeg + \bomeg^\hermit \bz \\ \label{eq_betaminhat2}
        \lambda_1(\betaminhat-1) = 0, \,~ &\lambda_2 \betaminhat = 0.
    \end{align}
    Hence, we can conclude that $\betaminhat$ admits one of the three alternative forms: $\betaminhat = 1$,  $\betaminhat = 0$, or $\betaminhat = \text{Re}\{\bz^\hermit \bomeg \}/(\bomeg^\hermit \bomeg)$.
    Among these three solutions, the one which yields the smallest objective is chosen.
    \item \textit{Update $\kappa$ and $\phi$}: Given $\alpha$ and $\betamin$, we can estimate $\kappa$ and $\phi$ via a 2-D search:
    \begin{align}\label{eq_ml_prop3}
&(\kappahat, \phihat) = \nonumber \\ 
   &\argmin_{\substack{\kappa\in [0,\rev{\kappamax}) \\ \phi \in [0,2\pi)}} \norm{\yy -  \alpha\left( \betamin \Gammabt_1(\kappa, \phi) + \Gammabt_2(\kappa, \phi) \right)^\trpose \ba(\pphat)\sqrt{E_s}}
\end{align}
\rev{for some finite $\kappamax$.}
\end{itemize}

\subsubsection{Step 2: Refinement of UE Location}
As the output of this alternating procedure, we obtain the estimates of $\overline{\alpha}$, $\bbetamin$, $\overline{\kappa}$, and $\overline{\phi}$. By plugging the estimates of $\bbetamin$, $\overline{\kappa}$, and $\overline{\phi}$ (representing the \textit{calibrated} RIS parameters) back into the ML estimator in \eqref{eq_ml_prop}, the estimate of the UE location $\bbp$ can be refined
via Algorithm~\ref{alg_jacobi}. The overall algorithm for joint UE localization and online RIS calibration is summarized in Algorithm~\ref{alg_loc_est}.

\begin{algorithm}[t]
	\caption{AML Algorithm for Joint UE Localization and Online RIS Calibration}
	\label{alg_loc_est}
	\begin{algorithmic}[1]
	    \State \textbf{Input:} Observation $\yy$ in \eqref{eq_y_new}, \rev{$\Tthr,\Jmax$}.
	    \State \textbf{Output:} Location estimate $\pphat$, channel gain estimate $\alphahat$ and estimates of RIS amplitude model parameters $\zetabhat = [\betaminhat, \kappahat, \phihat]^\trpose$.
	    \State \textbf{Step~0:} \textit{Initialization of UE Location} 
	    \begin{enumerate}[label=(\alph*)]
	        \item Set $\betamin = 1$.
	        \item Compute the initial location estimate $\pphat$ using Algorithm~\ref{alg_jacobi}.
	    \end{enumerate}
	    
	    \State  \textbf{Step~1:} \textit{Alternating Iterations for Online RIS Calibration} 
	    \State \rev{Set $m = 0$}.
	    \While {the objective function in \eqref{eq_ml_prop2} does not converge \rev{and $m< \Jmax$}}
	   
	    \State Update the channel estimate $\alphahat$ via \eqref{eq_hhhat}.
	    \State Update $\betaminhat$ via \eqref{eq_betaminhat} \rev{and \eqref{eq_betaminhat2}}.
	    \State Update $\kappahat$ and $\phihat$ via \eqref{eq_ml_prop3}.
	    \State \rev{Set $m = m + 1$}.
	    \EndWhile
	    \State  \textbf{Step~2:} \textit{Refinement of UE Location with Calibrated RIS Model}
	    \begin{enumerate}[label=(\alph*)]
	    \item Use Algorithm~\ref{alg_jacobi} to estimate the UE location, $\pphat$, and the channel gain, $\alphahat$, from \eqref{eq_ml_prop} by plugging the estimates $\betaminhat$, $\kappahat$ and $\phihat$ obtained at the output of \textbf{Step~1}.
	    \end{enumerate}
	\end{algorithmic}
	\normalsize
\end{algorithm}

\begin{rem}
If we skip Step 1 of Algorithm~\ref{alg_loc_est} (i.e., online RIS calibration) and choose $\betaminhat = 1$ (i.e., uncalibrated, unit-amplitude RIS model in \eqref{eq_wt_assumed}), the AML algorithm gives the same estimate as the AMML algorithm.
\end{rem}

\subsection{\rev{Complexity and Convergence of Algorithm~\ref{alg_loc_est}}}\label{sec:alg2_comp_conv}
\rev{We evaluate the complexity of Algorithm~\ref{alg_loc_est} by assuming that the search intervals for $\kappa$ and $\phi$ are discretized into grids of size $L$ each. If $T<\Tthr$, the overall cost of Algorithm~\ref{alg_loc_est} is simply equal to $\mathcal{O}\left(TM \max\{K,L\}^2\right)$, where $K$ is defined as in Sec.~\ref{sec:alg1_comp_conv}. If $T\geq \Tthr$, the computational cost becomes $\mathcal{O}\left(TM \max\{K(2N+1),L^2\}\right)$. The details can be found in Appendix~\ref{sec:AppCompCostAlg2}}.

\rev{The convergence of Algorithm~\ref{alg_loc_est} can be analyzed by noting that if the solutions to the subproblems of \eqref{eq_ml_prop2} via the alternating iterations in Step~1 are unique (i.e., $\alpha$, $\betamin$ and $(\kappa, \phi)$ are obtained uniquely in \eqref{eq_hhhat}, \eqref{eq_betaminhat}-\eqref{eq_betaminhat2} and \eqref{eq_ml_prop3}, respectively), then Step~1 converges to a stationary point of \eqref{eq_ml_prop2} \cite[Prop.~1]{seqOpt_TSP_2018}. Due to the closed-form solution in \eqref{eq_hhhat}, the uniqueness of $\alpha$ is guaranteed, while for $\betamin$ and $(\kappa, \phi)$ the uniqueness is satisfied with high probability via random RIS phase configurations, following the arguments in \cite[Sec.~IV-B]{RIS_loc_2021_TWC}.}


\subsection{ Scenario-III: Known RIS Amplitude Model with Known Parameters}

In this scenario, the unknown system parameters are given by $\bet = [\text{Re}(\alpha)\, \text{Im}(\alpha) \, \bp^{\trpose} \,]^{\trpose}$ and the  values of the RIS related parameters are perfectly known. For this scenario, the FIM, $\bJ(\bet)\in\mathbb{R}^{5\times 5}$, can be computed as in \eqref{eq:FIMCaseII}. In addition, the derivatives presented in Appendix \ref{sec:AppB} can be used by replacing $\tilde{w}_{t,m}$'s with $w_{t,m}$'s. 

To estimate the UE location, we can employ the same AML algorithm, Algorithm~\ref{alg_loc_est}, as used in Scenario-II.
As the  values of $\bbetamin, \overline{\kappa}$, and $\overline{\phi}$ are available, we can skip Step~0 and Step~1 of Algorithm~\ref{alg_loc_est} and run Step~2 with the known values of the RIS amplitude model parameters.


\section{Numerical Results}\label{sec:Nume}
	
	In this section, we first present numerical examples for evaluating the theoretical 
	bounds in three different scenarios, and then compare the performance of the AMML and AML estimators against the theoretical bounds.

	\subsection{Simulation Setup}\label{sec_sim}
	We consider an RIS with $M = 50\times 50$ elements, where the inter-element spacing is $\lambda/2$ and the area of each element is $A = \lambda^2/4$ \cite{Shaban2021}. The carrier frequency is equal to $f_c = 28\,$ GHz \rev{and the bandwidth is set to $1$ MHz}. The RIS is modeled to lie in the X-Y plane with $\bp_{\text{RIS}}=[0\, 0\, 0]^{\trpose}$. Moreover, for the RIS elements, $\theta_{t,m}$ values are generated uniformly and independently between $-\pi$ and $\pi$. \rev{In accordance with the Fresnel near-field region defined in \eqref{eq_dmax}, whose boundaries are $1.40$ and $26.79$ meters,} the BS and \rev{UE are} located at $\bp_{\text{BS}}=5.77\times 
	[-1\, 1\, 1]^{\trpose}$ and \rev{$\bbp=2.89\times 
	[1\, 1\, 1]^{\trpose}$ meters, respectively, leading to $10$ and $5$ meters of distance to the RIS}. We set the number of transmissions to $T = 200$, \rev{yielding a total duration of $0.2$ ms}. For simplicity, we assume that $s_t = \sqrt{E_s}$ for any $t$. Also, the  SNR is defined as
	\begin{gather}\label{eq_snr}
		\text{SNR} = \frac{E_s \abs{\overline{\alpha}}^2}{T N_0} \sum_{t=1}^{T}  \abs{\bb^{\trpose}(\bbp) \bw_t}^2\,.
	\end{gather}
	To solve \eqref{eq_etabar3} for the LB computation, we employ the GlobalSearch algorithm of MATLAB by providing $\bbp$ as the initial vector. \rev{For the Jacobi-Anger approximation in \eqref{eq:JacobiAnger}, $\frac{2\pi}{\lambda} \qmax \sin \vartheta $ in \eqref{eq_Nmin} evaluates to $90.7$ for the considered setup, while we set $N=50$, which turns out to be sufficient for the proposed algorithms to converge to their respective theoretical bounds. Furthermore, we set $\Tthr = 2N+1$, $\kappamax = 5$ and $\Imax = \Jmax = 5$ in Algorithm~\ref{alg_jacobi} and Algorithm~\ref{alg_loc_est}. Since $T\geq \Tthr$, Algorithm~\ref{alg_jacobi} chooses the Jacobi-Anger based branch}. \rev{To evaluate the performance of the branch $T < \Tthr$, we also show simulation results for $T=10$.}
	
	\subsection{Results and Discussions}\label{sec_results_disc}
	
		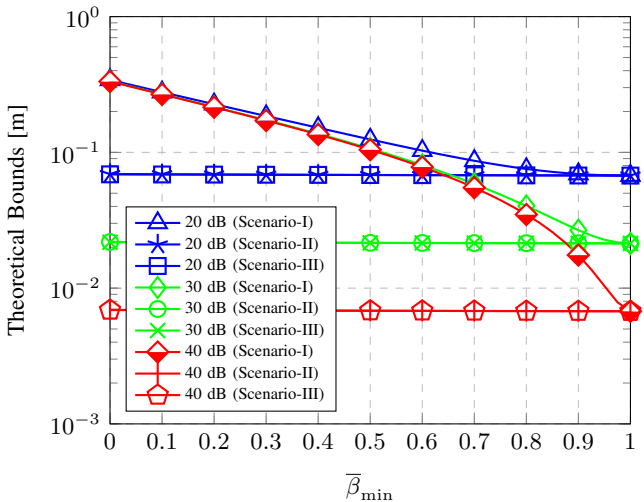
\begin{figure}
		\centering
		\begin{tikzpicture}
			\begin{semilogyaxis}[
				width=8.5cm,
                height=7cm,
				legend style={nodes={scale= 0.7, transform shape},at={(1,1)},anchor=north east}, 
				legend cell align={left},
				legend image post style={mark indices={}},
				xlabel={$\bbetamin$},
				ylabel={Theoretical Bounds [m]},
				xmin=0, xmax=1,
				ymin=0.001, ymax=1,
				xtick={0, 0.1, 0.2, 0.3, 0.4, 0.5, 0.6, 0.7, 0.8, 0.9, 1},
				ytick={0.001,0.01,0.1,1},
				legend pos=south west,
				ymajorgrids=true,
				xmajorgrids=true,
				grid style=dashed,
				]
				
				\addplot[thick,
				color=blue,
				mark = triangle,
				mark indices={1,5,9,13,17,21,25,29,33,37,41},
				mark options={solid},			
				mark size = 4pt,
				]
				coordinates {
            (1,0.067478)(0.975,0.067562)(0.95,0.067883)(0.925,0.068448)(0.9,0.069297)(0.875,0.070429)(0.85,0.071845)(0.825,0.073569)(0.8,0.075556)(0.775,0.077848)(0.75,0.080514)(0.725,0.083446)(0.7,0.086717)(0.675,0.090356)(0.65,0.094245)(0.625,0.098448)(0.6,0.10303)(0.575,0.10795)(0.55,0.11313)(0.525,0.11873)(0.5,0.12456)(0.475,0.13078)(0.45,0.13747)(0.425,0.14442)(0.4,0.15173)(0.375,0.15945)(0.35,0.16766)(0.325,0.17629)(0.3,0.18527)(0.275,0.1948)(0.25,0.2049)(0.225,0.2155)(0.2,0.22658)(0.175,0.23838)(0.15,0.25077)(0.125,0.26397)(0.1,0.27772)(0.075,0.29248)(0.05,0.30799)(0.025,0.32461)(0,0.34219)
				};
				
				\addplot[thick, 
				color=blue,
				mark = star,
				mark indices={1,5,9,13,17,21,25,29,33,37,41},
				mark options={solid},
				mark size = 4pt
				]
				coordinates {
             (1,0.067529)(0.975,0.067646)(0.95,0.067666)(0.925,0.067686)(0.9,0.067708)(0.875,0.06773)(0.85,0.067753)(0.825,0.067777)(0.8,0.067802)(0.775,0.067828)(0.75,0.067854)(0.725,0.067882)(0.7,0.06791)(0.675,0.067939)(0.65,0.06797)(0.625,0.068001)(0.6,0.068033)(0.575,0.068066)(0.55,0.0681)(0.525,0.068135)(0.5,0.068172)(0.475,0.068209)(0.45,0.068247)(0.425,0.068287)(0.4,0.068327)(0.375,0.068368)(0.35,0.068411)(0.325,0.068455)(0.3,0.0685)(0.275,0.068547)(0.25,0.068595)(0.225,0.068645)(0.2,0.068696)(0.175,0.06875)(0.15,0.068805)(0.125,0.068864)(0.1,0.068925)(0.075,0.068989)(0.05,0.069056)(0.025,0.069128)(0,0.069204)
				};
				
				\addplot[thick, 
				color=blue,
				mark = square,
				mark indices={1,5,9,13,17,21,25,29,33,37,41},
				mark options={solid},
				mark size = 3pt,
				]
				coordinates {
					    (1,0.067478)(0.975,0.067503)(0.95,0.06753)(0.925,0.067557)(0.9,0.067584)(0.875,0.067613)(0.85,0.067642)(0.825,0.067673)(0.8,0.067704)(0.775,0.067735)(0.75,0.067768)(0.725,0.067801)(0.7,0.067836)(0.675,0.06787)(0.65,0.067906)(0.625,0.067942)(0.6,0.067979)(0.575,0.068017)(0.55,0.068055)(0.525,0.068094)(0.5,0.068134)(0.475,0.068173)(0.45,0.068214)(0.425,0.068254)(0.4,0.068295)(0.375,0.068336)(0.35,0.068376)(0.325,0.068417)(0.3,0.068458)(0.275,0.068499)(0.25,0.068539)(0.225,0.068579)(0.2,0.068619)(0.175,0.068657)(0.15,0.068696)(0.125,0.068733)(0.1,0.06877)(0.075,0.068806)(0.05,0.068841)(0.025,0.068874)(0,0.068907)

				};

				\addplot[thick,
				color=green,
				mark = diamond,
				mark indices={1,5,9,13,17,21,25,29,33,37,41},
				mark options={solid},
				mark size = 4pt,
				]
				coordinates {
			    (1,0.021338)(0.975,0.021712)(0.95,0.022735)(0.925,0.024413)(0.9,0.026748)(0.875,0.029589)(0.85,0.032832)(0.825,0.036458)(0.8,0.040306)(0.775,0.044436)(0.75,0.04894)(0.725,0.053598)(0.7,0.058528)(0.675,0.06376)(0.65,0.069116)(0.625,0.074694)(0.6,0.080588)(0.575,0.086725)(0.55,0.093035)(0.525,0.099699)(0.5,0.10649)(0.475,0.11363)(0.45,0.12118)(0.425,0.12892)(0.4,0.13696)(0.375,0.14537)(0.35,0.15421)(0.325,0.16344)(0.3,0.17296)(0.275,0.183)(0.25,0.19358)(0.225,0.20461)(0.2,0.21609)(0.175,0.22827)(0.15,0.241)(0.125,0.25452)(0.1,0.26855)(0.075,0.28357)(0.05,0.29931)(0.025,0.31614)(0,0.33391)
				};
				
				\addplot[thick,
				color=green,
				mark = o,
				mark indices={1,5,9,13,17,21,25,29,33,37,41},
				mark options={solid},
				mark size = 3pt,
				]
				coordinates {
			      (1,0.021354)(0.975,0.021391)(0.95,0.021398)(0.925,0.021404)(0.9,0.021411)(0.875,0.021418)(0.85,0.021425)(0.825,0.021433)(0.8,0.021441)(0.775,0.021449)(0.75,0.021457)(0.725,0.021466)(0.7,0.021475)(0.675,0.021484)(0.65,0.021494)(0.625,0.021504)(0.6,0.021514)(0.575,0.021524)(0.55,0.021535)(0.525,0.021546)(0.5,0.021558)(0.475,0.02157)(0.45,0.021582)(0.425,0.021594)(0.4,0.021607)(0.375,0.02162)(0.35,0.021634)(0.325,0.021647)(0.3,0.021662)(0.275,0.021676)(0.25,0.021692)(0.225,0.021707)(0.2,0.021724)(0.175,0.021741)(0.15,0.021758)(0.125,0.021777)(0.1,0.021796)(0.075,0.021816)(0.05,0.021838)(0.025,0.02186)(0,0.021884)

				};

				\addplot[thick,
				color=green,
				mark = x,
				mark indices={1,5,9,13,17,21,25,29,33,37,41},
				mark options={solid},
				mark size = 4pt,
				]
				coordinates {
				 (1,0.021338)(0.975,0.021346)(0.95,0.021355)(0.925,0.021363)(0.9,0.021372)(0.875,0.021381)(0.85,0.02139)(0.825,0.0214)(0.8,0.02141)(0.775,0.02142)(0.75,0.02143)(0.725,0.021441)(0.7,0.021451)(0.675,0.021463)(0.65,0.021474)(0.625,0.021485)(0.6,0.021497)(0.575,0.021509)(0.55,0.021521)(0.525,0.021533)(0.5,0.021546)(0.475,0.021558)(0.45,0.021571)(0.425,0.021584)(0.4,0.021597)(0.375,0.02161)(0.35,0.021623)(0.325,0.021635)(0.3,0.021648)(0.275,0.021661)(0.25,0.021674)(0.225,0.021687)(0.2,0.021699)(0.175,0.021711)(0.15,0.021724)(0.125,0.021735)(0.1,0.021747)(0.075,0.021758)(0.05,0.021769)(0.025,0.02178)(0,0.02179)

				};

				\addplot[thick,
				color=red,
				mark = halfsquare*,
				mark options={solid},
				mark indices={1,5,9,13,17,21,25,29,33,37,41},
				mark size = 4pt,
				]
				coordinates {
				   (1,0.0067478)(0.975,0.0078792)(0.95,0.01038)(0.925,0.013679)(0.9,0.017515)(0.875,0.021611)(0.85,0.025876)(0.825,0.030346)(0.8,0.034874)(0.775,0.039572)(0.75,0.044569)(0.725,0.049636)(0.7,0.054919)(0.675,0.060461)(0.65,0.066079)(0.625,0.071888)(0.6,0.077989)(0.575,0.084309)(0.55,0.09078)(0.525,0.097592)(0.5,0.10451)(0.475,0.11177)(0.45,0.11943)(0.425,0.12727)(0.4,0.1354)(0.375,0.14388)(0.35,0.1528)(0.325,0.1621)(0.3,0.17168)(0.275,0.18178)(0.25,0.19241)(0.225,0.20349)(0.2,0.21501)(0.175,0.22723)(0.15,0.24001)(0.125,0.25356)(0.1,0.26762)(0.075,0.28267)(0.05,0.29843)(0.025,0.31528)(0,0.33307)
				};
				
				\addplot[thick,
				color=red,
				mark = |,
				mark options={solid},
				mark indices={1,5,9,13,17,21,25,29,33,37,41},
				mark size = 4pt,
				]
				coordinates {
					  (1,0.0067529)(0.975,0.0067646)(0.95,0.0067666)(0.925,0.0067686)(0.9,0.0067708)(0.875,0.006773)(0.85,0.0067753)(0.825,0.0067777)(0.8,0.0067802)(0.775,0.0067828)(0.75,0.0067854)(0.725,0.0067882)(0.7,0.006791)(0.675,0.0067939)(0.65,0.006797)(0.625,0.0068001)(0.6,0.0068033)(0.575,0.0068066)(0.55,0.00681)(0.525,0.0068135)(0.5,0.0068172)(0.475,0.0068209)(0.45,0.0068247)(0.425,0.0068287)(0.4,0.0068327)(0.375,0.0068368)(0.35,0.0068411)(0.325,0.0068455)(0.3,0.00685)(0.275,0.0068547)(0.25,0.0068595)(0.225,0.0068645)(0.2,0.0068696)(0.175,0.006875)(0.15,0.0068805)(0.125,0.0068864)(0.1,0.0068925)(0.075,0.0068989)(0.05,0.0069056)(0.025,0.0069128)(0,0.0069204)

				};

				\addplot[thick,
				color=red,
				mark = pentagon,
				mark options={solid},
				mark indices={1,5,9,13,17,21,25,29,33,37,41},
				mark size = 4pt,
				]
				coordinates {
					     (1,0.0067478)(0.975,0.0067503)(0.95,0.006753)(0.925,0.0067557)(0.9,0.0067584)(0.875,0.0067613)(0.85,0.0067642)(0.825,0.0067673)(0.8,0.0067704)(0.775,0.0067735)(0.75,0.0067768)(0.725,0.0067801)(0.7,0.0067836)(0.675,0.006787)(0.65,0.0067906)(0.625,0.0067942)(0.6,0.0067979)(0.575,0.0068017)(0.55,0.0068055)(0.525,0.0068094)(0.5,0.0068134)(0.475,0.0068173)(0.45,0.0068214)(0.425,0.0068254)(0.4,0.0068295)(0.375,0.0068336)(0.35,0.0068376)(0.325,0.0068417)(0.3,0.0068458)(0.275,0.0068499)(0.25,0.0068539)(0.225,0.0068579)(0.2,0.0068619)(0.175,0.0068657)(0.15,0.0068696)(0.125,0.0068733)(0.1,0.006877)(0.075,0.0068806)(0.05,0.0068841)(0.025,0.0068874)(0,0.0068907)

				};

				\legend{20 dB (Scenario-I), 20 dB (Scenario-II), 20 dB (Scenario-III), 30 dB (Scenario-I), 30 dB (Scenario-II), 30 dB (Scenario-III), 40 dB (Scenario-I), 40 dB (Scenario-II), 40 dB (Scenario-III) }
				
			\end{semilogyaxis}
		\end{tikzpicture}
		\caption{Theoretical bounds versus $\bbetamin$ for SNR = $20\,$dB, $30\,$dB and $40\,$dB when the UE distance is $5$ meters, $\overline{\kappa} = 1.5$ and $\overline{\phi} = 0$.}
		\label{fig:2}
	\end{figure}	
	
	\subsubsection{Theoretical Limits vs. RIS Model Parameters}
	In Fig.~\ref{fig:2}, for all the three scenarios, we show the theoretical bounds as a function of $\bbetamin$  for SNRs of $20$, $30$, and $40$ dB when the UE distance is $5$ meters from the center of the RIS, $\overline{\kappa} = 1.5$, and $\overline{\phi} = 0$.  We observe from the figure that as $\bbetamin$ decreases, i.e., as the mismatch between the true and the assumed models increases, the LB increases and raising the SNR level does not improve the LB values significantly. In addition, the sensitivity to the model mismatch is more pronounced at higher SNRs, while for an SNR of  $20$ dB, the performance is relatively insensitive for $\bbetamin>0.7$. This shows that being unaware of the true RIS amplitude model can constitute a crucial limiting factor for RIS-aided localization at high SNRs. Interestingly, we note that when the true model and the true values of $\bbetamin$, $\overline{\kappa}$ and $\overline{\phi}$ are known, the value of $\bbetamin$ does not influence the CRB values notably. In fact, as the CRB values for the Scenario-II and Scenario-III are almost the same, it can be inferred that once we know the true model, knowing the true values of $\bbetamin$, $\overline{\kappa}$ and $\overline{\phi}$ is not critical.

	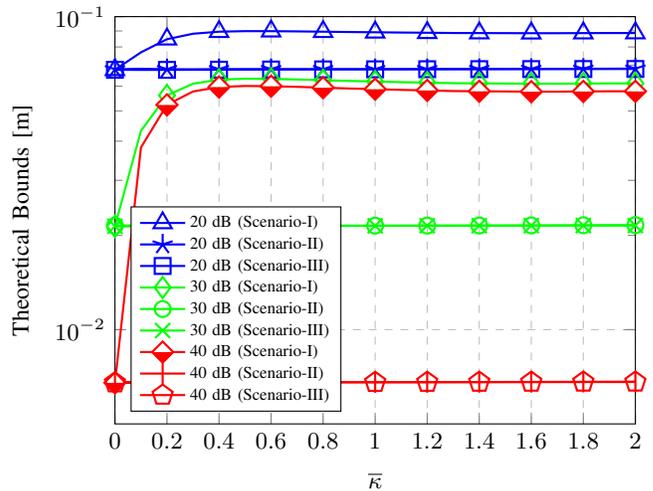
\begin{figure}
		\centering
		\begin{tikzpicture}
			\begin{semilogyaxis}[
				width=8.5cm,
                height=7cm,
				legend style={nodes={scale= 0.7, transform shape},at={(1,1)},anchor=north east}, 
				legend cell align={left},
				legend image post style={mark indices={}},
				xlabel={$\overline{\kappa}$},
				ylabel={Theoretical Bounds [m]},
				xmin=0, xmax=2,
				ymin=0.005, ymax=0.1,
				xtick={0, 0.2, 0.4, 0.6, 0.8, 1, 1.2, 1.4, 1.6, 1.8, 2},
				ytick={0.01,0.1,1},
				legend pos=south west,
				ymajorgrids=true,
				xmajorgrids=true,
				grid style=dashed,
				]
				
				\addplot[thick,
				color=blue,
				mark = triangle,
				mark indices={1,3,5,7,9,11,13,15,17,19,21},
				mark options={solid},
				mark size = 4pt,
				]
				coordinates {
              (0,0.06765)(0.1,0.0769)(0.2,0.084707)(0.3,0.088175)(0.4,0.089516)(0.5,0.089874)(0.6,0.089857)(0.7,0.089716)(0.8,0.089513)(0.9,0.089303)(1,0.089134)(1.1,0.088955)(1.2,0.088822)(1.3,0.088659)(1.4,0.088611)(1.5,0.08857)(1.6,0.088518)(1.7,0.088525)(1.8,0.088587)(1.9,0.088636)(2,0.088633)
				};

				\addplot[thick,
				color=blue,
				mark = star,
				mark indices={1,3,5,7,9,11,13,15,17,19,21},
				mark options={solid},
				mark size = 4pt,
				]
				coordinates {
                 (0,0.067923)(0.1,0.068136)(0.2,0.067951)(0.3,0.067901)(0.4,0.067967)(0.5,0.067985)(0.6,0.067981)(0.7,0.067971)(0.8,0.067963)(0.9,0.067958)(1,0.06796)(1.1,0.067968)(1.2,0.067983)(1.3,0.068003)(1.4,0.068027)(1.5,0.068055)(1.6,0.068086)(1.7,0.068119)(1.8,0.068153)(1.9,0.068188)(2,0.068223)
				};
				\addplot[thick,
				color=blue,
				mark = square,
				mark indices={1,3,5,7,9,11,13,15,17,19,21},
				mark options={solid},
				mark size = 3pt,
				]
				coordinates {
                 (0,0.06765)(0.1,0.067586)(0.2,0.067604)(0.3,0.067649)(0.4,0.067698)(0.5,0.067742)(0.6,0.06778)(0.7,0.067813)(0.8,0.06784)(0.9,0.067865)(1,0.067886)(1.1,0.067906)(1.2,0.067924)(1.3,0.067941)(1.4,0.067958)(1.5,0.067974)(1.6,0.067989)(1.7,0.068004)(1.8,0.068019)(1.9,0.068033)(2,0.068047)

				};

				\addplot[thick,
				color=green,
				mark = diamond,
				mark indices={1,3,5,7,9,11,13,15,17,19,21},
				mark options={solid},
				mark size = 4pt,
				]
				coordinates {
                 (0,0.021393)(0.1,0.04319)(0.2,0.056004)(0.3,0.061066)(0.4,0.062907)(0.5,0.063335)(0.6,0.063243)(0.7,0.062991)(0.8,0.062655)(0.9,0.062317)(1,0.062044)(1.1,0.061759)(1.2,0.061546)(1.3,0.061286)(1.4,0.061204)(1.5,0.061133)(1.6,0.061045)(1.7,0.061048)(1.8,0.061135)(1.9,0.061204)(2,0.061192)
				};

				\addplot[thick,
				color=green,
				mark = o,
				mark indices={1,3,5,7,9,11,13,15,17,19,21},
				mark options={solid},
			    mark size = 3pt,
				]
				coordinates {
                 (0,0.021479)(0.1,0.021546)(0.2,0.021488)(0.3,0.021472)(0.4,0.021493)(0.5,0.021499)(0.6,0.021497)(0.7,0.021494)(0.8,0.021492)(0.9,0.02149)(1,0.021491)(1.1,0.021493)(1.2,0.021498)(1.3,0.021504)(1.4,0.021512)(1.5,0.021521)(1.6,0.021531)(1.7,0.021541)(1.8,0.021552)(1.9,0.021563)(2,0.021574)
				};

			    \addplot[thick,
				color=green,
				mark = x,
				mark indices={1,3,5,7,9,11,13,15,17,19,21},
				mark options={solid},
				mark size = 4pt,
				]
				coordinates {
                 (0,0.021393)(0.1,0.021373)(0.2,0.021378)(0.3,0.021393)(0.4,0.021408)(0.5,0.021422)(0.6,0.021434)(0.7,0.021444)(0.8,0.021453)(0.9,0.021461)(1,0.021467)(1.1,0.021474)(1.2,0.02148)(1.3,0.021485)(1.4,0.02149)(1.5,0.021495)(1.6,0.0215)(1.7,0.021505)(1.8,0.021509)(1.9,0.021514)(2,0.021518)
				};
			
				\addplot[thick,
				color=red,
				mark = halfsquare*,
				mark indices={1,3,5,7,9,11,13,15,17,19,21},
				mark options={solid},
				mark size = 4pt,
				]
				coordinates {
               (0,0.006765)(0.1,0.038217)(0.2,0.052274)(0.3,0.057658)(0.4,0.059596)(0.5,0.060039)(0.6,0.059935)(0.7,0.059663)(0.8,0.059304)(0.9,0.058943)(1,0.058651)(1.1,0.058346)(1.2,0.058119)(1.3,0.057841)(1.4,0.057752)(1.5,0.057676)(1.6,0.057581)(1.7,0.057583)(1.8,0.057676)(1.9,0.057749)(2,0.057735)
				};

				\addplot[thick,
				color=red,
				mark = |,
				mark indices={1,3,5,7,9,11,13,15,17,19,21},
				mark options={solid},
				mark size = 4pt,
				]
				coordinates {
                 (0,0.0067923)(0.1,0.0068136)(0.2,0.0067951)(0.3,0.0067901)(0.4,0.0067967)(0.5,0.0067985)(0.6,0.0067981)(0.7,0.0067971)(0.8,0.0067963)(0.9,0.0067958)(1,0.006796)(1.1,0.0067968)(1.2,0.0067983)(1.3,0.0068003)(1.4,0.0068027)(1.5,0.0068055)(1.6,0.0068086)(1.7,0.0068119)(1.8,0.0068153)(1.9,0.0068188)(2,0.0068223)
				};
				
				\addplot[thick,
				color=red,
				mark = pentagon,
				mark indices={1,3,5,7,9,11,13,15,17,19,21},
				mark options={solid},
				mark size = 4pt,
				]
				coordinates {
                 (0,0.006765)(0.1,0.0067586)(0.2,0.0067604)(0.3,0.0067649)(0.4,0.0067698)(0.5,0.0067742)(0.6,0.006778)(0.7,0.0067813)(0.8,0.006784)(0.9,0.0067865)(1,0.0067886)(1.1,0.0067906)(1.2,0.0067924)(1.3,0.0067941)(1.4,0.0067958)(1.5,0.0067974)(1.6,0.0067989)(1.7,0.0068004)(1.8,0.0068019)(1.9,0.0068033)(2,0.0068047)
				};

				\legend{20 dB (Scenario-I), 20 dB (Scenario-II), 20 dB (Scenario-III), 30 dB (Scenario-I), 30 dB (Scenario-II), 30 dB (Scenario-III), 40 dB (Scenario-I), 40 dB (Scenario-II), 40 dB (Scenario-III) }
				
			\end{semilogyaxis}
		\end{tikzpicture}
		\caption{Theoretical bounds versus $\overline{\kappa}$ for SNR = $20\,$dB, $30\,$dB and $40\,$dB when the UE distance is $5$ meters, $\bbetamin = 0.7$ and $\overline{\phi} = 0$.}
		\label{fig:3}
	\end{figure}	
	
	In Fig.~\ref{fig:3}, for all the three scenarios, the theoretical bounds are plotted versus $\kappa$ for SNRs of $20$, $30$, and $40$ dB when the UE distance is $5$ meters from the center of the RIS, $\bbetamin = 0.7$, and $\overline{\phi} = 0$. Similar to Fig.~\ref{fig:2}, the CRB values for Scenario-II and Scenario-III are almost the same. We also observe that as $\kappa$ approaches $0$, i.e., as the mismatch between the true and assumed models decreases, the LB and the CRB values for Scenario-II and Scenario-III become closer to each other similarly to Fig.~\ref{fig:2}. In addition, as the SNR increases, the performance loss due to the mismatch becomes more significant. Moreover, increasing $\overline{\kappa}$ beyond $\overline{\kappa} = 0.4$ does not have any notable impacts on the LB values.

		\begin{figure}
		\centering
		\begin{tikzpicture}
			\begin{semilogyaxis}[
				width=8.5cm,
                height=7cm,
				legend style={nodes={scale= 0.7, transform shape},at={(1,1)},anchor=north east}, 
				legend cell align={left},
				legend image post style={mark indices={}},
				xlabel={Number of RIS elements},
				ylabel={Theoretical Bounds [m]},
				xmin=900, xmax=4225,
				ymin=0.01, ymax=1,
				xtick={1000, 1500, 2000, 2500, 3000,  3500, 4000},
				ytick={0.01,0.1, 1},
				legend pos=south west,
				ymajorgrids=true,
				xmajorgrids=true,
				grid style=dashed,
				]
				\addplot[thick, 
				color=blue,
				mark = triangle,
				mark options={solid},
				mark size = 4pt,
				]
				coordinates {
             (900,0.89716)(1225,0.51232)(1600,0.32949)(2025,0.26863)(2500,0.22298)(3025,0.20489)(3600,0.18468)(4225,0.16349)
				};
				
				\addplot[thick, 
				color=blue,
				mark = square,
				mark options={solid},
				mark size = 3pt,
				]
				coordinates {
            (900,0.18899)(1225,0.13817)(1600,0.1048)(2025,0.082871)(2500,0.066931)(3025,0.055725)(3600,0.046987)(4225,0.040188)
				};

				\addplot[thick, 
				color=blue,
				mark = star,
				mark options={solid},
				mark size = 4pt,
				]
				coordinates {
             (900,0.18503)(1225,0.13688)(1600,0.10438)(2025,0.082485)(2500,0.066612)(3025,0.055178)(3600,0.04626)(4225,0.039495)
				};

				\addplot[thick, 
				color=red,
				mark = halfsquare*,
				mark options={solid},
				mark size = 4pt,
				]
				coordinates {
         (900,0.30633)(1225,0.20112)(1600,0.14171)(2025,0.11463)(2500,0.093371)(3025,0.082461)(3600,0.072147)(4225,0.063122)
				};
				
			    \addplot[thick, 
				color=red,
				mark = |,
				mark options={solid},
				mark size = 4pt,
				]
				coordinates {
             (900,0.18781)(1225,0.13823)(1600,0.10496)(2025,0.082909)(2500,0.067027)(3025,0.055597)(3600,0.04669)(4225,0.039879)
				};

				 \addplot[thick, 
				color=red,
				mark = pentagon,
				mark options={solid},
				mark size = 4pt,
				]
				coordinates {
                (900,0.18669)(1225,0.13766)(1600,0.10461)(2025,0.08253)(2500,0.066777)(3025,0.055381)(3600,0.046465)(4225,0.039673)

				};

			\legend{$\bbetamin = 0.3$ (Scenario-I), $\bbetamin = 0.3$ (Scenario-II), $\bbetamin = 0.3$ (Scenario-III),
			$\bbetamin = 0.7$ (Scenario-I), $\bbetamin = 0.7$ (Scenario-II), $\bbetamin = 0.7$ (Scenario-III)
			}

			\end{semilogyaxis}
		\end{tikzpicture}
		\caption{Theoretical bounds versus number of RIS elements and  for $\bbetamin \in \{ 0.3,0.7\}$ when the UE distance is $5$ meters, SNR = 20 dB, $\overline{\kappa} = 1.5$ and $\overline{\phi} = 0$.}
		\label{fig:4}
	\end{figure}
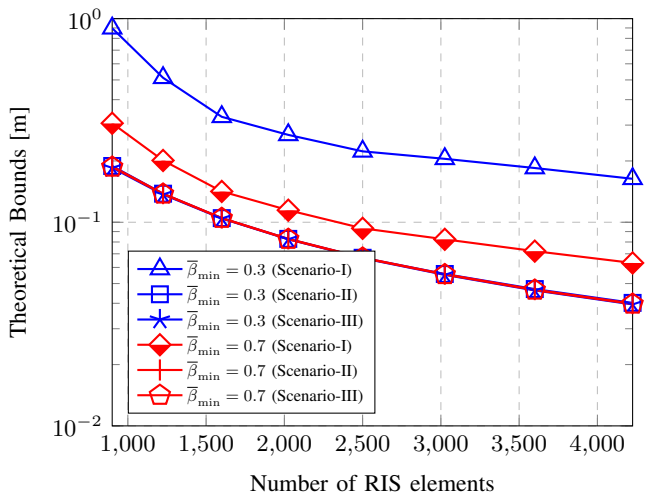
	
	\subsubsection{Effect of RIS Size}
	To investigate the effects of the number of RIS elements, the average LB and CRB values are plotted versus the RIS size in Fig.~\ref{fig:4} for all the three scenarios by averaging over $200$ different random phase profiles for the RIS elements, where the SNR is $20\,$dB, the UE distance is $5$ meters, $\bbetamin\in\{0.3, 0.7\}$, $\overline{\kappa} = 1.5$, and $\overline{\phi} = 0$. We observe that as the RIS size or $\bbetamin$ increases, we obtain lower LB values in general. In addition, the curves for different $\bbetamin$ values are almost parallel. We also note the significant price paid due to the model mismatch: With the perfect knowledge of the RIS model and with $1000$ elements, a similar performance can only be attained using a RIS with $4225$ elements when $\bbetamin=0.3$ under the model mismatch.
	
	\begin{figure}
		\centering
		\begin{tikzpicture}
			\begin{semilogyaxis}[
				width=8.5cm,
                height=7cm,
				legend style={nodes={scale= 0.7, transform shape},at={(1,1)},anchor=north east}, 
				legend cell align={left},
				legend image post style={mark indices={}},
				xlabel={SNR (dB)},
				ylabel={RMSE [m]},
				xmin=-10, xmax=40,
				ymin=0.005, ymax=25,
				xtick={-10, -5, 0, 5, 10, 15, 20, 25, 30, 35, 40},
				ytick={0.01,0.1,1,10},
				legend pos=south west,
				ymajorgrids=true,
				xmajorgrids=true,
				grid style=dashed,
				]
				
				\addplot[thick, dashed,
				color=blue,
				mark = triangle,
				mark options={solid},
				mark size = 4pt,
				]
				coordinates {
                  (-10,2.1561)(-5,1.2155)(-3,0.96761)(0,0.68897)(3,0.49328)(5,0.39692)(10,0.23928)(12,0.20035)(15,0.15982)(20,0.12459)(25,0.11114)(30,0.10654)(35,0.10504)(40,0.10456)
				};
				
				\addplot[thick, 
				color=blue,
				mark = square,
				mark options={solid},
				mark size = 3pt,
				]
				coordinates {
                  (-10,24.1254)(-5,15.7391)(-3,9.5384)(0,1.2433)(3,0.4648)(5,0.36527)(10,0.23871)(12,0.2027)(15,0.16171)(20,0.13411)(25,0.12283)(30,0.11902)(35,0.11755)(40,0.11778)
				};

				\addplot[thick, dashed,
				color=green,
				mark = o,
				mark options={solid},
				mark size = 3pt,
				]
				coordinates {
                (-10,2.1558)(-5,1.2123)(-3,0.96295)(0,0.68172)(3,0.48262)(5,0.38336)(10,0.21558)(12,0.17124)(15,0.12123)(20,0.068172)(25,0.038336)(30,0.021558)(35,0.012123)(40,0.0068172)
				};
				
				\addplot[thick, 
				color= green,
				mark = +,
				mark options={solid},
				mark size = 4pt,
				]
				coordinates {
                  (-10,23.6576)(-5,15.6283)(-3,9.4396)(0,1.0576)(3,0.51713)(5,0.39085)(10,0.21561)(12,0.17193)(15,0.1221)(20,0.067892)(25,0.038273)(30,0.021933)(35,0.012012)(40,0.0080385)
				};

				\addplot[thick, dashed,
				color=red,
				mark = diamond,
				mark options={solid},
				mark size = 4pt,
				]
				coordinates {
                 (-10,2.1546)(-5,1.2116)(-3,0.96241)(0,0.68134)(3,0.48235)(5,0.38314)(10,0.21546)(12,0.17114)(15,0.12116)(20,0.068134)(25,0.038314)(30,0.021546)(35,0.012116)(40,0.0068134)
				};

				\addplot[thick, 
				color=red,
				mark = x,
				mark options={solid},
				mark size = 4pt,
				]
				coordinates {
                 (-10,23.8779)(-5,17.1544)(-3,11.6126)(0,1.7145)(3,0.6018)(5,0.38118)(10,0.21659)(12,0.17134)(15,0.12262)(20,0.067708)(25,0.038212)(30,0.0218)(35,0.011591)(40,0.0076201)
				};

			\legend{LB [m] (Scenario-I), AMML [m] (Scenario-I), CRB [m] (Scenario-II), AML [m] (Scenario-II),  CRB [m] (Scenario-III), AML [m] (Scenario-III) }
				
			\end{semilogyaxis}

		\end{tikzpicture}
		\caption{Performance of the AMML and the AML algorithms along with the corresponding theoretical bounds versus SNR (dB)  when the UE distance is $5$ meters, $\overline{\kappa} = 1.5$, $\bbetamin = 0.5$  and $\overline{\phi} = 0$.}
		\label{fig:5}
	\end{figure}
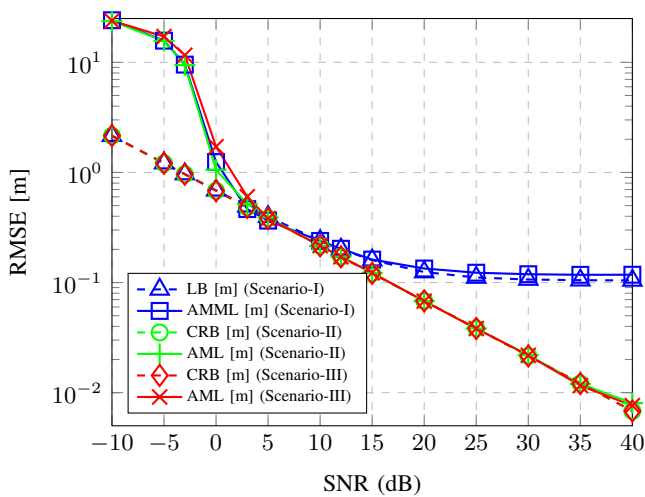

	\subsubsection{Performance of Algorithm~\ref{alg_jacobi} and Algorithm~\ref{alg_loc_est} vs. SNR}\label{sec_num_snr}
    To provide a comparative analysis of the three scenarios, in Fig.~\ref{fig:5}, the performances of the AMML algorithm in Algorithm~\ref{alg_jacobi} and the AML algorithm in Algorithm~\ref{alg_loc_est} are plotted versus SNR, and compared with the corresponding LB and CRB values when $\bbetamin = 0.5$. It is noted that the AMML and AML algorithms achieve the theoretical bounds in the high SNR regime in all the three scenarios. This indicates that the proposed Jacobi-Anger expansion based low-complexity approach in Algorithm~\ref{alg_jacobi} can successfully provide near-optimal solutions to the near-field localization problem in \eqref{eq_mml_est}. Moreover, by comparing the AMML and the AML curves at high SNRs, we observe that the AML algorithm completely recovers the performance loss due to model misspecification, which verifies the effectiveness of the online RIS calibration in Step~1 and the UE location refinement in Step~2 of Algorithm~\ref{alg_loc_est}.
    
    	\begin{figure}
		\centering
		\begin{tikzpicture}
			\begin{semilogyaxis}[
				width=8.5cm,
                height=7cm,
				legend style={nodes={scale= 0.7, transform shape},at={(1,1)},anchor=north east}, 
				legend cell align={left},
				legend image post style={mark indices={}},
				xlabel={SNR (dB)},
				ylabel={RMSE [m]},
				xmin=-10, xmax=40,
				ymin=0.005, ymax=25,
				xtick={-10, -5, 0, 5, 10, 15, 20, 25, 30, 35, 40},
				ytick={0.01,0.1,1,10},
				legend pos=north east,
				ymajorgrids=true,
				xmajorgrids=true,
				grid style=dashed,
				]
				
				\addplot[thick, dashed,
				color=blue,
				mark = triangle,
				mark options={solid},
				mark size = 4pt,
				]
				coordinates {
                  (-10,2.1561)(-5,1.2155)(-3,0.96761)(0,0.68897)(3,0.49328)(5,0.39692)(10,0.23928)(12,0.20035)(15,0.15982)(20,0.12459)(25,0.11114)(30,0.10654)(35,0.10504)(40,0.10456)
				};
				
				\addplot[thick, 
				color=blue,
				mark = square,
				mark options={solid},
				mark size = 3pt,
				]
				coordinates {
                  (-10,24.1254)(-5,15.7391)(-3,9.5384)(0,1.2433)(3,0.4648)(5,0.36527)(10,0.23871)(12,0.2027)(15,0.16171)(20,0.13411)(25,0.12283)(30,0.11902)(35,0.11755)(40,0.11778)
				};
				
				\addplot[thick, 
				color=blue,
				mark = +,
				mark options={solid},
				mark size = 4pt,
				]
				coordinates {
                   (-10,2.1536)(-5,1.2111)(-3,0.96198)(0,0.68103)(3,0.48213)(5,0.38297)(10,0.21536)(12,0.17107)(15,0.12111)(20,0.068101)(25,0.038296)(30,0.021535)(35,0.012111)(40,0.0068107)
				};
				
				\addplot[thick,
				color=blue,
				mark = o,
				mark options={solid},
				mark size = 3pt,
				]
				coordinates {
                (-10,0.10429)(-5,0.10429)(-3,0.10429)(0,0.10429)(3,0.10429)(5,0.10429)(10,0.10429)(12,0.10429)(15,0.10429)(20,0.10434)(25,0.10434)(30,0.10434)(35,0.10434)(40,0.10434)
				};

				\addplot[thick, dashed, 
				color=red,
				mark = halfsquare*,
				mark options={solid},
				mark size = 4pt,
				]
				coordinates {
                   (-10,2.1337)(-5,1.2007)(-3,0.95433)(0,0.67671)(3,0.48061)(5,0.38319)(10,0.22014)(12,0.17796)(15,0.13173)(20,0.086743)(25,0.066438)(30,0.058569)(35,0.055851)(40,0.054963)
				};
				
				\addplot[thick, 
				color=red,
				mark = x,
				mark options={solid},
				mark size = 4pt,
				]
				coordinates {
                   (-10,24.0045)(-5,14.5693)(-3,9.0685)(0,1.0625)(3,0.45209)(5,0.3524)(10,0.21869)(12,0.17745)(15,0.13389)(20,0.091605)(25,0.073672)(30,0.066818)(35,0.064019)(40,0.063431)
				};
				
				\addplot[thick, 
				color=red,
				mark = pentagon,
				mark options={solid},
				mark size = 4pt,
				]
				coordinates {
                 (-10,2.133)(-5,1.1995)(-3,0.95277)(0,0.67451)(3,0.47752)(5,0.37931)(10,0.2133)(12,0.16943)(15,0.11995)(20,0.067446)(25,0.037928)(30,0.021328)(35,0.011994)(40,0.0067454)
				};
				
				\addplot[thick,  dashed,
				color=red,
				mark options={solid},
				mark size = 4pt,
				]
				coordinates {
                  (-10,0.054446)(-5,0.054446)(-3,0.054446)(0,0.054446)(3,0.054446)(5,0.054446)(10,0.054446)(12,0.054446)(15,0.054446)(20,0.054548)(25,0.054548)(30,0.054548)(35,0.054548)(40,0.054548)
				};
				
			\legend{LB [m] ($\bbetamin = 0.5$), AMML [m] ($\bbetamin = 0.5$), MCRB [m] ($\bbetamin = 0.5$), Bias [m]($\bbetamin = 0.5$),
			LB [m] ($\bbetamin = 0.7$), AMML [m] ($\bbetamin = 0.7$), MCRB [m] ($\bbetamin = 0.7$), Bias [m] ($\bbetamin = 0.7$),
			}
				
			\end{semilogyaxis}

		\end{tikzpicture}
	\caption{AMML, LB, MCRB, and Bias term versus SNR (dB) for $\bbetamin \in \{ 0.5,0.7\}$ when the UE distance is $5$ meters, $\overline{\kappa} = 1.5$ and $\overline{\phi} = 0$.}
		\label{fig:6}
	\end{figure}
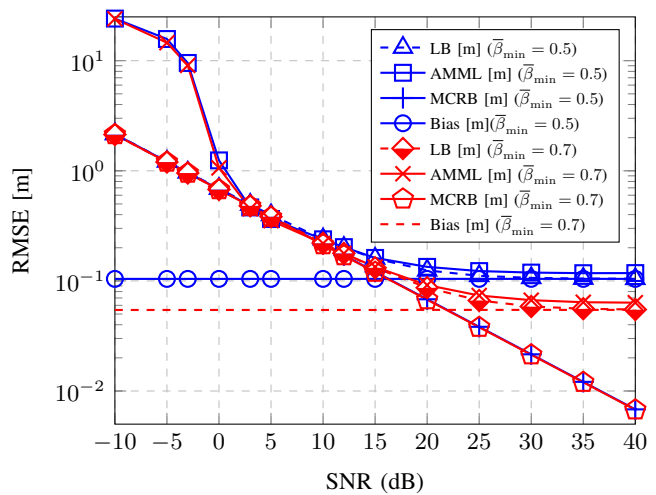

     To explore the asymptotic behavior of the AMML algorithm in Scenario I, its performance with respect to SNR is illustrated in Fig.~\ref{fig:6}   for $\bbetamin = 0.5$ and $0.7$ when the UE distance is $5$ meters. In addition to the performance of the AMML estimator, the LB, the MCRB, and the bias term values are also plotted. We observe that the AMML estimator exhibits three distinct regimes: a low-SNR regime where the AMML is limited by noise peaks and thus far away from the LB; a medium-SNR regime where the AMML is close to the LB, which itself is dominated by the MCRB; and a high-SNR regime, where the AMML and LB are limited by the bias term $(\bet_0)$. 
	
	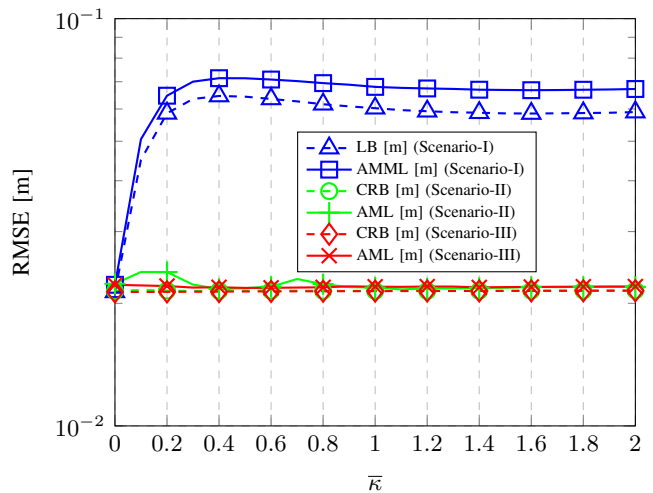
\begin{figure}
		\centering
		\begin{tikzpicture}
			\begin{semilogyaxis}[
				width=8.5cm,
                height=7cm,
                legend style={at={(0.35,0.55)},anchor=west, nodes = {scale = 0.7}},
				legend cell align={left},
				legend image post style={mark indices={}},				xlabel={$\overline{\kappa}$},
				ylabel={RMSE [m]},
				xmin=0, xmax=2,
				ymin=0.01, ymax=0.1,
                xtick={0, 0.2, 0.4, 0.6, 0.8, 1, 1.2, 1.4, 1.6, 1.8, 2},	ytick={0.01,0.1},
				ymajorgrids=true,
				xmajorgrids=true,
				grid style=dashed,
				]
				
				\addplot[thick, dashed,
				color=blue,
				mark = triangle,
				mark options={solid},
				mark indices = {1, 3, 5, 7, 9, 11, 13, 15, 17, 19, 21},
				mark size = 4pt,
				]
				coordinates {
                       (0,0.021338)(0.1,0.045053)(0.2,0.058549)(0.3,0.063507)(0.4,0.064585)(0.5,0.064337)(0.6,0.063521)(0.7,0.062644)(0.8,0.06164)(0.9,0.060901)(1,0.060273)(1.1,0.059624)(1.2,0.059254)(1.3,0.058918)(1.4,0.058752)(1.5,0.058531)(1.6,0.05849)(1.7,0.058552)(1.8,0.058649)(1.9,0.058741)(2,0.058965)
				};
				
				\addplot[thick, 
				color=blue,
				mark = square,
				mark options={solid},
				mark indices = {1, 3, 5, 7, 9, 11, 13, 15, 17, 19, 21},
				mark size = 3pt,
				]
				coordinates {
                     (0,0.022208)(0.1,0.050598)(0.2,0.064685)(0.3,0.07004)(0.4,0.071418)(0.5,0.071393)(0.6,0.070874)(0.7,0.070259)(0.8,0.069458)(0.9,0.06884)(1,0.067997)(1.1,0.067616)(1.2,0.067398)(1.3,0.067215)(1.4,0.066922)(1.5,0.066818)(1.6,0.066731)(1.7,0.066796)(1.8,0.066903)(1.9,0.067004)(2,0.067183)

				};
				
				\addplot[thick,  dashed,
				color=green,
				mark = o,
				mark options={solid},
				mark indices = {1, 3, 5, 7, 9, 11, 13, 15, 17, 19, 21},
				mark size = 3pt,
				]
				coordinates {
                    (0,0.021437)(0.1,0.021559)(0.2,0.021492)(0.3,0.021442)(0.4,0.021466)(0.5,0.021475)(0.6,0.021472)(0.7,0.021466)(0.8,0.02146)(0.9,0.021455)(1,0.021453)(1.1,0.021453)(1.2,0.021456)(1.3,0.02146)(1.4,0.021467)(1.5,0.021475)(1.6,0.021484)(1.7,0.021494)(1.8,0.021504)(1.9,0.021515)(2,0.021525)

				};
				
				\addplot[thick,  
				color=green,
				mark = +,
				mark options={solid},
				mark indices = {1, 3, 5, 7, 9, 11, 13, 15, 17, 19, 21},
				mark size = 4pt,
				]
				coordinates {
                     (0,0.022336)(0.1,0.023881)(0.2,0.023867)(0.3,0.022228)(0.4,0.021697)(0.5,0.02179)(0.6,0.021967)(0.7,0.022948)(0.8,0.022293)(0.9,0.021917)(1,0.021866)(1.1,0.02164)(1.2,0.02176)(1.3,0.021759)(1.4,0.021823)(1.5,0.021836)(1.6,0.021943)(1.7,0.021913)(1.8,0.021951)(1.9,0.021974)(2,0.021943)

				};

				\addplot[thick,  dashed,
				color=red,
				mark = diamond,
				mark options={solid},
				mark indices = {1, 3, 5, 7, 9, 11, 13, 15, 17, 19, 21},
				mark size = 4pt,
				]
				coordinates {
                     (0,0.021338)(0.1,0.021326)(0.2,0.021336)(0.3,0.021352)(0.4,0.021368)(0.5,0.021382)(0.6,0.021394)(0.7,0.021404)(0.8,0.021413)(0.9,0.02142)(1,0.021427)(1.1,0.021432)(1.2,0.021438)(1.3,0.021442)(1.4,0.021447)(1.5,0.021451)(1.6,0.021456)(1.7,0.02146)(1.8,0.021464)(1.9,0.021468)(2,0.021471)

				};
				
				\addplot[thick,  
				color=red,
				mark = x,
				mark options={solid},
				mark indices = {1, 3, 5, 7, 9, 11, 13, 15, 17, 19, 21},
				mark size = 4pt,
				]
				coordinates {
                 (0,0.022208)(0.1,0.022129)(0.2,0.022051)(0.3,0.021862)(0.4,0.021896)(0.5,0.021821)(0.6,0.021832)(0.7,0.021864)(0.8,0.021894)(0.9,0.02201)(1,0.02197)(1.1,0.021951)(1.2,0.021986)(1.3,0.021987)(1.4,0.021884)(1.5,0.021923)(1.6,0.021937)(1.7,0.021961)(1.8,0.021969)(1.9,0.02198)(2,0.021991)

				};

			\legend{LB [m] (Scenario-I), AMML [m] (Scenario-I), CRB [m] (Scenario-II), AML [m] (Scenario-II),  CRB [m] (Scenario-III), AML [m] (Scenario-III) }
			\end{semilogyaxis}

		\end{tikzpicture}
	\caption{Performance of the AMML and the AML algorithms along with the corresponding theoretical bounds versus $\overline{\kappa}$  when the UE distance is $5$ meters, SNR = 30 dB, $\bbetamin = 0.7$, and $\overline{\phi} = 0$.}
	\label{fig:7}
	\end{figure}
	
	\subsubsection{Performance of Algorithm~\ref{alg_jacobi} and Algorithm~\ref{alg_loc_est} vs. RIS Model Parameters}\label{sec_num_ris}
	To investigate the performance of the proposed localization methods under varying values of RIS model parameters, in Fig.~\ref{fig:7}, the RMSEs of the AMML and AML algorithms are evaluated versus $\overline{\kappa}$ when SNR = 30 dB, $\bbetamin = 0.7$, $\overline{\phi}=0$, and the UE distance is $5$ meters.  Similarly, 
	in Fig.~\ref{fig:8}, the performances of the AMML and AML algorithms versus $\bbetamin$ are shown when SNR = 30 dB, $\overline{\kappa} = 1.5$, $\overline{\phi}=0$, and the UE distance is $5$ meters. From Figs.~\ref{fig:7} and \ref{fig:8}, it is noted that for Scenario-II and Scenario-III, the AML algorithm achieves the CRB, which is insensitive to the values of $\overline{\kappa}$ and $\bbetamin$. In addition, a combined evaluation of Fig.~\ref{fig:5}, Fig.~\ref{fig:7} and Fig.~\ref{fig:8} demonstrates that both of the proposed algorithms can attain the corresponding bounds under a wide variety of settings concerning different SNR levels and RIS model parameters.

    	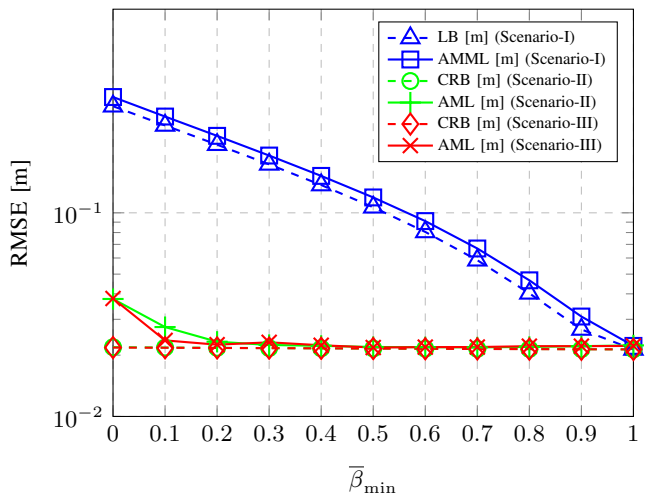
\begin{figure}
		\centering
		\begin{tikzpicture}
			\begin{semilogyaxis}[
				width=8.5cm,
                height=7cm,
                legend style={at={(1,1)},anchor=west, nodes = {scale = 0.7}},
				legend cell align={left},
				legend image post style={mark indices={}},	legend pos = north east,		xlabel={$\bbetamin$},
				ylabel={RMSE [m]},
				xmin=0, xmax=1,
				ymin=0.01, ymax=1,
                xtick={0, 0.1, 0.2, 0.3, 0.4, 0.5, 0.6, 0.7, 0.8, 0.9, 1},	ytick={0.01,0.1},
				ymajorgrids=true,
				xmajorgrids=true,
				grid style=dashed,
				]
				
				\addplot[thick, dashed,
				color=blue,
				mark = triangle,
				mark options={solid},
				mark size = 4pt,
				]
				coordinates {
                           (0,0.33389)(0.1,0.26864)(0.2,0.2161)(0.3,0.17301)(0.4,0.13696)(0.5,0.10652)(0.6,0.080588)(0.7,0.058515)(0.8,0.040247)(0.9,0.026748)(1,0.021338)

				};
				
				\addplot[thick, 
				color=blue,
				mark = square,
				mark options={solid},
				mark size = 3pt,
				]
				coordinates {
                         (0,0.37045)(0.1,0.29705)(0.2,0.23878)(0.3,0.19115)(0.4,0.15182)(0.5,0.11902)(0.6,0.091088)(0.7,0.066818)(0.8,0.046623)(0.9,0.030955)(1,0.022208)

				};
				
				\addplot[thick,  dashed,
				color=green,
				mark = o,
				mark options={solid},
				mark size = 3pt,
				]
				coordinates {
                        (0,0.021884)(0.1,0.021796)(0.2,0.021724)(0.3,0.021662)(0.4,0.021607)(0.5,0.021558)(0.6,0.021514)(0.7,0.021475)(0.8,0.021441)(0.9,0.021411)(1,0.021354)

				};
				
				\addplot[thick,  
				color=green,
				mark = +,
				mark options={solid},
				mark size = 4pt,
				]
				coordinates {
                         (0,0.037754)(0.1,0.027483)(0.2,0.023282)(0.3,0.022442)(0.4,0.022225)(0.5,0.021933)(0.6,0.021889)(0.7,0.021836)(0.8,0.021956)(0.9,0.022143)(1,0.022336)

				};

				\addplot[thick,  dashed,
				color=red,
				mark = diamond,
				mark options={solid},
				mark size = 4pt,
				]
				coordinates {
                         (0,0.02179)(0.1,0.021747)(0.2,0.021699)(0.3,0.021648)(0.4,0.021597)(0.5,0.021546)(0.6,0.021497)(0.7,0.021451)(0.8,0.02141)(0.9,0.021372)(1,0.021338)

				};
				
				\addplot[thick,  
				color=red,
				mark = x,
				mark options={solid},
				mark size = 4pt,
				]
				coordinates {
                     (0,0.037966)(0.1,0.023634)(0.2,0.022518)(0.3,0.023123)(0.4,0.022368)(0.5,0.0218)(0.6,0.021903)(0.7,0.021923)(0.8,0.022137)(0.9,0.022143)(1,0.022208)

				};

			\legend{LB [m] (Scenario-I), AMML [m] (Scenario-I), CRB [m] (Scenario-II), AML [m] (Scenario-II),  CRB [m] (Scenario-III), AML [m] (Scenario-III) }
			\end{semilogyaxis}

		\end{tikzpicture}
	\caption{Performance of the AMML and the AML algorithms along with the corresponding theoretical bounds versus $\bbetamin$  when the UE distance is $5$ meters, SNR = 30 dB, $\overline{\kappa} = 1.5$, and $\overline{\phi} = 0$.}
	\label{fig:8}
	\end{figure}

	\begin{figure}
		\centering
		\begin{tikzpicture}
			\begin{semilogyaxis}[
				width=8.5cm,
                height=7cm,
                legend style={at={(1,1)},anchor=west, nodes = {scale = 0.7}},
				legend cell align={left},
				legend image post style={mark indices={}},	legend pos = north east,		xlabel={Iterations, $k$},
				ylabel={Error Values [m]},
				xmin=0, xmax=10,
				ymin=0.005, ymax=1,
                xtick={0, 1, 2, 3, 4, 5, 6, 7, 8, 9, 10},	ytick={0.001,0.01, 0.1,1},
				ymajorgrids=true,
				xmajorgrids=true,
				grid style=dashed,
				]
				
				\addplot[thick,
				color=blue,
				mark = triangle,
				mark options={solid},
				mark size = 4pt,
				]
				coordinates {
                   (0,0.30801)(1,0.29202)(2,0.26402)(3,0.24002)(4,0.23202)(5,0.22402)(6,0.22003)(7,0.22003)(8,0.22402)(9,0.22402)(10,0.22402)

				};

				\addplot[thick, 
				color=black,
				mark = square,
				mark options={solid},
				mark size = 3pt,
				]
				coordinates {
                                     (0,0.17403)(1,0.15404)(2,0.11805)(3,0.094074)(4,0.07809)(5,0.070125)(6,0.066133)(7,0.066133)(8,0.066133)(9,0.066133)(10,0.066133)

				};

				\addplot[thick, 
				color=red,
				mark = o,
				mark options={solid},
				mark size = 3pt,
				]
				coordinates {
                   (0,0.13404)(1,0.11406)(2,0.07809)(3,0.050176)(4,0.030293)(5,0.022399)(6,0.018485)(7,0.018485)(8,0.018485)(9,0.018485)(10,0.018485)

				};

				\addplot[thick, 
				color=green,
				mark = diamond,
				mark options={solid},
				mark size = 3pt,
				]
				coordinates {
                     (0,0.12406)(1,0.10007)(2,0.06411)(3,0.036244)(4,0.016544)(5,0.0058105)(6,0.0061908)(7,0.0061908)(8,0.0061908)(9,0.0058105)(10,0.0058105)

				};

			\legend{$\norm{\hat{\bp}(k)- \overline{\bp}}$ (SNR = 10 dB),
		$\norm{\hat{\bp}(k)- \overline{\bp}}$ (SNR = 20 dB),$\norm{\hat{\bp}(k)- \overline{\bp}}$ (SNR = 30 dB),
		$\norm{\hat{\bp}(k)- \overline{\bp}}$ (SNR = 40 dB)
		}
			\end{semilogyaxis}
		\end{tikzpicture}
	\caption{Errors of estimates of $\bbp$ at each iteration in Step 1 of  Algorithm 2 when $\bbetamin = 0.5$, $\overline{\kappa} = 1.5$, $\overline{\phi} = 0$, and UE distance is 5 meters.}
	\label{fig:9}
	\end{figure}
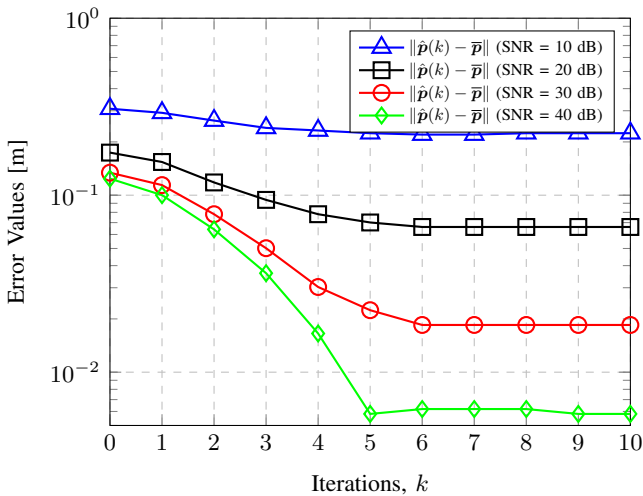
	
	\subsubsection{Convergence Behavior of Algorithm~\ref{alg_loc_est}}
	Finally, for a single realization, Fig.~\ref{fig:9} illustrates the errors of the position estimates at each alternating iteration in Step 1 of Algorithm 2. In the figure, 
	$\hat{\bp}(k)$ denotes the estimate of the position $\overline{\bp}$ obtained from the Jacobi-Anger approach by using 
	the estimate of $\overline{\zetab}$ at the $k$-th iteration.

	We observe that, starting from the initial location estimate given by Algorithm~\ref{alg_jacobi}, Algorithm~\ref{alg_loc_est} provides significant performance gains through the iterations of the online RIS calibration in Step~1. In particular, at an SNR of 40 dB, Algorithm~\ref{alg_loc_est} achieves the error value of $0.0058\,$m, while the error corresponding to Algorithm~\ref{alg_jacobi} is $0.1241\,$m.
	
	\subsubsection{\rev{Numerical Results for Small Number of Observations}} \label{subsec:SmallNumb}
	\rev{In this part, we use the same simulation setup described in Sec.~\ref{sec_sim} by only changing $T$ from $200$ to $10$. In Fig.~\ref{fig:10}, performances of the AMML and AML algorithms versus SNR are presented when $\bbetamin = 0.7$, $\bkappa = 1.5$, $\bphi = 0$. It can be observed that when the number of transmissions is reduced to $10$, the proposed algorithms are still capable of achieving the theoretical performance bounds in all three scenarios.}
	
		\begin{figure}
		\centering
		\begin{tikzpicture}
			\begin{semilogyaxis}[
				width=8.5cm,
                height=7cm,
				legend style={nodes={scale= 0.7, transform shape},at={(1,1)},anchor=north east}, 
				legend cell align={left},
				legend image post style={mark indices={}},
				xlabel={SNR (dB)},
				ylabel={RMSE [m]},
				xmin=-10, xmax=40,
				ymin=0.01, ymax=35,
				xtick={-10, -5, 0, 5, 10, 15, 20, 25, 30, 35, 40},
				ytick={0.01,0.1,1,10},
				legend pos=south west,
				ymajorgrids=true,
				xmajorgrids=true,
				grid style=dashed,
				]
				
				\addplot[thick, dashed,
				color=blue,
				mark = triangle,
				mark options={solid},
				mark size = 4pt,
				]
				coordinates {
                      (-10,9.0684)(-5,5.1029)(0,2.8755)(5,1.6275)(10,0.93371)(15,0.55665)(20,0.36352)(25,0.2756)(30,0.24121)(35,0.229267605490095)(40,0.2254)
				};
				
				\addplot[thick, 
				color=blue,
				mark = square,
				mark options={solid},
				mark size = 3pt,
				]
				coordinates {
                     (-10,33.9747)(-5,34.1358)(0,28.9072)(5,15.8562)(10,1.126)(15,0.59226)(20,0.37629)(25,0.28537)(30,0.25802)(35,0.254159234202347)(40,0.2557)
				};

				\addplot[thick, dashed,
				color=green,
				mark = o,
				mark options={solid},
				mark size = 3pt,
				]
				coordinates {
                   (-10,9.0523)(-5,5.0905)(0,2.8626)(5,1.6098)(10,0.90523)(15,0.50905)(20,0.28626)(25,0.16098)(30,0.090523)(35,0.050904857535015)(40,0.0286)
				};
				
				\addplot[thick, 
				color= green,
				mark = +,
				mark options={solid},
				mark size = 4pt,
				]
				coordinates {
                  (-10,33.7683)(-5,33.7391)(0,28.3357)(5,15.7571)(10,0.99497)(15,0.50849)(20,0.28528)(25,0.16107)(30,0.092801)(35,0.052905529423252)(40,0.0296)
				};

				\addplot[thick, dashed,
				color=red,
				mark = diamond,
				mark options={solid},
				mark size = 4pt,
				]
				coordinates {
                   (-10,8.0776)(-5,4.5424)(0,2.5544)(5,1.4364)(10,0.80776)(15,0.45424)(20,0.25544)(25,0.14364)(30,0.080776)(35,0.045423948768876)(40,0.0255)
				};

				\addplot[thick, 
				color=red,
				mark = x,
				mark options={solid},
				mark size = 4pt,
				]
				coordinates {
                     (-10,33.2465)(-5,33.0533)(0,30.2038)(5,14.5013)(10,0.93341)(15,0.47684)(20,0.26104)(25,0.14615)(30,0.084479)(35, 0.049953577777559)(40,0.0297)
				};

			\legend{LB [m] (Scenario-I), AMML [m] (Scenario-I), CRB [m] (Scenario-II), AML [m] (Scenario-II),  CRB [m] (Scenario-III), AML [m] (Scenario-III) }
				
			\end{semilogyaxis}

		\end{tikzpicture}
		\caption{\rev{Performance of the AMML and the AML algorithms along with the corresponding theoretical bounds versus SNR (dB)  when the UE distance is $5$ meters, $\overline{\kappa} = 1.5$, $\bbetamin = 0.7$, $\overline{\phi} = 0$, and $T = 10$.}}
		\label{fig:10}
	\end{figure}
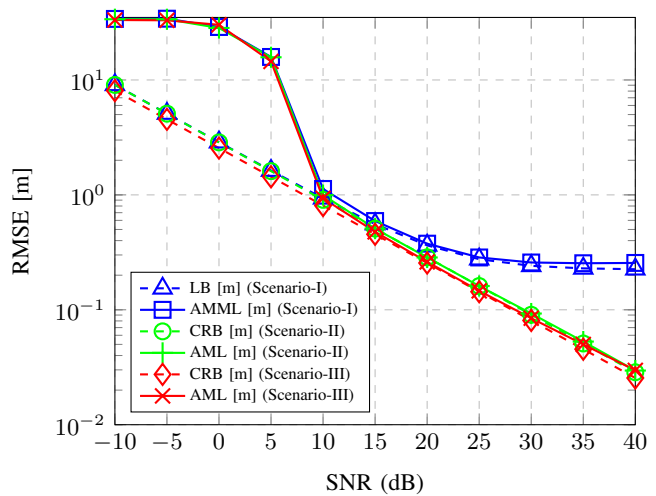

	\section{Concluding Remarks}\label{sec:Conc}
	We have studied the problem of RIS-aided near-field localization under amplitude variations of individual RIS elements as a function of the applied phase shifts, which is a practical model based on equivalent RIS circuit models of reflecting elements \cite{abeywickrama2020intelligent}. First, through the \ac{MCRB} analysis, we have quantified localization performance losses due to model misspecification when the UE is unaware of the RIS amplitude model, and developed an algorithm that achieves the corresponding \ac{LB}. Second, under a known RIS amplitude model, we have derived the corresponding CRB and proposed a low-complexity algorithm for joint UE localization and online calibration of RIS model parameters. Extensive simulations over a broad range of operating conditions demonstrate the following key results:
	\begin{itemize}
	    \item \textit{Significance of the Knowledge of RIS Amplitude Model:} Being unaware of the true RIS amplitude model and assuming conventional unit-amplitude RIS responses can severely degrade the localization accuracy, with the losses being more pronounced at higher SNRs and higher amplitude fluctuations (small $\bbetamin$ and large $\bkappa$ in \eqref{eq_beta_model}).
	    \item \textit{Localization under Model Mismatch:} Under the unknown RIS amplitude model, the proposed Jacobi-Anger expansion based low-complexity method in Algorithm~\ref{alg_jacobi} can provide near-optimal localization performance very close to the corresponding \ac{LB}.
	    \item \textit{Joint Localization and RIS Calibration:} Under the known RIS amplitude model, it is possible to recover the performance losses incurred by model misspecification using the proposed algorithm in Algorithm~\ref{alg_loc_est}, which can calibrate the RIS model online with the aid of an unknown-location UE and subsequently refine the UE location with an accuracy that asymptotically attains the CRB.
	\end{itemize}
	Based on these outcomes, future research will focus on localization-optimal passive beamforming at the RIS under the realistic RIS amplitude model in \eqref{eq_beta_model}, considering wideband signals and unobstructed LoS scenarios. \rev{In addition, data-driven approaches will be considered to learn the mapping from RIS phases to amplitudes in the absence of a specific functional form as in \eqref{eq_beta_model}, which would enable generalizing the proposed approach to any type of RIS.}

	\appendices
	
	\section{Proof of Lemma~\ref{lemma_pseudo}} \label{sec:AppA}
	Based on the definition of the KL divergence and the system model in Section~\ref{sec:System}, \eqref{eq:eta0} can be expressed as
	\begin{align}
		\bet_0 &= \argmin_{\bet\in\mathbb{R}^5} \int p(\by) \log \left( \frac{ p(\by)}{ \tilde{p}(\by|\bet)} \right) \, \text{d}\by  \\
		& =\argmin_{\bet\in\mathbb{R}^5} -\int p(\by) \log  \tilde{p}(\by|\bet)  \, \text{d}\by  \\
		& = \argmin_{\bet\in\mathbb{R}^5} \int p(\by)  \norm{\by-\tbmu(\bet)}^2   \, \text{d}\by  \label{eq:KLlast}
	\end{align}
	where the second equality is due to the independence of $p(\by)$ from $\bet$, and the last equality is obtained from \eqref{eq:assumedpdf}. Then, it can be shown that the following equations hold:
	\begin{align}	
		&\int p(\by)  \norm{\by-\tbmu(\bet)}^2  \, d\by  = \sum_{t=1}^{T}\int p(\by)  \abs{y_t -\tmu_t(\bet)}^2 \, \text{d}\by \nonumber \\
		& = \sum_{t=1}^{T} \underbrace{\left (\prod_{t'\neq t} \int p(y_{t'}) \, \text{d}y_{t'}\right)}_{ = 1} \left(\int p(y_t) \abs{y_t -\tmu_t(\bet)}^2 \, \text{d}y_t\right) \nonumber \\
		& = \sum_{t=1}^{T}\int p(y_t) \abs{y_t -\tmu_t(\bet)}^2 \, \text{d} y_t. \label{eq:chainlast}
	\end{align}
	We now introduce $\epsilon_t(\bet)=\mu_t-\tmu_t(\bet)$, so that  $\abs{y_t -\tmu_t(\bet)}^2=\abs{y_t -\mu_t + \epsilon_t(\bet) }^2$, and the integral expression in 
	\eqref{eq:chainlast} can be manipulated as follows:
	\begin{align}\nonumber
		&\int p(y_t) \abs{y_t -\tmu_t(\bet)}^2 \, \text{d} y_t   
		= \int p(y_t) \abs{y_t-\mu_t}^2\, \text{d} y_t + \\
		&\abs{\epsilon_t(\bet)}^2\int p(y_t)  \, \text{d}y_t  \label{eq:derivLem3}
		 + 2 \int p(y_t)\text{Re}\left((y_t-\mu_t)^{*}\epsilon_t(\bet)\right) \, \text{d} y_t. 
	\end{align}
	Since $y_t\sim\mathcal{CN}(\mu_t,N_0)$ and $\int p(y_t)  \, \text{d}y_t = 1$, the expression in \eqref{eq:derivLem3} can be simplified as
	\begin{equation}
		\int p(y_t) \abs{y_t -\tmu_t(\bet)}^2 \, \text{d} y_t  = N_0 +  \abs{\epsilon_t(\bet)}^2. \label{eq:integraleq}
	\end{equation}
	By combining \eqref{eq:KLlast}, \eqref{eq:chainlast} and \eqref{eq:integraleq}, we finally obtain that
	\begin{align*}
		\bar{\bet} &= \argmin_{\bet\in\mathbb{R}^5} \sum_{t=1}^{T} \left(N_0 + \abs{\epsilon_t(\bet)}^2\right)
		 = \argmin_{\bet\in\mathbb{R}^5} \sum_{t=1}^{T} \abs{\epsilon_t(\bet)}^2,
	\end{align*}
	which completes the proof.

	\section{\rev{Proof of Lemma~\ref{lemma:MCRB}}} \label{sec:AppMCRB}
\rev{
	Let $\bmu^{(k)} = \{\mu_t^{(k)}\}_{t=1}^{T^{(k)}}$ and $\tbmu^{(k)}(\bet) = \{\tilde{\mu}_t^{(k)}(\bet)\}_{t=1}^{T^{(k)}}$ be the set of the noiseless transmitted signals for the $k$-th set of observations under the true model and assumed model, respectively for a given parameter $\bet$.
\eqref{eq:phaseprofile_r} implies that $\mu_t^{(2)} = \mu_{\left( t\text{ mod } T^{(1)}\right)}^{(1)}$ for any $1\leq t\leq T^{(2)}$.
Hence, it can be stated that
\begin{align}
    \norm{\bmu^{(2)}-\tbmu^{(2)}(\bet)}^2 &= \sum_{t=1}^{T^{(2)}} \abs{\mu_t^{(2)}-\tilde{\mu}_t^{(2)}(\bet)}^2 ~, \\
    &= K \norm{\bmu^{(1)}-\tbmu^{(1)}(\bet)}^2 ~,
\end{align}
which implies $\bet_0^{(1)} = \bet_0^{(2)}$. This means that the Bias term in \eqref{eq_lb} remains the same even though the number of transmissions is increased from $T^{(1)}$ to $T^{(2)}$. That is, if $\bias\big(T^{(k)},\bet_0^{(k)}\big)$  denotes the corresponding Bias for the $k$-th set of observations, we have 
$\bias\big(T^{(1)},\bet_0^{(1)}\big) = \bias\big(T^{(2)},\bet_0^{(2)}\big)$.}

\rev{Now, let us compare the MCRB expressions for both cases. To stress the dependence on the number of observations, we will replace $\bA_{\bet_0}$ and $\bB_{\bet_0}$ in \eqref{eq:Aeta0} and \eqref{eq:Beta0} with $\bA_{T,\bet_0}$ and $\bB_{T,\bet_0}$ when the number of transmissions is equal to $T$. By using \eqref{eq:Aeta0_1}, for any $i, j$, it can be seen that
\begin{equation}
    [\bA_{T^{(2)},\bet_0^{(2)}}]_{i,j} = K [\bA_{T^{(1)},\bet_0^{(1)}}]_{i,j} ~. \label{eq:Aeta0_2}
\end{equation}
For $k\in\{1,2\}$ and any $i,j$, we define $\bR_{T^{(k)},\bet_0^{(k)}}$ and $\bS_{T^{(k)},\bet_0^{(k)}}$ as follows:
\begin{align}
  [\bR_{T^{(k)},\bet_0^{(k)}}]_{i,j}  &\triangleq \frac{4}{N_0^2} \Bigg[ \text{Re}\left\{\bep^{(k)}(\bet)^{\mathsf{H}}\frac{\partial\tbmu^{(k)}(\bet)}{\partial\eta_i}\right\} \notag\\
  &\text{Re}\left\{\bep^{(k)}(\bet)^{\mathsf{H}}\frac{\partial\tbmu^{(k)}(\bet)}{\partial\eta_j}\right\} \Bigg]\Bigg|_{\bet = \bet_0^{(k)}},\label{eq:R_T}\\
  [\bS_{T^{(k)},\bet_0^{(k)}}]_{i,j}  &\triangleq \frac{2}{N_0} \Bigg[ \text{Re} \left\{\left(\frac{\partial\tbmu^{(k)}(\bet)}{\partial\eta_i}\right)^{\mathsf{H}}\frac{\partial\tbmu^{(k)}(\bet)}{\partial\eta_j}\right\}\Bigg]\Bigg|_{\bet = \bet_0^{(k)}}.\label{eq:S_T}
\end{align}
It can easily be verified that both $ \bR_{T^{(k)},\bet_0^{(k)}}$ and $\bS_{T^{(k)},\bet_0^{(k)}}$ are positive-definite matrices for any $k\in\{1,2\}$. Based on \eqref{eq:R_T} and \eqref{eq:S_T}, we can write 
\begin{align}
    [\bB_{T^{(2)},\bet_0^{(2)}}] &= [\bR_{T^{(2)},\bet_0^{(2)}}] + [\bS_{T^{(2)},\bet_0^{(2)}}] \notag \\
    & = K^2 [\bR_{T^{(1)},\bet_0^{(1)}}] + K [\bS_{T^{(1)},\bet_0^{(1)}}] \label{eq:Beta0_2}
\end{align}
}
\rev{
By using \eqref{eq:Aeta0_2} and \eqref{eq:Beta0_2}, it is possible to express $\mcrb\big(T^{(2)},\bet_0^{(2)}\big)$ as follows:
\begin{align}
    &\mcrb\big(T^{(2)},\bet_0^{(2)}\big) = \frac{1}{K^2} \big(\bA_{T^{(1)},\bet_0^{(1)}}\big)^{-1} \\
    &\times\big( K^2 [\bR_{T^{(1)},\bet_0^{(1)}}] + K [\bS_{T^{(1)},\bet_0^{(1)}}]\big)  \big(\bA_{T^{(1)},\bet_0^{(1)}}\big)^{-1} \label{eq:MCRBT2}.
\end{align}
Since $ \bR_{T^{(1)},\bet_0^{(1)}}$, $\bS_{T^{(1)},\bet_0^{(1)}}$ and $\bA_{T^{(1)},\bet_0^{(1)}}$ are positive-definite matrices, $\Tr\big\{ \big(  [\bR_{T^{(1)},\bet_0^{(1)}}]\big)  \big(\bA_{T^{(1)},\bet_0^{(1)}}\big)^{-2}\big\}\geq 0$ and $\Tr\big\{ \big(  [\bS_{T^{(1)},\bet_0^{(1)}}]\big)  \big(\bA_{T^{(1)},\bet_0^{(1)}}\big)^{-2}\big\}\geq 0$. This implies that for $K\geq 2$, 
\begin{equation}
    \frac{1}{K} <   \frac{\Tr\big\{\mcrb\big(T^{(2)},\bet_0^{(2)}\big)\big\}}{\Tr\big\{\mcrb\big(T^{(1)},\bet_0^{(1)}\big)\big\}}  < 1 ~. \label{eq:traceineqMCRB}
\end{equation}
}

\rev{Furthermore, since $\bias\big(T^{(1)},\bet_0^{(1)}\big) = \bias\big(T^{(2)},\bet_0^{(2)}\big)$, the latter conclusion for the ratio of the trace of the LBs, is evident from \eqref{eq:traceineqMCRB}.}
\rev{
For the CRB calculation, as stated in Remark~\ref{rem:MCRB}, when there is no mismatch between the assumed model and the true model, we can write
\begin{equation} \label{eq_CRB_A}
    \crb\big(T^{(k)}\big) = \big(\bA_{T^{(k)},\bbet}\big)^{-1}
\end{equation}
for $k\in\{1,2\}$. It is evident from \eqref{eq:Aeta0_2} and \eqref{eq_CRB_A} that
\begin{equation}
    \frac{\Tr\big\{\crb\big(T^{(2)}\big)\big\}}{\Tr\big\{\crb\big(T^{(1)}\big)\big\}} = \frac{\Tr\big\{\big(\bA_{T^{(2)},\bbet}\big)^{-1}\big\}}{\Tr\big\{\big(\bA_{T^{(1)},\bbet}\big)^{-1}\big\}} = \frac{1}{K}
\end{equation}
as we desired to prove.
}
\section{\rev{Computational Complexity of Algorithm~\ref{alg_jacobi}}}
\label{sec:AppCompCostAlg1}
\rev{First assume that $T< \Tthr$. It is clear that complexity of Algorithm~\ref{alg_jacobi} is dominated by the 2-D search for estimating azimuth and elevation angles. In the algorithm we equate $I_{\text{max}}$ to $5$. That is, for Algorithm~\ref{alg_jacobi}, it is sufficient to analyze the computational complexity of the 2-D search given in \eqref{eq_2D_varthetaphi}. Let us define  $\bx(\vartheta,\varphi)\triangleq \tbQ \ba(\vartheta,\varphi) \sqrt{E_s}$. The computational cost of $\bx(\vartheta,\varphi)$ is simply equal to $\mathcal{O}(TM)$. One should note that,  \eqref{eq_2D_varthetaphi} is equivalent to the following problem}

\rev{\begin{equation}
      (\widehat{\vartheta},\widehat{\varphi}) = \argmax_{\vartheta, \varphi} \frac{\bx(\vartheta,\varphi)^{\mathsf{H}} \by}{ \norm{\bx(\vartheta,\varphi)}^2} ~.
\end{equation}
Hence after computing $\bx(\vartheta,\varphi)$, we need to search over $\vartheta$ and $\varphi$. Therefore, the overall cost of Algorithm~\ref{alg_jacobi} for $T\leq \Tthr$ is $\mathcal{O}(TK^2M)$.}

\rev{When $T\geq \Tthr$, the complexity of Algorithm~\ref{alg_jacobi} is dominated by the estimation of $\vartheta$. To estimate $\vartheta$, we need to compute $\bX (\vartheta)\triangleq \tbQ \bG^\trpose(\vartheta)$ first. The computational cost of $\bX (\vartheta)$ is $\mathcal{O}\left(T M (2N+1)\right)$. Then, we need to compute the pseudo-inverse of  $\bX (\vartheta)$, whose computational cost is given by $\mathcal{O}\left(T(2N+1)^2\right)$ since $T\geq 2N+1$. After computing the pseudo-inverse of $\bX(\vartheta)$, we need to search $\vartheta$ over $[0, \pi/2]$ to find the minimum of $\norm{\by-\bX(\vartheta)\bX^{\dagger}(\vartheta) \by }$. That is, the overall cost of Algorithm~\ref{alg_jacobi} when $T\geq \Tthr$ is given by $\mathcal{O}\left(T^2 (2N+1)K\right)  + \mathcal{O}\left(T M (2N+1) K\right) = \mathcal{O} \left( T K (2N+1) \text{max}\{T, M\}\right) = \mathcal{O} \left( T M (2N+1)K\right)$.}

\section{\rev{Computational Complexity of Algorithm~\ref{alg_loc_est}}}
\label{sec:AppCompCostAlg2}	
\rev{Similar to Algorithm~\ref{alg_jacobi}, the computational complexity of Algorithm~\ref{alg_loc_est} is dominated by the 2-D search for $\kappa$ and $\phi$. We need to compute $\norm{\yy -  \alpha\left( \betamin \Gammabt_1(\kappa, \phi) + \Gammabt_2(\kappa, \phi) \right)^\trpose \ba(\pphat)\sqrt{E_s}}$, whose computational cost is $\mathcal{O}(TM)$, for a given $\zetab$ and $\alpha$. Since we search over $\kappa$ and $\phi$'s, the total computational complexity becomes $\mathcal{O}(TML^2)$.}

	\section{Derivation of Entries in the MCRB} \label{sec:AppB}
Let $\bet$ be given by $\bet = [\alpha_r\, \alpha_i\, \rmx\, \rmy\, \rmz]^{\trpose}$. Also, define $\bp\triangleq [\rmx\, \rmy\, \rmz]^{\trpose}$, $b_m \triangleq [\bb(\bp)]_m$, and $\alpha \triangleq \alpha_r + j \alpha_i$. We also introduce $\bu=\frac{\bp-\bp_{\text{RIS}}}{\norm{\bp-\bp_{\text{RIS}}}}$and for any $1\leq m\leq M$, $\bu_m=\frac{\bp-\bp_m}{\norm{\bp-\bp_m}}$, where $\bu = [u_x\, u_y \, u_z]^{\trpose}$ and $\bu_m = [u_{m,x} \, u_{m,y} \, u_{m,z}]^{\trpose}$.Then, the first and second derivatives of $\tmu_t(\bet)$ with respect to $\bet$ are given as follows:
\begin{align*}
	\frac{\partial\tmu_t(\bet) }{\partial \alpha_r} &= \sum_{m=1}^{M} b_m \tilde{w}_{t,m} s_t, \, \frac{\partial\tmu_t(\bet) }{\partial \alpha_i} = j\sum_{m=1}^{M} b_m \tilde{w}_{t,m} s_t.
\end{align*}
For $\nu \in\{\rmx\, \rmy\, \rmz\}$, we can write
\begin{align*}
	\frac{\partial\tmu_t(\bet) }{\partial \nu} &= -j \frac{2\pi}{\lambda}\alpha \sum_{m=1}^{M} b_m  \left(u_{m,\nu}-u_{x}\right) \tilde{w}_{t,m} s_t, 
\end{align*}
\begin{align*}
	\frac{\partial^2\tmu_t(\bet) }{\partial \alpha_r \partial \nu} &=-j \frac{2\pi}{\lambda} \sum_{m=1}^{M} b_m  \left(u_{m,\nu}-u_\nu\right) \tilde{w}_{t,m} s_t,
\end{align*}
\begin{align*}
	\frac{\partial^2\tmu_t(\bet) }{\partial \alpha_i \partial \nu} &= j 	\frac{\partial^2\tmu_t(\bet) }{\partial \alpha_r \partial \nu},
\end{align*}
\begin{align*}
	&\frac{\partial^2 \tmu_t(\bet) }{\partial \nu \partial \nu} = -\alpha \frac{4\pi^2}{\lambda^2}\sum_{m=1}^{M} b_m \left(u_{m,\nu}-u_\nu\right)^2  \tilde{w}_{t,m} s_t \\
	&-j\frac{2\pi}{\lambda} \alpha\sum_{m=1}^{M} b_m \left(\frac{1-u^2_{m,\nu}}{\norm{\bp-\bp_m}} -\frac{1-u^2_{\nu}}{\norm{\bp-\bp_{\text{RIS}}}}\right)\tilde{w}_{t,m} s_t.
\end{align*}
Moreover, if $\nu_1, \nu_2\in\{\rmx\, \rmy\, \rmz\}$ and they correspond to different coordinates, it is possible to express  
\begin{align*}
	&\frac{\partial^2 \tmu_t(\bet) }{\partial \nu_1\partial \nu_2} = -\alpha \frac{4\pi^2}{\lambda^2}\sum_{m=1}^{M} b_m  \left(u_{m,\nu_1}-u_{\nu_1}\right)\left(u_{m,\nu_2}-u_{\nu_2}\right) \tilde{w}_{t,m} s_t  \\
	&+j\frac{2\pi}{\lambda} \alpha\sum_{m=1}^{M} b_m \left(\frac{u_{m,\nu_1} u_{m,\nu_2}}{\norm{\bp-\bp_m}}-\frac{u_{\nu_1} u_{\nu_2}}{\norm{\bp-\bp_\text{RIS}}}\right)\tilde{w}_{t,m} s_t.
\end{align*}

\section{\rev{Far-Field Approximation of the RIS Steering Vector in \eqref{eq_ap_nearfield}}}\label{sec_app_farfield}
\rev{In this part, we will show that the near-field steering vector in \eqref{eq_ap_nearfield} degenerates to its standard far-field counterpart when $\norm{\bp-\bpris} \gg \norm{\bp_m - \bpris}$. Specifically, using the geometrical relations based on the definitions before \eqref{eq:CM1}, we can write
\begin{align} \label{eq_dm}
    d_m^2 = d^2 + q_m^2 - 2 d q_m \sin(\vartheta) \cos(\varphi - \psi_m)~,
\end{align}
where $d_m = \norm{\bp - \bp_m} $, $d = \norm{\bp - \bpris} $, $q_m = \norm{\bp_m - \bpris} $ and  $\psi_m$ is the angle between $\bp_m - \bpris$ and the X-axis, i.e., 
\begin{align} \label{eq_pm_pris}
    \bp_m - \bpris = q_m [ \cos(\psi_m) ~ \sin(\psi_m) ~ 0
    ]^\trpose ~.
\end{align}
Re-arranging \eqref{eq_dm} and assuming $d \gg q_m$ yields
\begin{align} \nonumber
    d_m &= d \sqrt{ 1 + \frac{q_m^2}{d^2} - 2 \frac{q_m}{d} \sin(\vartheta) \cos(\varphi - \psi_m) }~,
    \\ \label{eq_fresnel_app}
    &\approx d \left( 1 - \frac{q_m}{d} \sin(\vartheta) \cos(\varphi - \psi_m) \right) ~,
\end{align}
where the approximation in \eqref{eq_fresnel_app} is obtained via first-order Taylor expansion of $f(x) = \sqrt{1 + x^2 - 2 x \kappa}$ around $x = 0$ for constant $\kappa$ \cite[Eq.~(6)]{Fresnel_2011}, \cite[Eq.~(3)]{nearfield_TSP_2005}, \cite[Eq.~(12)]{nearfield_Friedlander_2019}. Using the approximation in \eqref{eq_fresnel_app}, the phase term in \eqref{eq_ap_nearfield} can be written as
\begin{align} \nonumber
    \norm{\bp-\bp_m} - \norm{\bp-\bpris} &= d_m - d ~,
\\ \label{eq_dist_approx}
&\approx -q_m \sin(\vartheta) \cos(\varphi - \psi_m) ~,
\end{align}
which leads to the standard far-field steering vector \cite[Eq.~(9)]{nearfield_Friedlander_2019}:
\begin{align} \nonumber
			[\ba(\bp)]_{m} &= \exp\left(-j \frac{2\pi}{\lambda}\left(\norm{\bp-\bp_m}-\norm{\bp-\bp_{\text{RIS}}}\right)\right) ~,
			\\ \label{eq_ap_farfield}
			&\approx  \exp \left(-j (\pp_m - \ppris)^\trpose \boldsymbol{k}(\vartheta,\varphi)\right) ~,
\end{align}
where $\bk(\vartheta,\varphi)$ is defined in \eqref{eq_k_wavevector}.}

	\section{\rev{Derivatives for FIM Computation in \eqref{eq:FIMCaseII}}} \label{sec:AppC}
	\rev{The derivatives of $\mu_t$ with respect to the RIS model parameters $\betamin$, $\kappa$ and $\phi$, used for FIM computation in \eqref{eq:FIMCaseII}, are given by}
		\begin{align*}
		\frac{\partial \mu_t (\bet) }{\partial \beta_{\text{min}}} = \alpha \sqrt{E_s} \sum_{m=1}^{M} &[\bb(\bp)]_{m} e^{j\theta_{t,m}}\\
		&\times\left(1-\left(\frac{\sin(\theta_{t,m}-\phi) + 1}{2}\right)^{\kappa}\right),
	\end{align*}
	\begin{align*}
	\frac{\partial \mu_t (\bet) }{\partial \kappa} &= \alpha \sqrt{E_s} \sum_{m=1}^{M} [\bb(\bp)]_{m} e^{j\theta_{t,m}}  (1-\betamin)\\
	&\times \left(\frac{\sin(\theta_{t,m}-\phi) + 1}{2}\right)^\kappa\log\left(\frac{\sin(\theta_{t,m}-\phi) + 1}{2}\right),
\end{align*}
	\begin{align*}
		\frac{\partial \mu_t (\bet) }{\partial \phi} = &-\alpha \sqrt{E_s} \sum_{m=1}^{M} [\bb(\bp)]_{m} e^{j\theta_{t,m}} (1-\beta_{\text{min}}) \\
		&\times
		 \kappa \left(\frac{\sin(\theta_{t,m}-\phi) + 1}{2}\right)^{\kappa-1} \left(\frac{\cos(\theta_{t,m}-\phi)}{2}\right).
	\end{align*}

	\bibliographystyle{IEEEtran}
	\bibliography{bibfile}

\begin{thebibliography}{10}
\providecommand{\url}[1]{#1}
\csname url@samestyle\endcsname
\providecommand{\newblock}{\relax}
\providecommand{\bibinfo}[2]{#2}
\providecommand{\BIBentrySTDinterwordspacing}{\spaceskip=0pt\relax}
\providecommand{\BIBentryALTinterwordstretchfactor}{4}
\providecommand{\BIBentryALTinterwordspacing}{\spaceskip=\fontdimen2\font plus
\BIBentryALTinterwordstretchfactor\fontdimen3\font minus
  \fontdimen4\font\relax}
\providecommand{\BIBforeignlanguage}[2]{{%
\expandafter\ifx\csname l@#1\endcsname\relax
\typeout{** WARNING: IEEEtran.bst: No hyphenation pattern has been}%
\typeout{** loaded for the language `#1'. Using the pattern for}%
\typeout{** the default language instead.}%
\else
\language=\csname l@#1\endcsname
\fi
#2}}
\providecommand{\BIBdecl}{\relax}
\BIBdecl

\bibitem{saad2019vision}
W.~Saad \emph{et~al.}, ``{A vision of {6G} wireless systems: Applications,
  trends, technologies, and open research problems},'' \emph{IEEE Netw.},
  vol.~34, no.~3, pp. 134--142, 2019.

\bibitem{rappaport2019wireless}
T.~S. Rappaport \emph{et~al.}, ``Wireless communications and applications above
  100 {GHz}: Opportunities and challenges for {6G} and beyond,'' \emph{IEEE
  Access}, vol.~7, pp. 78\,729--78\,757, 2019.

\bibitem{tataria20216g}
H.~Tataria \emph{et~al.}, ``{6G} wireless systems: Vision, requirements,
  challenges, insights, and opportunities,'' \emph{Proc. IEEE}, vol. 109,
  no.~7, pp. 1166--1199, 2021.

\bibitem{chiriyath2017radar}
A.~R. Chiriyath \emph{et~al.}, ``{Radar-communications convergence:
  Coexistence, cooperation, and co-design},'' \emph{IEEE Trans. Cogn. Commun.
  Netw.}, vol.~3, no.~1, pp. 1--12, 2017.

\bibitem{liu2021integrated}
\BIBentryALTinterwordspacing
F.~Liu \emph{et~al.}, ``Integrated sensing and communications: Towards
  dual-functional wireless networks for {6G} and beyond,'' 2021. [Online].
  Available: \url{https://arxiv.org/abs/2108.07165}
\BIBentrySTDinterwordspacing

\bibitem{de2021convergent}
C.~{De Lima} \emph{et~al.}, ``Convergent communication, sensing and
  localization in {6G} systems: An overview of technologies, opportunities and
  challenges,'' \emph{IEEE Access}, vol.~9, pp. 26\,902--26\,925, 2021.

\bibitem{wymeersch2021integration}
\BIBentryALTinterwordspacing
H.~Wymeersch \emph{et~al.}, ``Integration of communication and sensing in {6G}:
  a joint industrial and academic perspective,'' 2021. [Online]. Available:
  \url{https://arxiv.org/abs/2106.13023}
\BIBentrySTDinterwordspacing

\bibitem{JCS_2021}
D.~K. Pin~Tan \emph{et~al.}, ``Integrated sensing and communication in {6G}:
  Motivations, use cases, requirements, challenges and future directions,'' in
  \emph{Proc. 1st IEEE Int. Online Symp. Joint Commun. Sens. (JC\&S)}, 2021,
  pp. 1--6.

\bibitem{RIS_tutorial_2021}
Q.~Wu \emph{et~al.}, ``Intelligent reflecting surface-aided wireless
  communications: A tutorial,'' \emph{IEEE Trans. Commun.}, vol.~69, no.~5, pp.
  3313--3351, 2021.

\bibitem{elzanaty20216g}
\BIBentryALTinterwordspacing
A.~Elzanaty \emph{et~al.}, ``Towards {6G} holographic localization: Enabling
  technologies and perspectives,'' 2021. [Online]. Available:
  \url{https://arxiv.org/abs/2103.12415}
\BIBentrySTDinterwordspacing

\bibitem{wymeersch2020radio}
H.~Wymeersch \emph{et~al.}, ``Radio localization and mapping with
  reconfigurable intelligent surfaces: {C}hallenges, opportunities, and
  research directions,'' \emph{IEEE Veh. Technol. Mag.}, vol.~15, no.~4, pp.
  52--61, 2020.

\bibitem{RIS_Access_2019}
E.~Basar \emph{et~al.}, ``Wireless communications through reconfigurable
  intelligent surfaces,'' \emph{IEEE Access}, vol.~7, pp. 116\,753--116\,773,
  2019.

\bibitem{RIS_commag_2021}
C.~Pan \emph{et~al.}, ``Reconfigurable intelligent surfaces for {6G} systems:
  Principles, applications, and research directions,'' \emph{IEEE Commun.
  Mag.}, vol.~59, no.~6, pp. 14--20, 2021.

\bibitem{RIS_WCM_2021}
X.~Yuan \emph{et~al.}, ``Reconfigurable-intelligent-surface empowered wireless
  communications: Challenges and opportunities,'' \emph{IEEE Wireless Commun.},
  vol.~28, no.~2, pp. 136--143, 2021.

\bibitem{RIS_EE_TWC_2019}
C.~Huang \emph{et~al.}, ``Reconfigurable intelligent surfaces for energy
  efficiency in wireless communication,'' \emph{IEEE Trans. Wireless Commun.},
  vol.~18, no.~8, pp. 4157--4170, 2019.

\bibitem{LOS_NLOS_NearField_2021}
D.~Dardari \emph{et~al.}, ``{LOS/NLOS} near-field localization with a large
  reconfigurable intelligent surface,'' \emph{IEEE Transactions on Wireless
  Communications}, vol.~21, no.~6, pp. 4282--4294, 2022.

\bibitem{distRIS_EE_2022}
Z.~Yang \emph{et~al.}, ``Energy-efficient wireless communications with
  distributed reconfigurable intelligent surfaces,'' \emph{IEEE Trans. Wireless
  Commun.}, vol.~21, no.~1, pp. 665--679, 2022.

\bibitem{hybridBF_RIS_JSAC_2020}
B.~Di \emph{et~al.}, ``Hybrid beamforming for reconfigurable intelligent
  surface based multi-user communications: Achievable rates with limited
  discrete phase shifts,'' \emph{IEEE J. Sel. Areas Commun.}, vol.~38, no.~8,
  pp. 1809--1822, 2020.

\bibitem{DRL_RIS_JSAC_2020}
C.~Huang \emph{et~al.}, ``Reconfigurable intelligent surface assisted multiuser
  {MISO} systems exploiting deep reinforcement learning,'' \emph{IEEE J. Sel.
  Areas Commun.}, vol.~38, no.~8, pp. 1839--1850, 2020.

\bibitem{RIS_sumrate_2020}
H.~Guo \emph{et~al.}, ``Weighted sum-rate maximization for reconfigurable
  intelligent surface aided wireless networks,'' \emph{IEEE Trans. Wireless
  Commun.}, vol.~19, no.~5, pp. 3064--3076, 2020.

\bibitem{jointDecomp_TCOM_2021}
Z.~Zhou \emph{et~al.}, ``Joint transmit precoding and reconfigurable
  intelligent surface phase adjustment: {A} decomposition-aided channel
  estimation approach,'' \emph{IEEE Trans. Commun.}, vol.~69, no.~2, pp.
  1228--1243, 2021.

\bibitem{JointActivePassiveRIS_TWC_2019}
Q.~Wu \emph{et~al.}, ``Intelligent reflecting surface enhanced wireless network
  via joint active and passive beamforming,'' \emph{IEEE Trans. Wireless
  Commun.}, vol.~18, no.~11, pp. 5394--5409, 2019.

\bibitem{outage_RIS_TSP_2021}
M.-M. Zhao \emph{et~al.}, ``Outage-constrained robust beamforming for
  intelligent reflecting surface aided wireless communication,'' \emph{IEEE
  Trans. Signal Process.}, vol.~69, pp. 1301--1316, 2021.

\bibitem{RIS_2018_TSP}
S.~Hu \emph{et~al.}, ``Beyond massive {MIMO}: The potential of positioning with
  large intelligent surfaces,'' \emph{IEEE Trans. Signal Process.}, vol.~66,
  no.~7, pp. 1761--1774, 2018.

\bibitem{Keykhosravi2020_SisoRIS}
K.~Keykhosravi \emph{et~al.}, ``{SISO} {RIS}-enabled joint {3D} downlink
  localization and synchronization,'' in \emph{IEEE Int. Conf. Commun.}, 2021,
  pp. 1--6.

\bibitem{RIS_bounds_TSP_2021}
A.~Elzanaty \emph{et~al.}, ``Reconfigurable intelligent surfaces for
  localization: Position and orientation error bounds,'' \emph{IEEE Trans.
  Signal Process.}, vol.~69, pp. 5386--5402, 2021.

\bibitem{Shaban2021}
Z.~Abu-Shaban \emph{et~al.}, ``Near-field localization with a reconfigurable
  intelligent surface acting as lens,'' in \emph{IEEE Int. Conf. Commun.},
  2021, pp. 1--6.

\bibitem{nearFieldRIS_LOSBlock_2022}
O.~Rinchi \emph{et~al.}, ``Compressive near-field localization for multipath
  {RIS}-aided environments,'' \emph{IEEE Communications Letters}, pp. 1--1,
  2022.

\bibitem{EM_wavefront}
F.~Guidi \emph{et~al.}, ``Radio positioning with {EM} processing of the
  spherical wavefront,'' \emph{IEEE Trans. Wireless Commun.}, vol.~20, no.~6,
  pp. 3571--3586, 2021.

\bibitem{rahal2021risenabled}
M.~Rahal \emph{et~al.}, ``{RIS}-enabled localization continuity under
  near-field conditions,'' in \emph{Proc. IEEE 22nd Int. Workshop Signal
  Process. Adv. Wireless Commun. (SPAWC)}, 2021, pp. 436--440.

\bibitem{nearfieldTrack_TSP_2021}
A.~Guerra \emph{et~al.}, ``Near-field tracking with large antenna arrays:
  Fundamental limits and practical algorithms,'' \emph{IEEE Transactions on
  Signal Processing}, vol.~69, pp. 5723--5738, 2021.

\bibitem{rahalNearFieldProfileDesign_2022}
\BIBentryALTinterwordspacing
M.~Rahal \emph{et~al.}, ``Constrained {RIS} phase profile optimization and time
  sharing for near-field localization,'' 2022. [Online]. Available:
  \url{https://arxiv.org/abs/2203.07269}
\BIBentrySTDinterwordspacing

\bibitem{RIS_Location_Win_2022}
Z.~Wang \emph{et~al.}, ``Location awareness in beyond {5G} networks via
  reconfigurable intelligent surfaces,'' \emph{IEEE J. Sel. Areas Commun.}, pp.
  1--1, 2022.

\bibitem{keykhosravi2021multiris}
\BIBentryALTinterwordspacing
K.~Keykhosravi \emph{et~al.}, ``Multi-{RIS} discrete-phase encoding for
  interpath-interference-free channel estimation,'' 2021. [Online]. Available:
  \url{https://arxiv.org/abs/2106.07065}
\BIBentrySTDinterwordspacing

\bibitem{abeywickrama2020intelligent}
S.~Abeywickrama \emph{et~al.}, ``Intelligent reflecting surface: Practical
  phase shift model and beamforming optimization,'' \emph{IEEE Trans. Commun.},
  vol.~68, no.~9, pp. 5849--5863, 2020.

\bibitem{RIS_phase_quantization_2021}
P.~Xu \emph{et~al.}, ``Reconfigurable intelligent surfaces-assisted
  communications with discrete phase shifts: How many quantization levels are
  required to achieve full diversity?'' \emph{IEEE Wireless Commun. Lett.},
  vol.~10, no.~2, pp. 358--362, 2021.

\bibitem{RIS_coupling_2021}
G.~Gradoni \emph{et~al.}, ``End-to-end mutual coupling aware communication
  model for reconfigurable intelligent surfaces: {An} electromagnetic-compliant
  approach based on mutual impedances,'' \emph{IEEE Wireless Commun. Lett.},
  vol.~10, no.~5, pp. 938--942, 2021.

\bibitem{RIS_loc_2021_TWC}
W.~Wang \emph{et~al.}, ``Joint beam training and positioning for intelligent
  reflecting surfaces assisted millimeter wave communications,'' \emph{IEEE
  Trans. Wireless Commun.}, vol.~20, no.~10, pp. 6282--6297, 2021.

\bibitem{RIS_ANM_2021}
J.~He \emph{et~al.}, ``Channel estimation for {RIS}-aided {mmWave MIMO} systems
  via atomic norm minimization,'' \emph{IEEE Transactions on Wireless
  Communications}, vol.~20, no.~9, pp. 5786--5797, 2021.

\bibitem{PDAV_TVT_2021}
Y.~Zhang \emph{et~al.}, ``Performance analysis of {RIS}-aided systems with
  practical phase shift and amplitude response,'' \emph{IEEE Trans. Veh.
  Technol.}, vol.~70, no.~5, pp. 4501--4511, 2021.

\bibitem{Fortunati2017}
S.~Fortunati \emph{et~al.}, ``Performance bounds for parameter estimation under
  misspecified models: {F}undamental findings and applications,'' \emph{IEEE
  Signal Process. Mag.}, vol.~34, no.~6, pp. 142--157, 2017.

\bibitem{Ricmond2015MCRB}
C.~D. Richmond \emph{et~al.}, ``Parameter bounds on estimation accuracy under
  model misspecification,'' \emph{IEEE Trans. Signal Process.}, vol.~63, no.~9,
  pp. 2263--2278, 2015.

\bibitem{cuneyd_ICC_RIS_2022}
\BIBentryALTinterwordspacing
C.~Ozturk \emph{et~al.}, ``On the impact of hardware impairments on {RIS}-aided
  localization,'' in \emph{IEEE Int. Conf. Commun.}, 2022. [Online]. Available:
  \url{www.ee.bilkent.edu.tr/\%7Egezici/ICC.pdf}
\BIBentrySTDinterwordspacing

\bibitem{Fortunati2018Chapter4}
S.~Fortunati \emph{et~al.}, ``{Chapter 4: Parameter bounds under misspecified
  models for adaptive radar detection},'' in \emph{Academic Press Library in
  Signal Processing, Volume 7}, R.~Chellappa \emph{et~al.}, Eds.\hskip 1em plus
  0.5em minus 0.4em\relax Academic Press, 2018, pp. 197--252.

\bibitem{zhang2020towards}
H.~Zhang \emph{et~al.}, ``Towards ubiquitous positioning by leveraging
  reconfigurable intelligent surface,'' \emph{IEEE Commun. Lett.}, vol.~25,
  no.~1, pp. 284--288, 2021.

\bibitem{milligan2005modern}
T.~A. Milligan, \emph{Modern antenna design}.\hskip 1em plus 0.5em minus
  0.4em\relax John Wiley \& Sons, 2005.

\bibitem{nearfield_Friedlander_2019}
B.~Friedlander, ``Localization of signals in the near-field of an antenna
  array,'' \emph{IEEE Transactions on Signal Processing}, vol.~67, no.~15, pp.
  3885--3893, 2019.

\bibitem{Fresnel_2011}
J.-W. Tao \emph{et~al.}, ``Joint {DOA}, range, and polarization estimation in
  the {Fresnel} region,'' \emph{IEEE Transactions on Aerospace and Electronic
  Systems}, vol.~47, no.~4, pp. 2657--2672, 2011.

\bibitem{Fresnel_2016}
V.~R. Gowda \emph{et~al.}, ``Wireless power transfer in the radiative near
  field,'' \emph{IEEE Antennas and Wireless Propagation Letters}, vol.~15, pp.
  1865--1868, 2016.

\bibitem{zhu2013active}
B.~O. Zhu \emph{et~al.}, ``Active impedance metasurface with full 360
  reflection phase tuning,'' \emph{Scientific reports}, vol.~3, no.~1, pp.
  1--6, 2013.

\bibitem{RIS_Circuit_2020}
L.~Dai \emph{et~al.}, ``Reconfigurable intelligent surface-based wireless
  communications: Antenna design, prototyping, and experimental results,''
  \emph{IEEE Access}, vol.~8, pp. 45\,913--45\,923, 2020.

\bibitem{RIS_Circuit_Gradient_2012}
S.~Sun \emph{et~al.}, ``Gradient-index meta-surfaces as a bridge linking
  propagating waves and surface waves,'' \emph{Nature materials}, vol.~11,
  no.~5, pp. 426--431, 2012.

\bibitem{RIS_Circuit_Mag_2012}
C.~L. Holloway \emph{et~al.}, ``An overview of the theory and applications of
  metasurfaces: The two-dimensional equivalents of metamaterials,'' \emph{IEEE
  Antennas and Propagation Magazine}, vol.~54, no.~2, pp. 10--35, 2012.

\bibitem{uplink_PN_WCL_2021}
K.~Zhi \emph{et~al.}, ``Uplink achievable rate of intelligent reflecting
  surface-aided millimeter-wave communications with low-resolution {ADC} and
  phase noise,'' \emph{IEEE Wireless Communications Letters}, vol.~10, no.~3,
  pp. 654--658, 2021.

\bibitem{MCRB_TSP_2021}
T.~T. Le \emph{et~al.}, ``Misspecified {Cramer-Rao} bounds for blind channel
  estimation under channel order misspecification,'' \emph{IEEE Trans. Signal.
  Process.}, pp. 1--1, 2021.

\bibitem{MCRB_delay_ICASSP_2020}
F.~Roemer, ``Misspecified {Cramer-Rao} bound for delay estimation with a
  mismatched waveform: {A} case study,'' in \emph{IEEE Int. Conf. Acoustics,
  Speech Signal Process.}, 2020, pp. 5994--5998.

\bibitem{JacobiAnger_MIMO_2019}
Y.~Wang \emph{et~al.}, ``Super-resolution channel estimation for arbitrary
  arrays in hybrid millimeter-wave massive {MIMO} systems,'' \emph{IEEE Journal
  of Selected Topics in Signal Processing}, vol.~13, no.~5, pp. 947--960, 2019.

\bibitem{Cuyt_2008_HandbookContFrac}
A.~A. Cuyt \emph{et~al.}, \emph{Handbook of Continued Fractions for Special
  Functions}, 1st~ed.\hskip 1em plus 0.5em minus 0.4em\relax Springer
  Publishing Company, Incorporated, 2008.

\bibitem{ML_array_EXIP_98}
A.~Swindlehurst \emph{et~al.}, ``Maximum likelihood methods in radar array
  signal processing,'' \emph{Proceedings of the IEEE}, vol.~86, no.~2, pp.
  421--441, 1998.

\bibitem{seqOpt_TSP_2018}
A.~Aubry \emph{et~al.}, ``A new sequential optimization procedure and its
  applications to resource allocation for wireless systems,'' \emph{IEEE
  Transactions on Signal Processing}, vol.~66, no.~24, pp. 6518--6533, 2018.

\bibitem{nearfield_TSP_2005}
E.~Grosicki \emph{et~al.}, ``A weighted linear prediction method for near-field
  source localization,'' \emph{IEEE Transactions on Signal Processing},
  vol.~53, no.~10, pp. 3651--3660, 2005.

\end{thebibliography}
	
\end{document}